\newtheorem{theorem}{Theorem}[section]
\newtheorem{lemma}[theorem]{Lemma}
\newtheorem{corollary}[theorem]{Corollary}
\theoremstyle{definition}
\theoremstyle{remark} \newtheorem{remark}[theorem]{Remark}
\numberwithin{equation}{section}
\newcommand{\field}[1]{\ensuremath{\mathbb{#1}}}
\newcommand{\RR}{\field{R}}
\newcommand{\delb}{\bar\partial}
\newcommand{\up}{{\mathbb{U}}}
\newcommand{\C}{{\mathbb{C}}}
\newcommand{\bk}{\backslash}
\newcommand{\Ga}{\Gamma}
\newcommand{\pa}{\partial}
\newcommand{\ov}{\overline}
\newcommand{\vep}{\varepsilon}
\newcommand{\di}{\displaystyle}
\begin{document}

\title[Liouville action and Holography]{Liouville action and Holography on quasi-Fuchsian deformation spaces}
\author{Jinsung Park}
\address{School of Mathematics, Korea Institute for Advanced Study, 207-43, Hoegiro 85, Dong-daemun-gu, Seoul, 130-722, Korea}
\email{jinsung@kias.re.kr}
\author{Lee-Peng Teo}
\address{Department of Mathematics, Xiamen University Malaysia, Jalan Sunsuria, Bandar Sunsuria, 43900, Selangor, Malaysia}
\email{lpteo@xmu.edu.my}
\thanks{ Key Words: quasi-Fuchsian group, Teichm\"uller space, Weil-Petersson metric, Takhtajan-Zograf metric, Liouville action, renormalized volume, holography.}
\thanks{ 2010 Mathematics Subject Classification. Primary 14H60, 32G15 ; Secondary 53C80.}
\maketitle

\begin{abstract}
We study the Liouville action for quasi-Fuchsian groups with parabolic and elliptic elements. In particular,
when the group is Fuchsian, the contribution of elliptic elements to the classical Liouville action is derived in terms of the Bloch-Wigner functions. We prove the first and second variation formulas for the classical Liouville action  on the quasi-Fuchsian deformation space.
We prove an equality expressing the holography principle, which relates the Liouville action and the renormalized volume for quasi-Fuchsian groups with parabolic and elliptic elements.
We also construct the potential functions of the K\"ahler forms corresponding to the Takhtajan-Zograf metrics associated to the elliptic elements in the quasi-Fuchsian groups.
\end{abstract}


\section{Introduction}

In this paper, we study the Liouville action for quasi-Fuchsian groups with parabolic and elliptic elements. We also prove an equality expressing the \emph{holography principle}, that is, a relationship between the Liouville action and the renormalized volume for quasi-Fuchsian groups. This work can be considered as a continuation of the previous papers \cite{2}, \cite{1} where we restricted types of the quasi-Fuchsian groups. In \cite{1}, we considered the case of quasi-Fuchsian
groups with only parabolic elements. A main novelty of this paper is the derivation of the contributions of the elliptic elements to the Liouville action and the holography principle.
Interestingly, such new contributions are given in terms of the Bloch-Wigner functions, which are variants of the dilogarithm functions.

Now we introduce some notations to explain main results of this paper. Let $\Gamma$ be a quasi-Fuchsian group in $\text{PSL}(2,\mathbb{C})$ with
its region of discontinuity $\Omega=\Omega_1\sqcup \Omega_2$.
Let $X\simeq\Gamma\backslash\Omega_1$ and $Y\simeq \Gamma\backslash\Omega_2$ be the
corresponding Riemann surfaces with opposite orientations. We allow $\Gamma$ to have some parabolic elements and elliptic elements, so that the Riemann surfaces
$X$ and $Y$ have possibly punctures and ramification points.
We also assume some topological conditions so that $X$ and $Y$ can be equipped with hyperbolic metrics.
The first part of this paper deals with the Liouville action for these Riemann surfaces.
Since these Riemann surfaces have punctures and ramification points, we need to check behaviours of the metrics  near these points
in order to define the Liouville action. By some estimates, we show that the Liouville action can be defined for the hyperbolic metric for Riemann surfaces with punctures and ramification
points. The Liouville action defined for the hyperbolic metric is called \emph{classical}.  The new results of the first part are descriptions of the classical Liouville
action over the deformation space $\mathfrak{D}(\Gamma)$ of the quasi-Fuchsian groups.
For the precise definitions
of these, see the subsection \ref{ss:coho} and the paragraph near equality \eqref{eq:deform-sp} respectively.
In particular, we derive the contribution
of the elliptic elements to the classical Liouville action when the group $\Gamma$ is Fuchsian and obtain
the first and second variational formulas
of the classical Liouville action over $\mathfrak{D}(\Gamma)$. The following theorem is given at Theorems \ref{classical_Liouville},
\ref{firstvariation}, and \ref{thm:second-var} with more explanations.

\begin{theorem}\label{thm-int:var}
When $\Gamma$ is Fuchsian, the classical Liouville action $S$ is given by
\begin{align*}
S=8\pi \chi(X)+4\sum_{j=1}^r  D\left( e^{\frac{2\pi i}{m_j}} \right).
\end{align*} Here $\chi(X)$ is given by \eqref{e:def-chi}, $D(z)$ denotes the Bloch-Wigner function given in \eqref{e:def-BW}, and  $m_j$ denotes the ramification index for $j=1,\ldots, r$.
For the classical Liouville action $S$ on the deformation space $\mathfrak{D}(\Gamma)$,
\begin{align*}
\pa {S}=\vartheta,\qquad \bar{\pa}\pa S=-2i\omega_{WP}.
\end{align*}
Here $\vartheta$ denotes the $(1,0)$-form over $\mathfrak{D}(\Gamma)$ corresponding to
the holomorphic quadratic differential $2\varphi_{zz}-\varphi_z^2$ for the hyperbolic metric $e^{\varphi(z)}|dz|^2$
and $\omega_{WP}$ denotes the Weil-Petersson symplectic form over $\mathfrak{D}(\Gamma)$.
\end{theorem}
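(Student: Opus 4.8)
\emph{Proof proposal.} The statement packages three results --- an explicit evaluation at the Fuchsian point and two variational identities over $\mathfrak{D}(\Gamma)$ --- and I would establish them in that order, following the regularize-then-vary method of Takhtajan--Zograf as adapted to the quasi-Fuchsian setting in \cite{2,1}.

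\emph{The explicit formula for $\Gamma$ Fuchsian.} I would start from the definition of the classical Liouville action as the limit as $\epsilon\to 0$ of the truncated Takhtajan--Zograf functional: the bulk term $\frac{i}{2}\int(\varphi_z\varphi_{\bar z}+e^\varphi)\,dz\wedge d\bar z$ plus the edge and vertex correction cochains, evaluated on a fundamental polygon $F$ for $\Gamma$ acting on $\Omega_1=\mathbb{H}$ with horocyclic neighborhoods of the cusps and small hyperbolic disks around the elliptic fixed points removed. Invoking the Liouville equation $\varphi_{z\bar z}=\frac12 e^\varphi$ and Stokes' theorem turns the bulk integral into a topological contribution that, once combined with the parabolic pieces already treated in \cite{1}, assembles into $8\pi\chi(X)$, plus the boundary integrals over the newly introduced circles around the elliptic fixed points. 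Near an elliptic fixed point of order $m=m_j$, after conjugating the stabilizer to the standard rotation $z\mapsto e^{2\pi i/m}z$ and using the model cone metric of cone angle $2\pi/m$, the $m$ edges of $F$ meeting there are cyclically permuted; pairing the boundary integrals along these edges with the matching $\theta$-corrections produces a sum over $k=0,\dots,m-1$ of integrals of explicit logarithmic $1$-forms, which I expect to evaluate precisely to $4D(e^{2\pi i/m_j})$. The main obstacle here is bookkeeping the correction terms so that the divergent parts cancel in the $\epsilon\to 0$ limit and the finite residue is exactly the Bloch--Wigner value; the $\arg(1-z)\log|z|$ summand in the definition of $D$ should be exactly what renders this limit single-valued, which is a useful internal consistency check.

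\emph{The first variation $\pa S=\vartheta$.} I would differentiate the regularized action along a harmonic Beltrami differential $\mu$ representing a holomorphic tangent vector of $\mathfrak{D}(\Gamma)$, letting $F$, $\varphi$ and the correction cochains all vary. Using that $\delta\varphi$ solves the linearized Liouville equation and integrating by parts, the bulk term yields $-\int_F \mu\,(2\varphi_{zz}-\varphi_z^2)\,d^2z$ up to a universal constant, which is by definition the pairing $\vartheta(\mu)$ of the holomorphic quadratic differential $2\varphi_{zz}-\varphi_z^2$ (holomorphic precisely because $\varphi$ satisfies the Liouville equation) with $\mu$; the variations of the edge and vertex corrections are designed to cancel the anomalies coming from the non-$\Gamma$-invariance of $\varphi_z\varphi_{\bar z}+e^\varphi$ along $\pa F$. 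One then has to verify that the boundary integrals near the cusps and the elliptic fixed points vanish as $\epsilon\to 0$; near a ramification point this uses the sharp decay estimates for $\varphi$ and $\delta\varphi$ near a cone point established in the part of the paper preceding Theorem \ref{classical_Liouville}. Since $S$ is real-valued, $\bar\pa S=\overline{\pa S}$, which simultaneously determines $\bar\pa S$.

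\emph{The second variation $\bar\pa\pa S=-2i\omega_{WP}$.} Given $\pa S=\vartheta$, it remains to compute $\bar\pa\vartheta$. I would vary $2\varphi_{zz}-\varphi_z^2$ in an antiholomorphic direction $\bar\nu$, use the linearized Liouville equation together with Ahlfors' variational formulas --- in particular the vanishing of the first variation of the hyperbolic metric along a harmonic Beltrami differential of the opposite type --- and reduce the pairing $\bar\pa\vartheta(\mu,\bar\nu)$ to a constant multiple of $\int_F e^\varphi\,\mu\bar\nu\,d^2z$, i.e. the Weil--Petersson pairing; fixing conventions produces the coefficient $-2i$. The delicate point is once more the vanishing of the boundary contributions at the cusps and at the elliptic fixed points in this second-order computation, which requires controlling $\varphi$, $\delta\varphi$ and $\delta^2\varphi$ simultaneously near the cone points; the upshot is that no Takhtajan--Zograf term survives in $\bar\pa\pa S$ --- the elliptic Takhtajan--Zograf metrics instead appear as $\bar\pa\pa$ of the separate local potentials constructed elsewhere in the paper. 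I expect this boundary-vanishing estimate to be the principal technical hurdle of the whole theorem.
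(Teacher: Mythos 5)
Your sketch of the two variational formulas is essentially the paper's argument (Lie derivatives along harmonic Beltrami differentials, Ahlfors' and Wolpert's formulas, reduction of $L_{\bar\nu}L_\mu S$ to the Weil--Petersson pairing), but the first and central part of the theorem --- the evaluation $S=8\pi\chi(X)+4\sum_j D(e^{2\pi i/m_j})$ --- has a genuine gap, both in the mechanism you propose and in the fact that the key computation is only ``expected'' rather than carried out. Your picture is that the Bloch--Wigner terms arise as finite residues of divergent boundary integrals around small disks excised at the cone points, after conjugating to a model cone metric. But for the action as defined in the paper (on the double complex over $\Omega=\mathbb{U}\sqcup\mathbb{L}$), the hyperbolic metric $\varphi$ is \emph{smooth} at the elliptic fixed points of $\Gamma$ --- the cone singularity lives only on the quotient $X$, not on the cover --- so the bulk integral converges there without any excision and there is nothing local to regularize at those points. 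The elliptic contribution instead has a homological origin: when $\Gamma$ contains elliptic elements, $\Sigma_1-\Sigma_2$ is not a cycle, and \eqref{eq:0914qF} produces the defect $-\sum_j(z_j-\bar z_j)\otimes[\tau_j]$ in $\partial'L_1-\partial'L_2-\partial''V_1+\partial''V_2$. Since $d\check\theta=0$ in the Fuchsian case, one writes $\check\theta=d\varkappa$ and the entire elliptic correction is the line integral $\int_{\bar z_j}^{z_j}\check\theta_{\tau_j}$ of an explicit logarithmic $1$-form along the segment joining the two fixed points of $\tau_j$ across $\mathbb{R}$ --- a global object, not a local boundary term at a cone point. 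Your proposal never engages with this failure of $\Sigma_1-\Sigma_2$ to be a cycle, which is the structural source of the new term.

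Even granting a correct setup, the evaluation of that integral to $4D(e^{2\pi i/m_j})$ is the mathematical heart of the statement and cannot be left as an expectation: in the paper it requires splitting the integral into three pieces, an integration-by-parts identity for $\int_0^1\log|x+isy|^2\,d\log\frac{x+isy}{x-isy}$ in terms of dilogarithms (Lemma \ref{lemma_n1}), and then the nontrivial Bloch--Wigner identities $D(z)=\tfrac12\bigl[D(z/\bar z)+D(\frac{1-1/z}{1-1/\bar z})+D(\frac{1/(1-z)}{1/(1-\bar z)})\bigr]$, $D(z)=D(1-1/z)$ and $D(-1)=0$ to collapse the answer to a single value of $D$. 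Relatedly, in your first-variation argument the same homological defect resurfaces: one must show that the extra pairing $\langle l,\sum_j(\mathrm{w}_{1j}-\mathrm{w}_{2j})\otimes[\tau_j]\rangle$ vanishes, which the paper does algebraically from $\tau_j(\mathrm{w}_{ij})=\mathrm{w}_{ij}$ and Lemma \ref{lemma1}, not by an $\epsilon\to0$ boundary estimate at the cone points; the analytic estimates of the appendix are needed only at the parabolic fixed points, where the divergences actually occur.
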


Although the above results of the Liouville action hold only along the hyperbolic metrics,
the Liouville action can be defined for
any metrics {which have same  singular behaviours as}  the hyperbolic metric near punctures and ramifications points. We denote the set of such metrics over $X\sqcup Y$
by $\mathcal{CM}(X\sqcup Y)$. By the holography principle,
the Liouville action of a metric is expected to have a relationship with the renormalized volume of a hyperbolic 3-manifold which has the given pair of Riemann surfaces $X\sqcup Y$ as conformal boundaries \cite{Kras-Sch08}, \cite{2}, \cite{1}. Such a hyperbolic 3-manifold $M$ is given by the quotient of the hyperbolic 3-space by the quasi-Fuchsian group $\Gamma$.
Since we allow  punctures and ramification points for $X$ and $Y$, correspondingly the manifold $M$ has rank one cusps and conical singularities
of codimension 2. The second part of this paper deals with the proof of this relationship expected from the holography principle. The main task for this is to elaborate the analysis for contribution from rank one cusps and conical singularities. The case of rank one cusps was also discussed briefly in \cite{1} and these singular structures did not produce any additional terms, but the conical singularities of codimension 2 produce some additional terms expressed by the Bloch-Wigner functions. A metric $e^{\phi(z)}|dz|^2\in \mathcal{CM}(X\sqcup Y)$ is used to
renormalize the hyperbolic volume of $M$ near conformal boundaries $X\sqcup Y$ as in other works  \cite{Kras00}, \cite{Kras-Sch08}, \cite{2}, but we also need other regularization process
near rank one cusps and conical singularities of codimension 2. In this way, we can regularize the hyperbolic volume of $M$ and define the Einstein-Hilbert
action $\mathcal{E}[\phi]$ by \eqref{eq:def-EH}, which is the same as $-4$ times of the renormalized volume.  The following theorem is given at Theorem \ref{thm:holography} with more explanations.

\begin{theorem}\label{thm-int:hol}
For $e^{\phi(z)}|dz|^2 \in \mathcal{CM}(X\sqcup Y)$,
\begin{align*}
\mathcal{E}[\phi]=S[\phi]-\iint\limits_{X\sqcup Y} e^{\phi(z)}d^2z-8\pi\chi(X)\log 2 -4\sum_{j=1}^r D\left( e^{\frac{2\pi i}{m_j} }\right).
\end{align*}
\end{theorem}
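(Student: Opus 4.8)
\section*{Proof proposal for Theorem \ref{thm-int:hol}}

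The plan is to prove the identity as a direct computation of the renormalized hyperbolic volume of $M=\Gamma\backslash\HH^3$, carried out so as to expose the contributions of the rank one cusps and of the conical lines separately, and then to match the result term by term against the Liouville functional. Fix $e^{\phi(z)}|dz|^2\in\mathcal{CM}(X\sqcup Y)$ and lift it to a $\Gamma$-equivariant conformal metric on $\Omega\subset\widehat{\CC}$. From $\phi$ one forms the Fefferman--Graham defining function (equivalently, the Epstein map), which foliates a collar of the conformal boundary of $M$ by smooth surfaces $S_\varepsilon$ away from the cusps and the codimension two cone lines. Let $M_\varepsilon^\delta$ denote $M$ truncated by $S_\varepsilon$ near the smooth boundary, with $\delta$-horoball neighborhoods of the cusps and $\delta$-tubes around the cone lines removed. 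By the definition \eqref{eq:def-EH}, $\mathcal{E}[\phi]$ is $-4$ times the finite part of $\mathrm{Vol}(M_\varepsilon^\delta)$ after subtracting the prescribed counterterms supported on $\partial M_\varepsilon^\delta$, so the proof amounts to computing this finite part and splitting it as $(\text{smooth boundary})+(\text{cusps})+(\text{cone lines})$.

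For the smooth part of the boundary I would follow \cite{2} and \cite{1}: on a fundamental domain the Fefferman--Graham expansion $ds^2=\rho^{-2}d\rho^2+\rho^{-2}\big(g_0+\rho^2 g_1+\rho^4 g_2\big)$, with $g_0=e^{\phi}|dz|^2$ and $g_1,g_2$ determined by $\phi$, writes $\mathrm{Vol}$ as an explicit integral in $\rho$ whose constant term is a local expression in $\phi$. Using the double-complex and group-cocycle bookkeeping of Section \ref{ss:coho} to assemble the contributions of the sides of the fundamental domain, this local expression becomes $-\tfrac14\big(S[\phi]-\iint_{X\sqcup Y}e^{\phi(z)}d^2z\big)$ together with a constant, namely $2\pi\chi(X)\log 2$ from each of the two conformal boundary components, coming from the endpoint of the $\rho$-integration and the normalization of the defining function; this produces the term $-8\pi\chi(X)\log 2$. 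The cusp neighborhoods are handled exactly as in \cite{1}: the elementary volume of a truncated rank one cusp cancels, as $\delta\to0$, against the regularized contributions of $S[\phi]$ and $\iint e^{\phi}d^2z$ near the corresponding punctures, so no finite term is added.

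The new input is the contribution of the conical singularities. Near the $j$-th cone line --- a complete geodesic in $M$ of cone angle $2\pi/m_j$ meeting the conformal boundary at the ramification points of $X$ and $Y$ --- I would set up Fermi-type coordinates adapted to the singular geodesic and compatible with the Fefferman--Graham gauge away from it, and compute the regularized volume of the $\delta$-tube with the same counterterm prescription used elsewhere. The $\delta$-divergent part must be shown to match precisely the contribution of the ramification points to the regularized $S[\phi]$ and $\iint e^{\phi}d^2z$, using $\phi(z)\sim-(1-\tfrac1{m_j})\log|z|^2+O(1)$ near such a point; the remaining convergent integral over the tube is then evaluated and recognized, after the substitution that trivializes the cone structure, as a value of the Lobachevsky (Clausen) function, which gives $-D\big(e^{2\pi i/m_j}\big)$ up to the universal constant $4$. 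Equivalently, one may decompose a model of the capped cone-tube into ideal tetrahedra whose dihedral angles are fixed by $2\pi/m_j$ and sum their volumes, invoking the standard Bloch--Wigner identity. I expect this cone-line computation --- choosing the coordinates, checking the counterterm matching against the two-dimensional side, and pinning down the precise constant all at once --- to be the main obstacle; a secondary point is to verify that the extra truncation surfaces around the cusps and cone lines do not disturb the cocycle relations used in the smooth-boundary step, so that the splitting above is legitimate. Assembling the three contributions and letting $\varepsilon,\delta\to0$ yields the stated equality.
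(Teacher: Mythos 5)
Your overall strategy --- truncate $M$ by a defining surface built from $\phi$ together with neighborhoods of the cusps and of the conical geodesics, show that the cusp neighborhoods contribute nothing, and isolate the cone lines as the source of the Bloch--Wigner terms --- matches the paper's plan, and your smooth-boundary and cusp steps are essentially those of Section \ref{s:holography} (the paper phrases the smooth part via the hierarchy $w_3=dw_2$, $dw_1=\delta w_2$, $\delta w_1=dw_0$ and Stokes' theorem in the total complex rather than via a Fefferman--Graham expansion, but this is the same bookkeeping you invoke). The genuine gap is in the mechanism you propose for the cone lines. In the paper the terms $D(e^{2\pi i/m_j})$ do \emph{not} arise as a regularized volume of the $\delta$-tube, nor from an ideal-tetrahedron decomposition: the surface integral $\langle w_2, T^e_\vep\rangle$ over the tube boundary is shown to be $O(\vep\log\vep)$, and the entire contribution comes from the group-cochain boundary term $\langle w_1, L^e_\vep\rangle$, a principal-value line integral of the $1$-form $(w_1)_{\tau_j}$ along the singular geodesic $h_j$ itself (Lemmas \ref{l:der-logJ} and \ref{l:d-lambda} and the computation leading to \eqref{e:elliptic-con}). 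A tube of radius $\delta$ about a geodesic has volume tending to zero with $\delta$, so the finite part you are after cannot be extracted from the tube volume; the dilogarithms enter through the explicit parametrization of $h_j$ and the cocycle $w_1$ attached to the elliptic element $\tau_j$, not through a triangulation.

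A second, related misstep: you propose to cancel the $\delta$-divergence of the tube against singular behaviour of $S[\phi]$ and $\iint e^{\phi}d^2z$ at the ramification points, using $\phi(z)\sim -(1-\tfrac1{m_j})\log|z|^2$. In this paper the Liouville action is computed on a fundamental domain in $\Omega$, where (as noted in Subsection \ref{ss:coho}) $\varphi$, and hence every $\phi$ with $e^{\phi}|dz|^2\in\mathcal{CM}(X\sqcup Y)$, is \emph{regular} at the elliptic fixed points; there is no divergence of $S[\phi]$ or of the area term at the ramification points to cancel. The divergences in the cone-line computation are the $\log\vep$ terms in the two pieces $(\mathrm{I})$ and $(\mathrm{III})$ of $\int_{h_j}(w_1)_{\tau_j}$, and they cancel against each other, which is why the answer can be written as a principal value. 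Without replacing your tube-volume argument by the line-integral computation of $\langle w_1, L^e\rangle$ (or an equivalent), the proposal does not produce the constant $-4\sum_{j=1}^r D\left(e^{\frac{2\pi i}{m_j}}\right)$.
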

In the above equality,
the terms given by the Bloch-Wigner functions are derived from the conical singularities of codimension 2 of $M$. Correspondingly, as we stated in Theorem \ref{thm-int:var}, the exactly same terms appear as the contribution of the elliptic elements to the classical Liouville action $S$ when $\Gamma$ is Fuchsian.
A hyperbolic manifold with conical singularities of codimension 2 is called  a hyperbolic manifold with particles in  \cite{Kras-Sch07}.
Hence, it seems to be interesting to understand the terms given by the Bloch-Wigner functions in Theorem \ref{thm-int:hol} from the viewpoint of \cite{Kras-Sch07}.

Recently Takhtajan and Zograf introduced a metric associated to a ramification point over a Riemann surface in \cite{TZ17}. The precise definition for this is given in Section \ref{s:elliptic}. For a quasi-Fuchsian group with elliptic elements,
this metric can be defined for each pair of ramification points in $X\sqcup Y$ determined by an elliptic element.
Considering Theorems \ref{thm-int:var} and \ref{thm-int:hol},  one would naturally ask about the construction of potential functions
of the K\"ahler forms  corresponding to these metrics on the quasi-Fuchsian deformation space $\mathfrak{D}(\Gamma)$.
Employing machinery to prove aforementioned theorems, we construct such potential functions in two ways. The first one is  constructed analytically from the hyperbolic metric over $X\sqcup Y$, and the second one is  constructed geometrically from $M$. The precise definitions
of these and the corresponding results are given in Section \ref{s:elliptic}.

Now we explain the structure of this paper. In Section \ref{sec2}, we construct the Liouville action over quasi-Fuchsian deformation space $\mathfrak{D}(\Gamma)$ and derive the elliptic contribution to the classical Liouville action
when $\Gamma$ is Fuchsian. In Section \ref{s:variation}, we prove the results for variation formulas of the classical Liouville action.
In Section \ref{s:holography}, we prove the equality relating the Liouville action to the Einstein-Hilbert action. In Section \ref{s:elliptic},
we construct potential functions of the Takhtajan-Zograf metric associated to elliptic elements in $\Gamma$. In Appendix \ref{a1}, we provide some
estimates for the hyperbolic metric near punctures and ramification points.

\subsection*{Acknowledgements}
The work of J. Park was partially supported by Samsung Science and Technology Foundation under Project Number SSTF-BA1701-02. We would like to thank L. Takhtajan and K. Krasnov for giving constructive comments to the first draft of this paper.

\section{The Liouville action functional on quasi-Fuchsian deformation spaces}\label{sec2}

Consider a Riemann surface of finite type $X$, with genus $g$, $n\geq 0$ punctures, and $r\geq 0$ ramified points with ramification indices $m_1$, $m_2$, $\ldots$, $m_r$, where $2\leq m_1\leq m_2\leq \ldots\leq m_r$. We say that the Riemann surface is of type $(g, n; m_1, m_2,\ldots, m_r)$. The characteristic of the surface $X$ is given by
\begin{align}\label{e:def-chi}
\chi(X)=2g-2+n+\sum_{i=1}^r\left(1-\frac{1}{m_i}\right).
\end{align}In this work, we assume that $\chi(X)>0$ so that the surface $X$ is a hyperbolic surface. Then we can realize $X$ as a quotient space $\Gamma\bk\mathbb{U}$, where $\mathbb{U}$ is the upper half plane, and $\Gamma$ is a Fuchsian group of the first kind. Here $\Gamma$ is a finitely generated cofinite discrete subgroup of $\text{PSL} (2, \mathbb{R})$ which has a standard representation with $2g$ hyperbolic generators $\alpha_1, \beta_1, \ldots, \alpha_g, \beta_g$, $n$ parabolic generators $\kappa_1, \ldots, \kappa_n$, and $r$ elliptic generators $\tau_1, \ldots, \tau_r$ of orders $m_1, \ldots, m_r$ respectively, satisfying the relation
\begin{align*}
\alpha_1\beta_1\alpha_1^{-1}\beta_1^{-1}\ldots \alpha_g\beta_g\alpha_g^{-1}\beta_g^{-1}\kappa_1\ldots\kappa_n\tau_1\ldots\tau_r=I,
\end{align*}where $I$ is the identity element in $\Gamma$.
 The group is normalized by prescribing three of the fixed points of the generators. For example, if $g\geq 1$, the group is normalized so that the attracting and repelling fixed points of $\alpha_1$ are 0 and $\infty$ respectively, and the attracting fixed point of $\beta_1$ is 1.

 Assume that $3g-3+n+r>0$, so that the moduli space of $X$ has positive dimension.
 In this section, we discuss how to construct the Liouville action on the quasi-Fuchsian deformation spaces. The construction is similar to our previous works \cite{2}, \cite{1} for quasi-Fuchsian deformation spaces of compact Riemann surfaces and   Riemann surfaces with punctures, but we have to take care of the elliptic elements.

Given a marked, normalized, quasi-Fuchsian group $\Gamma$, its region of discontinuity $\Omega$ has two invariant components $\Omega_1$ and $\Omega_2$ separated by a quasi-circle $\mathcal{C}$. Let $X\simeq\Gamma\backslash\Omega_1$ and $Y\simeq \Gamma\backslash\Omega_2$ be the
corresponding marked  Riemann surfaces with opposite orientations.  We say that the quasi-Fuchsian group is of type $(g, n; m_1, m_2,\ldots, m_r)$ if both the Riemann surfaces $X$ and $Y$ are of type $(g, n; m_1, m_2,\ldots, m_r)$. There exists a quasiconformal homeomorphism $J_1$ of $\hat{\mathbb{C}}$ such that

\begin{enumerate}
\item[\textbf{QF1}]$J_1$ is holomorphic on $\mathbb{U}$ and $J_1(\mathbb{U})=\Omega_1$, $J_1(\mathbb{L})=\Omega_2$, $J_1(\mathbb{R})=\mathcal{C}$, where $\mathbb{U}$ and $\mathbb{L}$ are respectively the upper and lower half planes.
\item[\textbf{QF2}] $J_1$ fixes $0, 1$ and $\infty$.
\item[\textbf{QF3}] $ \Gamma_1=J_1^{-1}\circ \Gamma\circ J_1$ is a marked, normalized Fuchsian group.
\end{enumerate}

This implies that $X\simeq \Gamma_1\backslash \mathbb{U}$. There is also a quasiconformal homeomorphism $J_2$ of $\hat{\mathbb{C}}$, holomorphic on $\mathbb{L}$, with a Fuchsian group $\Gamma_2=J_2^{-1}\circ \Gamma\circ J_2$ so that $Y\simeq \Gamma_2\backslash\mathbb{L}$. The hyperbolic metric $\displaystyle e^{\phi_{\text{hyp}}(z)}|dz|^2$ on $\Omega=\Omega_1\sqcup \Omega_2$ is given explicitly  by
\begin{align}\label{eq0630_1}
\rho(z)=e^{\phi_{\text{hyp}}(z)}=\frac{\left|\left(J_i^{-1}\right)_z(z)\right|^2}{\left|\text{Im}\,\left(J_i^{-1}(z)\right)\right|^2},\quad\text{if}\;\;z\in\Omega_i,\quad i=1, 2.
\end{align}
This is a pull-back by the map $J^{-1}:\Omega_1 \sqcup \Omega_2\rightarrow \mathbb{U}\sqcup\mathbb{L}$ of the hyperbolic metric on  $\mathbb{U}\sqcup\mathbb{L}$, where $\displaystyle J|\mathbb{U} =J_1|\mathbb{U}$ and $\displaystyle J|\mathbb{L} =J_2|\mathbb{L}$.

Denote by $\mathfrak{D}(\Gamma)$ the deformation space of the quasi-Fuchsian group $\Gamma$. It is a complex manifold of dimension $6g-6+2n+2r$. It is known that
\begin{align}\label{eq:deform-sp}
\mathfrak{D}(\Gamma) \simeq \mathfrak{T}(\Gamma_1)\times {\mathfrak{T}}(\Gamma_2)
\end{align}
where $\mathfrak{T}(\Gamma_i)$ is the Teichm\"uller space of $\Gamma_i$ for $i=1,2$.
For details about the definition of $\mathfrak{D}(\Gamma)$, we refer the readers to \cite{2}. The  holomorphic tangent space of $\mathfrak{D}(\Gamma)$ at the origin is identified with  $\Omega^{-1,1}(\Gamma)$ -- the complex vector space of harmonic Beltrami differentials. The Weil-Petersson K\"ahler form on $\mathfrak{D}(\Gamma)$ is induced by the pairing
\begin{equation}\label{e:def-WP pairing}
\langle \mu, \nu\rangle_{\text{WP}} =\iint\limits_{X\sqcup Y}\mu(z) \overline{\nu(z)}\rho(z)d^2z
\end{equation}
for $\mu, \nu\in \Omega^{-1,1}(\Gamma)$.

In the following two subsections, we present the construction of the Liouville action over
$\mathfrak{D}(\Gamma)$. For this, basically we follow the construction in \cite{2} where
the quasi-Fuchsian group $\Gamma$ is assumed to have no parabolic and elliptic elements. Since these type elements are allowed in this paper,
we take care of these in the construction and explicate the difference when they contribute nontrivially. To avoid much repetition, we will skip some part of the construction and refer to the section 2 of \cite{2} for details.

\subsection{Homology construction}

Start with a marked Fuchsian group $\Gamma$ of type $(g, n; m_1, m_2,\ldots, m_r)$, the double homology complex $\mathsf{K}_{\bullet, \bullet}$ is defined as $\mathsf{S}_{\bullet}\otimes_{\mathbb{Z}\Gamma}\mathsf{B}_{\bullet}$, a tensor product over the integral group ring $\mathbb{Z}\Gamma$, where $\mathsf{S}_{\bullet}=\mathsf{S}_{\bullet}(\mathbb{U})$ is the singular chain complex of $\mathbb{U}$ with the differential $\partial'$, considered as a right $\mathbb{Z}\Gamma$-module, and $\mathsf{B}_{\bullet}=\mathsf{B}_{\bullet}(\mathbb{Z}\Gamma)$ is the standard bar resolution complex for $\Gamma$ with differential $\partial''$. The associated total complex $\text{Tot} \;\mathsf{K}$ is equipped with the total differential $\partial=\partial'+(-1)^p \partial''$ on $\mathsf{K}_{p,q}$.

There is a standard choice of the fundamental domain
$F\subseteq \mathbb{U}$ for $\Gamma$ as a  non-Euclidean polygon with
$4g+2n+2r$ edges labeled by $a_k, a_k', b_k', b_k;1\leq  k\leq g$,
$c_i, c_i'; 1\leq i\leq n$ and $d_j, d_j'; 1\leq j\leq r$ satisfying
$\displaystyle  \alpha_k\left(a_k'\right)=a_k,\,\beta_k\left(b_k'\right)=b_k, \,1\leq k\leq g$,
$\displaystyle\kappa_i\left(c_i'\right)=c_i,\, 1\leq i\leq n$ and $\displaystyle\tau_j\left(d_j'\right)=d_j,\, 1\leq j\leq r$. The orientation of the edges
is chosen such that
\begin{equation}\label{eq1}
\pa' F=\sum_{k=1}^{g}(a_k+b_k'-a_k'-b_k)+\sum_{i=1}^{n}(c_i-c_i')+\sum_{j=1}^{r}(d_j-d_j').
\end{equation}
Set $\pa' a_k=a_k(1)-a_k(0),\, \pa' b_k=b_k(1)-b_k(0),\,
\pa'c_i=c_i(1)-c_i(0), \,\pa'd_j=d_j(1)-d_j(0)$, so that $a_k(0)=b_{k-1}(0); 2\leq k \leq g
$, $c_i(0)=c_{i-1}'(0),\,
2\leq i\leq n $, $c_1(0)= b_g(0)$,  $a_1(0)=d_r'(0) $. The relations
between the vertices of $F$ and the generators of $\Gamma$ are the
following: $\displaystyle \alpha_k^{-1}\left(a_k(0)\right)=b_k(1),\, \beta_k^{-1}\left(b_k(0)\right)=
a_k(1),\,1\leq k\leq g$; $ \displaystyle \gamma_k\left(b_k(0)\right)=b_{k-1}(0),\;2\leq
k\leq g$, $\displaystyle\gamma_1\left(b_1(0)\right)=a_1(0)$; $\displaystyle\kappa_i^{-1}\left(c_i(0)\right)=c_{i+1}(0)$, $1\leq i\leq n-1$;
$\displaystyle\tau_j^{-1}\left(d_j(0)\right)=d_{j+1}(0)$, $1\leq j\leq r-1$. Here $\gamma_k=[\alpha_k, \beta_k]=\alpha_k\beta_k\alpha_k^{-1}\beta_k^{-1}$.

\begin{figure}[h]
\epsfxsize=0.6\linewidth \epsffile{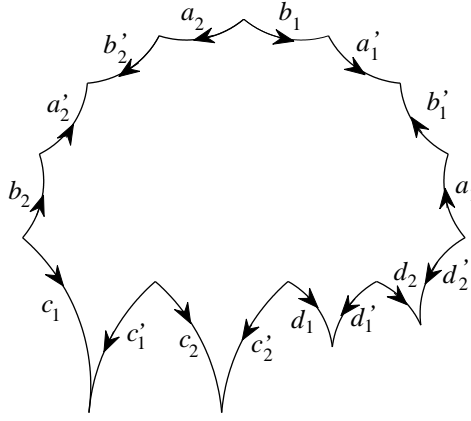} \caption{\label{f1} The convention of the fundamental domain $F$.}\end{figure}

 According to the isomorphism
$\mathsf{S}_{\bullet}\simeq\mathsf{K}_{\bullet,0}$, the fundamental domain $F$ is
identified with $F \otimes [\;] \in \mathsf{K}_{2,0}$. We have $\pa'' F
= 0$, and it follows from \eqref{eq1} that
\begin{equation*}\begin{split}
\pa' F = &\sum_{k=1}^g \left( \beta_k^{-1}\left(b_k\right) - b_k -
\alpha_k^{-1}\left(a_k\right) +a_k\right)-\sum_{i=1}^n\left(\kappa_i^{-1}\left(c_i\right)-c_i\right)
\\&-\sum_{j=1}^r\left(\tau_j^{-1}\left(d_j\right)-d_j\right) \\=& \pa'' L,\end{split}
\end{equation*}
where $L\in \mathsf{K}_{1,1}$ is given by
\begin{equation} \label{L}
L= \sum_{k=1}^g \left(b_k\otimes\left[ \beta_k\right] -a_k\otimes\left[ \alpha_k\right]\right)-\sum_{i=1}^n
c_i\otimes\left[\kappa_i\right]-\sum_{j=1}^r
d_j\otimes\left[\tau_j\right].
\end{equation}
There exists $V\in\mathsf{K}_{0,2}$ such that
\begin{align}\label{eq0914_1}
 \pa ' L=\pa'' V-\sum_{i=1}^n x_i \otimes [\kappa_i]-\sum_{j=1}^r z_j \otimes [\tau_j].
\end{align}
Here $x_i=c_i(1)$ are representatives of the punctures of the
Riemann surface $X=\Gamma\bk\mathbb{U}$ on $\mathbb{R}\cup \{\infty\}$
and $z_j=d_j(1)$ are representatives of the ramification points of $X$ on $\mathbb{U}$. In the presence of ramification points, the expression for $V$ is much more complicated. One can verify that it is given by
\begin{equation} \label{V}\begin{split}
V &= \sum_{k=1}^g \left(a_k(0)\otimes[ \alpha_k| \beta_k] -
b_k(0)\otimes\left[\beta_k|\alpha_k\right] +
b_k(0)\otimes\left[\gamma_k^{-1}|\alpha_k\beta_k\right]\right)\\ & \;\;\;
-\sum_{k=1}^{g-1}
b_g(0)\otimes\left[\gamma_g^{-1}\ldots\gamma_{k+1}^{-1}|\gamma_k^{-1}\right]\\&+\sum_{i=1}^{n-1}c_1(0)\otimes\left[\kappa_1
\cdots\kappa_i|\kappa_{i+1}\right] +\sum_{j=0}^{r-1}c_1(0)\otimes\left[\kappa_1\ldots\kappa_n\tau_1
\cdots\tau_j|\tau_{j+1}\right]. \end{split}
\end{equation}
Define $$\Sigma = F + L - V.$$ Then
\[
\pa \Sigma = -\sum_{i=1}^n x_i\otimes[\kappa_i]-\sum_{j=1}^r z_j\otimes [\tau_j].
\]
When $\Gamma$ contains parabolic or elliptic elements, $\Sigma$ is not a cycle.

Finally, we also define $W$ in the following way. Let $P_k$ be a $\Gamma$-contracting path (see \cite{2} for the definition of $\Gamma$-contractible) connecting 0 to $b_k(0)$. Then
\begin{equation} \label{W}\begin{split}
W =& \sum_{k=1}^g \left(P_{k-1}\otimes\left[\alpha_k|\beta_k\right] -
P_k\otimes\left[\beta_k|\alpha_k\right] +
P_k\otimes\left[\gamma_k^{-1}|\alpha_k\beta_k\right]\right)\\ &
-\sum_{k=1}^{g-1}
P_g\otimes\left[\gamma_g^{-1}\ldots\gamma_{k+1}^{-1}| \gamma_k^{-1}\right]+\sum_{i=1}^{n-1}P_g\otimes\left[\kappa_1\ldots\kappa_i|\kappa_{i+1}\right]\\
&+\sum_{j=0}^{r-1}P_g\otimes\left[\kappa_1\ldots\kappa_n\tau_1\ldots\tau_j|\tau_{j+1}\right]. \nonumber
\end{split}\end{equation}

If $\Gamma$ is a quasi-Fuchsian group, let $\Gamma_1$ be the Fuchsian group such that $\Gamma_1=J_1^{-1}\circ \Gamma\circ J_1$. The double complex associated with $\Omega=\Omega_1\sqcup \Omega_2$ and the group $\Gamma$  is a push-forward by the map $J_1$ of the double complex associated with $\mathbb{U}\sqcup\mathbb{L}$ and the group $\Gamma_1$.
\begin{align*}
\Sigma_1 =&J_1(\Sigma) =F_1 +L_1 -V_1,\\
 \Sigma_2=&J_1\left(\bar{\Sigma}\right)=F_2+L_2-V_2,
\end{align*}where $F_1=J_1(F)$, $F_2=J_1(\bar{F})$, $L_1=J_1(L)$, $L_2=J_1(\bar{L})$, $V_1=J_1(V)$, $V_2=J_1(\bar{V})$. Note that $\Sigma_1-\Sigma_2$ is a cycle only when there is no elliptic elements in $\Gamma$. In the general case we consider in this paper, $\Sigma_1-\Sigma_2$ is not a cycle.

\subsection{Cohomology construction}\label{ss:coho}

The corresponding double complex in cohomology $\mathsf{C}^{\bullet, \bullet}$ is defined as $\mathsf{C}^{p, q}=\text{Hom}_{\mathbb{C}}\left(\mathsf{B}_q, \mathsf{A}^p\right)$, where $\mathsf{A}^{\bullet}$ is the complexified de Rham complex on $\Omega=\Omega_1\sqcup\Omega_2$. The associated total complex $\text{Tot} \mathsf{C}$ is equipped with the total differential $D=d+(-1)^p\delta$ on $\mathsf{C}_{p, q}$, where $d$ is the de Rham differential and $\delta$ is the group coboundary. The natural pairing $\langle\; , \; \rangle$ between $\mathsf{C}^{p, q}$ and $\mathsf{K}_{p, q}$ is given by the integration over chains.

Let $\varphi=\phi_{\text{hyp}}$, where $e^{\phi_{\text{hyp}(z)}}|dz|^2$ is the hyperbolic metric on $\Omega$ given by \eqref{eq0630_1}. Denote by $\mathcal{CM}(X\sqcup Y)$  the space of conformal metrics on $\Gamma\backslash\Omega=X\sqcup Y$ satisfying certain regularity conditions at the parabolic and elliptic fixed points of $\Gamma$. That is, every $ds^2\in \mathcal{CM}(X\sqcup Y)$ is represented as $ds^2=e^{\phi(z)}|dz|^2$,  where $\phi$ is a smooth function on $\Omega$ satisfying
$$\phi\circ\gamma+\log|\gamma'|^2=\phi\quad\quad\forall \;\gamma\in\Gamma,$$ and
 $$\phi(z)-\varphi(z)=O(1)$$ as $z$ approaches the parabolic and elliptic fixed points of $\Gamma$. Since $J^{-1}$ is univalent on $\Omega_1\sqcup\Omega_2$, from \eqref{eq0630_1}, we find that $\varphi(z)$ is regular when $z$ approaches the elliptic fixed points, and so does $\phi(z)$.

 The Liouville action  is a function on the space of conformal metrics. Its construction is as follows.
Starting with the 2-form
\begin{equation} \label{omega}
\omega[\phi] = \left(|\phi_{z}|^2 +e^{\phi}\right)dz\wedge d\bar{z}\in \mathsf{C}^{2,0},
\end{equation}
we have
\begin{equation*}
\delta\omega[\phi]=d\check{\theta}[\phi],
\end{equation*}
where $\check{\theta}[\phi] \in \mathsf{C}^{1,1}$ is given explicitly by
\begin{equation} \label{theta}
\check{\theta}_{\gamma^{-1}}[\phi] = \left(\phi -
\frac{1}{2}\log|\gamma'|^2-2\log 2-\log|c(\gamma)|^2\right) \left(\frac{\gamma''}{\gamma'} dz -
\frac{\ov{\gamma''}}{\ov{\gamma'}}d\bar{z}\right).
\end{equation}Here $c(\gamma)$ is the element $c$ in the  linear fractional transformation $\di\gamma=\begin{pmatrix} a & b\\c & d\end{pmatrix}.$ Notice that $\check{\theta}_{\gamma^{-1}}[\phi]=0$ if $c(\gamma)=0$.

Next, set
\begin{equation*}
\check{u}=\delta \check{\theta}[\phi] \in \mathsf{C}^{1,2}.
\end{equation*}
From the definition of $\check{\theta}$ and $\delta^2=0$, it follows that the
1-form $\check{u}$ is closed. An explicit calculation gives
\begin{equation} \label{u}\begin{split}
\check{u}_{\gamma_1^{-1},\gamma_2^{-1}}= & -\left(\frac{1}{2}\log|\gamma_1'|^2+\log\frac{|c(\gamma_2)|^2}{|c(\gamma_2\gamma_1)|^2}\right)
\left(\frac{\gamma_2''}{\gamma_2'}\circ\gamma_1\, \gamma_1'\, dz -
\frac{\ov{\gamma_2''}}{\ov{\gamma_2'}}\circ\gamma_1\, \ov{\gamma_1'}\,d\bar{z}\right) \\
 & + \left(\frac{1}{2}\log|\gamma_2'\circ\gamma_1|^2+\log\frac{|c(\gamma_2\gamma_1)|^2}{|c(\gamma_1)|^2}\right) \left(\frac{\gamma_1''}{\gamma_1'} dz -
\frac{\ov{\gamma_1''}}{\ov{\gamma_1'}}d\bar{z}\right). \nonumber
\end{split}\end{equation}

\begin{remark}
Notice that for a linear transformation $\gamma$,
\begin{align}\label{eq11_14_1}
-2c(\gamma)=\frac{\gamma''(z)}{\left(\gamma'(z)\right)^{\frac{3}{2}}}.
\end{align}
Hence, $\check{\theta}_{\gamma^{-1}}[\phi]$ and $\check{u}_{\gamma_1^{-1},\gamma_2^{-1}}$ can be rewritten as
\begin{equation} \label{theta2}
\check{\theta}_{\gamma^{-1}}[\phi] = \left(\phi
 -\log\left|\frac{\gamma''}{\gamma'}\right|^2\right) \left(\frac{\gamma''}{\gamma'} dz -
\frac{\ov{\gamma''}}{\ov{\gamma'}}d\bar{z}\right),
\end{equation}
\begin{equation} \label{u2}\begin{split}
\check{u}_{\gamma_1^{-1},\gamma_2^{-1}}= & -\left( \log\left|\frac{\gamma_2''}{\gamma_2'}\circ\gamma_1\right|^2 \right)
\left(\frac{\gamma_2''}{\gamma_2'}\circ\gamma_1\, \gamma_1'\, dz -
\frac{\ov{\gamma_2''}}{\ov{\gamma_2'}}\circ\gamma_1\, \ov{\gamma_1'}\,d\bar{z}\right) \\
&+ \left(\log\left|\frac{(\gamma_2\circ \gamma_1)''}{(\gamma_2\circ\gamma_1)'}\right|^2\right)
\left(\frac{(\gamma_2\circ\gamma_1)''}{(\gamma_2\circ\gamma_1)'}  dz -
\frac{\ov{(\gamma_2\circ\gamma_1)''}}{\ov{(\gamma_2\circ\gamma_1)'}} d\bar{z}\right)\\
 &-\left(\log\left|\frac{\gamma_1''}{\gamma_1'}\right|^2\right) \left(\frac{\gamma_1''}{\gamma_1'} dz -
\frac{\ov{\gamma_1''}}{\ov{\gamma_1'}}d\bar{z}\right).
\end{split}\end{equation}
\end{remark}

For $e^{\phi(z)}|dz|^2\in \mathcal{CM}(X\sqcup Y)$, the Liouville action is defined as
\begin{align}\label{Liouvilleaction}
S[\phi]=&\frac{i}{2}\Bigl(\langle \omega[\phi], F_1-F_2\rangle-\langle\check{\theta}[\phi], L_1-L_2\rangle+\langle \check{u}, W_1-W_2\rangle\Bigr),
\end{align}
where $W_1=J_1(W)$ and $W_2=J_1(\bar{W})$. When $e^{\phi}=e^{\varphi}$ is the hyperbolic metric,  $S[\varphi]$ is well-defined by Theorem \ref{thm2}. For a conformal metric $e^{\phi(z)}|dz|^2\in  \mathcal{CM}(X\sqcup Y)$, $S[\phi]$ is also well-defined since
$\phi(z)-\varphi(z)=O(1)$ as $z$ approaches the parabolic and elliptic fixed points.

\begin{remark}
When $\Gamma$ contains elliptic elements,  $\Sigma_1-\Sigma_2$ is not a cycle. Hence, it is not clear that the Liouville action defined above is independent of the choice of fundamental domains.
By Theorem \ref{thm:holography},  the Liouville action is equal to the renormalized volume of the corresponding hyperbolic three-manifold defined by $\Gamma$ up to some constants. This shows that the Liouville action is indeed independent of the choice of fundamental domain.
\end{remark}

\begin{remark}
In \cite{2}, the Liouville action was originally defined by
\begin{align}\label{e:Li1}
S[\phi]=\frac{i}{2}\Bigl(\langle \omega[\phi], F_1-F_2\rangle-\langle\check{\theta}[\phi], L_1-L_2\rangle+\langle \check{\Theta}, V_1-V_2\rangle\Bigr)
\end{align}
where $\check{\Theta}\in \mathsf{C}^{0,2}$ with $d\check{\Theta}=\check{u}$.
It can be shown that the definition in \eqref{e:Li1} is equivalent to the definition in \eqref{Liouvilleaction} by
\begin{align*}
\langle \check{u}, W_1-W_2\rangle=&\langle d\check{\Theta}, W_1-W_2\rangle=\langle \check{\Theta}, \pa'(W_1-W_2)\rangle=
\langle \check{\Theta}, V_1-V_2\rangle
\end{align*} where we used $\pa'(W_1-W_2)=V_1-V_2$.
\end{remark}

\subsection{Classical Liouville action for Fuchsian groups}\label{ss:classical}
Let $e^{\varphi(z)}|dz|^2$ be the hyperbolic metric on $\Omega=\Omega_1\sqcup \Omega_2$. The critical point of the Liouville action  along a conformal family of metrics appears at the hyperbolic metric. We call
\begin{align*}
S =S[\varphi]=&\frac{i}{2}\Bigl(\langle \omega, F_1-F_2\rangle-\langle\check{\theta}, L_1-L_2\rangle+\langle \check{u}, W_1-W_2\rangle\Bigr)
\end{align*}the classical Liouville action. Here,\begin{align*}
\omega = \omega[\varphi],\hspace{1cm} \check{\theta}=\check{\theta}[\varphi].
\end{align*}

Let us consider the case where $\Gamma$ is a Fuchsian group.
In \cite{2} we prove that if $\Gamma$ is a cocompact Fuchsian group, then the classical Liouville action is equal to $8\pi(2g-2)$, where $g$ is the genus of the compact Riemann surface $\Gamma\backslash\mathbb{U}$. In other words,
$$S=8\pi \chi(X)$$ when $\Gamma\backslash\mathbb{U}\simeq X$ is a compact Riemann surface.
It is natural to ask whether this is still true when $\Gamma\backslash\mathbb{U}\simeq X$ is a surface of type $(g, n; m_1, m_2, \ldots, m_r)$.

When $\Gamma$ is a Fuchsian group,
$$\omega=2e^{\varphi}dz\wedge d\bar{z},$$so that $\delta\omega=0$. Hence $d\check{\theta}=0$. This implies that $\check{\theta}=d\varkappa$ for some $\varkappa$. It follows that
\begin{align*}
\langle \check{u}, W_1-W_2\rangle =\langle \delta d\varkappa, W_1-W_2\rangle=\langle \delta\varkappa, V_1-V_2\rangle=\langle \varkappa, \pa''(V_1-V_2)\rangle.
\end{align*}
Applying the equality of \eqref{eq0914_1} to the double complex associated to the action of the Fuchsian group on $\mathbb{U}\sqcup \mathbb{L}$, we find that
\begin{align}\label{eq:0914qF}
\pa'L_1-\pa'L_2=\ &\pa''V_1-\sum_{i=1}^nv_i\otimes [\kappa_i]-\sum_{j=1}^r z_{j}\otimes [\tau_j] \notag\\
&-\pa''V_2+\sum_{i=1}^nv_i\otimes [\kappa_i]+\sum_{j=1}^r\bar{z}_{j}\otimes [\tau_j]\notag \\
=\ &\pa''V_1-\pa''V_2-\sum_{j=1}^r \left(z_{j}-\bar{z}_{j}\right)\otimes [\tau_j].
\end{align}
Hence,
\begin{align*}
\langle \check{u}, W_1-W_2\rangle =\langle \check{\theta}, L_1-L_2\rangle +\sum_{j=1}^r\langle \varkappa, \left(z_j-\bar{z}_j\right)\otimes [\tau_j]\rangle.
\end{align*}Therefore, when $\Gamma$ is a Fuchsian group, the classical Liouville action is given by
\begin{align}\label{eq0915_2}
S=2\iint_{X\sqcup \bar{X}}e^{\varphi}d^2z+\frac{i}{2}\sum_{j=1}^r\langle \varkappa, \left(z_j-\bar{z}_j\right)\otimes [\tau_j]\rangle.
\end{align}
This shows that when $\Gamma$ does not contain elliptic elements, we indeed have
$S=8\pi \chi(X).$ When $\Gamma$ contains elliptic elements, there is an additional term given by
$$\frac{i}{2}\sum_{j=1}^r\langle \varkappa, \left(z_j-\bar{z}_j\right)\otimes [\tau_j]\rangle.$$Let us compute this term. By the definition of $\varkappa$, we have
\begin{align*}
 \langle \varkappa,  \left(z_j-\bar{z}_j\right)\otimes [\tau_j]\rangle=&\varkappa_{\tau_j}(z_j)-\varkappa_{\tau_j}(\bar{z}_j)\\
 =&\int_{\bar{z}_j}^{z_j}\check{\theta}_{\tau_j}\\
 =&-\int_{\bar{z}_j}^{z_j} \left(\log y^2
 +\log\left|\frac{\tau_j^{-1\prime\prime}}{\tau_j^{-1\prime}}\right|^2\right) \left(\frac{\tau_j^{-1\prime\prime}}{\tau_j^{-1\prime}}dz -
\frac{\ov{\tau_j^{-1\prime\prime}}}{\ov{\tau_j^{-1\prime}}}d\bar{z}\right).
\end{align*}
Now $\tau_j=\rho_j \lambda_{m_j} \rho_j^{-1}$, where
\begin{align}\label{e:def-rho-lambda2}
\rho_j=\begin{pmatrix}-\varpi e^{-i\alpha}\bar{z}_j & \varpi e^{i\alpha}z_j\\ -\varpi e^{-i\alpha}&\varpi e^{i\alpha}  \end{pmatrix}, \qquad
\lambda_{m_j}= \begin{pmatrix}e^{\frac{\pi i}{m_j}} & 0\\ 0 & e^{-\frac{\pi i}{m_j}} \end{pmatrix},
\end{align}where $\alpha$ is a constant and $\varpi^2=(z_j-\bar{z}_j)^{-1}$. Hence,
\begin{align*}
\tau_j^{-1}(z)= \frac{  (z_j e^{i\beta} -\bar{z}_j e^{-i\beta} )z -z_j\bar{z_j} (e^{i\beta} - e^{-i\beta}) }
{ (e^{i\beta}- e^{-i\beta})z-  (\bar{z}_je^{i\beta}- z_j e^{-i\beta})},
\end{align*}
where $\beta= \pi/m_j$. It follows that
\begin{equation*}
\frac{\tau_j^{-1\prime\prime}(z)}{\tau_j^{-1\prime}(z)}=-\frac{2 (e^{i\beta}- e^{-i\beta})}{ (e^{i\beta}- e^{-i\beta})z-  (\bar{z}_je^{i\beta}- z_j e^{-i\beta})}.
\end{equation*}We can choose the integration path from $\bar{z}_j$ to $z_j$ to be the straight line from $\bar{z}_j$ to $z_j$. Using the parametrization
$$s\in[-1,1]\quad \longrightarrow \quad z(s)=\frac{z_j+\bar{z}_j +s(z_j-\bar{z}_j)}{2},$$ we find that
\begin{align*}
&\langle \varkappa,  \left(z_j-\bar{z}_j\right)\otimes [\tau_j]\rangle\\=&2\int_{-1}^1 \log\frac{s^2(4\sin^2\beta)}{\left|\cos\beta+is\sin\beta\right|^2}d\log \frac{\cos\beta+is\sin\beta}{\cos\beta-is\sin\beta}\\
=&4\int_{0}^1 \log\frac{s^2(4\sin^2\beta)}{\left|\cos\beta+is\sin\beta\right|^2}d\log \frac{\cos\beta+is\sin\beta}{\cos\beta-is\sin\beta}\\
=&8\int_0^1\log s\; d\log \frac{\cos\beta+is\sin\beta}{\cos\beta-is\sin\beta}+4\log\left(4\sin^2\beta\right)\int_0^1d\log \frac{\cos\beta+is\sin\beta}{\cos\beta-is\sin\beta}\\&-4\int_0^1\log \left|\cos\beta+is\sin\beta\right|^2d\log \frac{\cos\beta+is\sin\beta}{\cos\beta-is\sin\beta}\\
=&(\text{I})+(\text{II})+(\text{III}).
\end{align*}
Let us recall the dilogarithm function
\begin{equation}\label{0915_1}\begin{split}
\mathrm{Li}_2(z):=&- \int_0^z \frac{\log (1-t)}{t} \, dt \\
=&-\int_0^1\frac{\log(1-zt)}{t}dt\\
=&\int_0^1\log t \;\;d\log(1-zt).
\end{split}
\end{equation}
This gives
\begin{align*}
(\text{I})=&8\int_0^1 \log s\; d\log \frac{1+is\tan\beta}{1-is\tan\beta}\\
=&8\text{Li}_2\left(-i\tan\beta\right)-8\text{Li}_2\left(i\tan\beta\right).
\end{align*}
It is straightforward to find that
\begin{align*}
(\text{II})=&4\log\left(4\sin^2\beta\right)\int_0^1d\log \frac{\cos\beta+is\sin\beta}{\cos\beta-is\sin\beta}\\
=&8i\beta \log\left(4\sin^2\beta\right).
\end{align*}
The computation of (\text{III}) is more complicated. It is given by the lemma below.

\begin{lemma}\label{lemma_n1}
Let $x$ and $y$ be constants with $x>0$ and $x^2+y^2=1$. Then
\begin{align*}
&\int_0^1\log|x+isy|^2d\log\frac{x+isy}{x-isy}\\=&\log(2x)\log\frac{x+iy}{x-iy}-\text{Li}_2\left(\frac{x+iy}{2x}\right)+\text{Li}_2\left(\frac{x-iy}{2x}\right).
\end{align*}
\end{lemma}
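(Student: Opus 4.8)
\emph{Plan of proof.} The plan is to collapse the integral into elementary primitives plus two dilogarithms by a single affine change of variable. It suffices to treat the case that actually occurs later in the paper, where $x=\cos\beta$ and $y=\sin\beta$ with $\beta\in(0,\tfrac{\pi}{2})$, so that $x\in(0,1)$ and $y\in(0,1)$; this guarantees that all functions below stay on their principal branches along the integration path. First I would substitute $t=\dfrac{x-isy}{2x}$, i.e. replace $s\in[0,1]$ by $t(s)=\tfrac12-\tfrac{iy}{2x}s$, an affine invertible change of variable; as $s$ runs over $[0,1]$ the point $t$ traces the vertical segment from $\tfrac12$ to $b:=\dfrac{x-iy}{2x}$, which lies in $\{0<\re t<1\}$. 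Then $x+isy=2x(1-t)$ and $x-isy=2xt$, whence $|x+isy|^2=(x+isy)(x-isy)=4x^2\,t(1-t)=4x^2|t|^2$ (using $1-t=\bar t$), so that
\begin{align*}
\log|x+isy|^2=2\log(2x)+\log t+\log(1-t),\qquad d\log\frac{x+isy}{x-isy}=d\log\frac{1-t}{t}=-\left(\frac1t+\frac1{1-t}\right)dt,
\end{align*}
the right-hand side of the first identity being real since $\log t+\log(1-t)=\log t+\overline{\log t}=2\log|t|$.

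Next I would expand the integrand and integrate term by term. Besides the elementary primitives $\int\tfrac{dt}{t}=\log t$, $\int\tfrac{dt}{1-t}=-\log(1-t)$, $\int\tfrac{\log t}{t}\,dt=\tfrac12\log^2 t$ and $\int\tfrac{\log(1-t)}{1-t}\,dt=-\tfrac12\log^2(1-t)$, I would use the two dilogarithm primitives $\int\tfrac{\log(1-t)}{t}\,dt=-\text{Li}_2(t)$ and $\int\tfrac{\log t}{1-t}\,dt=\text{Li}_2(1-t)$, both immediate from the definition \eqref{0915_1}. Collecting everything, a primitive of the integrand $-[2\log(2x)+\log t+\log(1-t)](\tfrac1t+\tfrac1{1-t})$ is
\begin{align*}
\Phi(t)=-2\log(2x)\bigl(\log t-\log(1-t)\bigr)-\tfrac12\log^2 t+\tfrac12\log^2(1-t)-\text{Li}_2(1-t)+\text{Li}_2(t),
\end{align*}
so that the integral in the lemma equals $\Phi(b)-\Phi(\tfrac12)$.

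Finally I would evaluate the endpoints. Since $\Phi(1-t)=-\Phi(t)$ identically, we get $\Phi(\tfrac12)=0$. At $t=b$ one has $1-b=a:=\dfrac{x+iy}{2x}$, so $\log b-\log a=-\log\dfrac{x+iy}{x-iy}$ and $\log a+\log b=\log(ab)=\log\dfrac{x^2+y^2}{4x^2}=-2\log(2x)$; hence $\tfrac12\log^2(1-b)-\tfrac12\log^2 b=-\tfrac12(\log b-\log a)(\log b+\log a)=-\log(2x)\log\dfrac{x+iy}{x-iy}$, and therefore
\begin{align*}
\Phi(b)=\log(2x)\log\frac{x+iy}{x-iy}-\text{Li}_2\!\left(\frac{x+iy}{2x}\right)+\text{Li}_2\!\left(\frac{x-iy}{2x}\right),
\end{align*}
which is the asserted identity. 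The one place demanding care, which I would write out in full, is the branch bookkeeping: one must check that along $t(s)$ the points $t$ and $1-t$ stay off $(-\infty,0]$ and $[1,\infty)$, so that the principal branches of $\log$ and $\text{Li}_2$ entering $\Phi$ are analytic there (making $\Phi$ a genuine antiderivative) and so that $\log t+\log(1-t)$ indeed reproduces the real quantity $\log|x+isy|^2$. The main obstacle is exactly this tracking of logarithm and dilogarithm branches; the rest is routine calculus.
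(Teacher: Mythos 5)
Your proof is correct, and it reaches the same dilogarithms by what is at bottom the same affine substitution $t=(x\mp isy)/(2x)$ that the paper uses, but the bookkeeping is organized differently. The paper first antisymmetrizes: one integration by parts (whose boundary terms vanish because $|x+iy|^2=x^2+y^2=1$) shows that $\int_0^1(\log u\,d\log u-\log v\,d\log v)=0$ for $u=x+isy$, $v=x-isy$, reducing the integral to $\int_0^1(\log v\,d\log u-\log u\,d\log v)$, and then evaluates the two cross terms separately, with the constants $\mathrm{Li}_2(\tfrac12)$ cancelling between them. You instead keep all four terms, write down a single explicit primitive $\Phi$ (including the $\tfrac12\log^2$ pieces the paper never has to see), and dispose of the lower endpoint via the reflection antisymmetry $\Phi(1-t)=-\Phi(t)$ --- which is really the same symmetry the paper exploits through integration by parts, just deployed at the level of the antiderivative rather than the integrand. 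Your version is slightly more self-contained (one primitive, one endpoint evaluation, and the branch discussion is cleaner since the whole path sits on $\re t=\tfrac12$); the paper's version avoids the squared logarithms entirely. Two small housekeeping points: the lemma as stated allows any $y$ with $x^2+y^2=1$, $x>0$, so you should add the one-line remark that both sides are odd under $y\mapsto-y$ (and vanish at $y=0$) to dispose of the cases outside $y\in(0,1)$; and you should actually carry out the branch check you defer, though it is immediate since $\re t=\re(1-t)=\tfrac12$ keeps $t$ and $1-t$ off both cuts $(-\infty,0]$ and $[1,\infty)$.
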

\begin{proof}
Let $u=x+isy$, $v=x-isy$. Then
\begin{align*}
I=&\int_0^1\log|x+isy|^2\;d\log\frac{x+isy}{x-isy}\\
=&\int_0^1\left(\log u\;d\log u+\log v\;d\log u-\log u\;d\log v-\log v\;d\log v\right).
\end{align*}On the other hand, integration by parts gives
\begin{align*}
I=&-\int_0^1 \log\frac{x+isy}{x-isy}\;d\log|x+isy|^2\\
=&-\int_0^1\left(\log u\;d\log u+\log u\;d\log v-\log v\;d\log u-\log v\;d\log v\right).
\end{align*}
It follows that
\begin{align*}
\int_0^1\left(\log u\;d\log u-\log v\;d\log v\right)=0
\end{align*}
and
\begin{align*}
I=&\int_0^1\left( \log v\;d\log u-\log u\;d\log v \right)\\
=&\int_0^1\left(\log(x-isy)\;d\log(x+isy)-\log(x+isy)\;d\log(x-isy)\right).
\end{align*}
Now
\begin{align*}
\int_0^1\log(x-isy)d\log (x+isy)=&\int_x^{x+iy}\log(2x-t)\frac{dt}{t}\\
=&\int_{\frac{1}{2}}^{\frac{x+iy}{2x}}\left(\log(2x)+\log (1-t)\right)\frac{dt}{t}\\
=&\log(2x)\log\frac{x+iy}{x}-\text{Li}_2\left(\frac{x+iy}{2x}\right)+\text{Li}_2\left(\frac{1}{2}\right).
\end{align*}
Changing $y$ to $-y$ gives
\begin{align*}
\int_0^1\log(x+isy)d\log (x-isy)
=&\log(2x)\log\frac{x-iy}{x}-\text{Li}_2\left(\frac{x-iy}{2x}\right)+\text{Li}_2\left(\frac{1}{2}\right).
\end{align*}
It follows that
\begin{align*}
&\int_0^1\log|x+isy|^2d\log\frac{x+isy}{x-isy}\\=&\log(2x)\log\frac{x+iy}{x-iy}-\text{Li}_2\left(\frac{x+iy}{2x}\right)+\text{Li}_2\left(\frac{x-iy}{2x}\right).
\end{align*}
\end{proof}

From Lemma \ref{lemma_n1}, we find that
\begin{align*}
(\text{III})=&-4\int_0^1\log \left|\cos\beta+is\sin\beta\right|^2d\log \frac{\cos\beta+is\sin\beta}{\cos\beta-is\sin\beta}\\
=&-8i\beta\log\left(2\cos\beta\right) +4\text{Li}_2\left(\frac{1}{1+e^{-2i\beta}}\right)-4\text{Li}_2\left(\frac{1}{1+e^{2i\beta}}\right).
\end{align*}
Hence,
\begin{align*}
&\langle \varkappa,  \left(z_j-\bar{z}_j\right)\otimes [\tau_j]\rangle\\=&
8\text{Li}_2\left(-i\tan\beta\right)-8\text{Li}_2\left(i\tan\beta\right)+8i\beta \log\left(4\sin^2\beta\right)\\
&-8i\beta\log\left(2\cos\beta\right) +4\text{Li}_2\left(\frac{1}{1+e^{-2i\beta}}\right)-4\text{Li}_2\left(\frac{1}{1+e^{2i\beta}}\right)\\
=&-16i\;\mathrm{Im} \text{Li}_2\left(i\tan\beta\right)+8i \;\mathrm{Im} \text{Li}_2\left(\frac{1}{1+e^{-2i\beta}}\right)+8i\beta\log\frac{2\sin^2\beta}{\cos\beta}.
\end{align*}
Now let us recall that the Bloch-Wigner function $D(z)$  is given by \cite{Zagier}:
\begin{equation}\label{e:def-BW}
D(z)= \mathrm{Im} \left( \mathrm{Li}_2(z) \right) +\mathrm{arg}(1-z) \log|z| \qquad \text{for} \quad z\in \mathbb{C}.
\end{equation}
Since
\begin{gather*}
\log|i\tan\beta|=\log\frac{\sin\beta}{\cos\beta},\hspace{1cm}\log\left|\frac{1}{1+e^{-2i\beta}}\right|=-\log(2\cos\beta),\\
\mathrm{arg}(1-i\tan\beta)=-\beta,\hspace{1cm}\mathrm{arg}\left(1-\frac{1}{1+e^{-2i\beta}}\right)=-\beta,
\end{gather*}
we find that
\begin{equation}\label{eq0915_3}\begin{split}
&\langle \varkappa,  \left(z_j-\bar{z}_j\right)\otimes [\tau_j]\rangle\\=
&-16i \;\left(D\left(i\tan\beta\right)+\beta\log\frac{\sin\beta}{\cos\beta}\right)+8i \;\left(D\left(\frac{1}{1+e^{-2i\beta}}\right)-\beta\log(2\cos\beta)\right)\\
&+8i\beta\log\frac{2\sin^2\beta}{\cos\beta}\\
=&-16 i\,D\left(i\tan\beta\right)+8i\,D\left(\frac{1}{1+e^{-2i\beta}}\right).\end{split}
\end{equation}
Using the identity (see \cite{Zagier})
\begin{align*}
D(z)=\frac{1}{2}\left[D\left(\frac{z}{\bar{z}}\right)+D\left(\frac{1-1/z}{1-1/\bar{z}}\right)+D\left(\frac{1/(1-z)}{1/(1-\bar{z})}\right)\right],
\end{align*}
we find that
\begin{align*}
D(i\tan\beta)=\frac{1}{2}\left[D(-1)+D(-e^{-2i\beta})+D(e^{2i\beta})\right].
\end{align*}By definition,
\begin{align*}
D\left(e^{i\theta}\right)=\sum_{n=1}^{\infty}\frac{\sin n\theta}{n^2}.
\end{align*}Hence, $D(-1)=0$. On the other hand, we also have the identity (see \cite{Zagier})
\begin{align*}
D(z)=D\left(1-\frac{1}{z}\right).
\end{align*}
Hence,
\begin{align*}
D\left(\frac{1}{1+e^{-2i\beta}}\right)=D(-e^{-2i\beta}).
\end{align*}
Therefore,
\begin{align*}
&\langle \varkappa,  \left(z_j-\bar{z}_j\right)\otimes [\tau_j]\rangle=-8i\,D(e^{2i\beta}).
\end{align*}

Gathering the results above, we have
\begin{theorem}\label{classical_Liouville}
When $\Gamma$ is a Fuchsian group of type $(g; n; m_1, m_2, \ldots, m_r)$, the classical Liouville action $S$ is given by
\begin{align*}
S=8\pi \chi(X)+4\sum_{j=1}^r  D\left( e^{\frac{2\pi i}{m_j}} \right),
\end{align*} where $\chi(X)$ is given by \eqref{e:def-chi}.
\end{theorem}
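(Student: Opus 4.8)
The plan is to combine the reduction already carried out in Section \ref{ss:classical} with a careful evaluation of the elliptic contribution term by term. Recall that for a Fuchsian $\Gamma$ we have reduced the classical Liouville action to
\begin{align*}
S=2\iint_{X\sqcup \bar{X}}e^{\varphi}d^2z+\frac{i}{2}\sum_{j=1}^r\langle \varkappa, \left(z_j-\bar{z}_j\right)\otimes [\tau_j]\rangle,
\end{align*}
so the first task is to identify the first summand and the second is to evaluate each pairing $\langle \varkappa, (z_j-\bar z_j)\otimes[\tau_j]\rangle$. For the first summand I would invoke the Gauss--Bonnet theorem for the hyperbolic metric on the orbifold $X$ (and separately on $\bar X$): the hyperbolic area of a surface of type $(g,n;m_1,\ldots,m_r)$ is $2\pi\chi(X)$ with $\chi(X)$ as in \eqref{e:def-chi}, so that $\iint_X e^{\varphi}d^2z=2\pi\chi(X)$ and the two copies together give $2\iint_{X\sqcup\bar X}e^{\varphi}d^2z=8\pi\chi(X)$. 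This is the term that survives in the cocompact and punctured cases treated in \cite{2} and \cite{1}.

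The heart of the argument is the computation of the elliptic term $\frac{i}{2}\langle \varkappa, (z_j-\bar z_j)\otimes[\tau_j]\rangle$ for a single elliptic fixed point of order $m_j$. I would proceed exactly as in the displayed computation above: use $\langle \varkappa, (z_j-\bar z_j)\otimes[\tau_j]\rangle=\varkappa_{\tau_j}(z_j)-\varkappa_{\tau_j}(\bar z_j)=\int_{\bar z_j}^{z_j}\check\theta_{\tau_j}$, insert the explicit form \eqref{theta2} of $\check\theta_{\tau_j}$, conjugate $\tau_j$ to the standard rotation $\lambda_{m_j}$ via \eqref{e:def-rho-lambda2} to get the explicit Möbius transformation $\tau_j^{-1}$ and hence the logarithmic derivative $\tau_j^{-1\prime\prime}/\tau_j^{-1\prime}$, then parametrize the straight segment from $\bar z_j$ to $z_j$ by $s\in[-1,1]$. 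This reduces the integral, after splitting the logarithm, to the three pieces $(\mathrm{I})$, $(\mathrm{II})$, $(\mathrm{III})$. Pieces $(\mathrm{I})$ and $(\mathrm{II})$ are elementary: $(\mathrm{I})$ is recognized as a difference of dilogarithms $8\mathrm{Li}_2(-i\tan\beta)-8\mathrm{Li}_2(i\tan\beta)$ via the integral representation \eqref{0915_1}, and $(\mathrm{II})=8i\beta\log(4\sin^2\beta)$ by direct integration, with $\beta=\pi/m_j$.

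The main obstacle is piece $(\mathrm{III})$, the integral $\int_0^1\log|\cos\beta+is\sin\beta|^2\,d\log\frac{\cos\beta+is\sin\beta}{\cos\beta-is\sin\beta}$; this is exactly what Lemma \ref{lemma_n1} is for. I would prove Lemma \ref{lemma_n1} by the integration-by-parts symmetrization shown above (writing $u=x+isy$, $v=x-isy$, deriving $\int_0^1(\log u\,d\log u-\log v\,d\log v)=0$, hence $I=\int_0^1(\log v\,d\log u-\log u\,d\log v)$) and then evaluating $\int_0^1\log(x-isy)\,d\log(x+isy)$ by the substitution $t=x+isy$, which turns it into $\int_{1/2}^{(x+iy)/(2x)}(\log(2x)+\log(1-t))\frac{dt}{t}$, a dilogarithm; the conjugate integral follows by $y\mapsto -y$, and subtracting cancels the $\mathrm{Li}_2(1/2)$ terms. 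Applying this with $x=\cos\beta$, $y=\sin\beta$ gives $(\mathrm{III})=-8i\beta\log(2\cos\beta)+4\mathrm{Li}_2\!\big(\tfrac{1}{1+e^{-2i\beta}}\big)-4\mathrm{Li}_2\!\big(\tfrac{1}{1+e^{2i\beta}}\big)$. Summing $(\mathrm{I})+(\mathrm{II})+(\mathrm{III})$ and grouping real and imaginary parts, the result is purely imaginary and equals $-16i\,\mathrm{Im}\,\mathrm{Li}_2(i\tan\beta)+8i\,\mathrm{Im}\,\mathrm{Li}_2\big(\tfrac{1}{1+e^{-2i\beta}}\big)+8i\beta\log\frac{2\sin^2\beta}{\cos\beta}$.

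Finally I would rewrite this in terms of the Bloch--Wigner function $D$ of \eqref{e:def-BW}. Using $\log|i\tan\beta|=\log\frac{\sin\beta}{\cos\beta}$, $\arg(1-i\tan\beta)=-\beta$, $\log\big|\tfrac{1}{1+e^{-2i\beta}}\big|=-\log(2\cos\beta)$, and $\arg\big(1-\tfrac{1}{1+e^{-2i\beta}}\big)=-\beta$, the logarithmic correction terms in $D$ exactly cancel the leftover elementary logarithm, giving
\begin{align*}
\langle \varkappa, (z_j-\bar z_j)\otimes[\tau_j]\rangle=-16i\,D(i\tan\beta)+8i\,D\!\left(\tfrac{1}{1+e^{-2i\beta}}\right).
\end{align*}
Then I would apply the standard Bloch--Wigner identities from \cite{Zagier}: the three-term relation $D(z)=\tfrac12[D(z/\bar z)+D((1-1/z)/(1-1/\bar z))+D((1/(1-z))/(1/(1-\bar z)))]$ applied to $z=i\tan\beta$ gives $D(i\tan\beta)=\tfrac12[D(-1)+D(-e^{-2i\beta})+D(e^{2i\beta})]$, together with $D(-1)=0$ (from the Fourier expansion $D(e^{i\theta})=\sum_{n\geq1}\sin(n\theta)/n^2$), $D(\bar z)=-D(z)$, and $D(1-1/z)=D(z)$ which yields $D\big(\tfrac{1}{1+e^{-2i\beta}}\big)=D(-e^{-2i\beta})=-D(-e^{2i\beta})$. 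Substituting, the $D(-e^{2i\beta})$ contributions combine so that $\langle \varkappa,(z_j-\bar z_j)\otimes[\tau_j]\rangle=-8i\,D(e^{2i\beta})$ with $2\beta=2\pi/m_j$. Hence $\frac{i}{2}\langle \varkappa,(z_j-\bar z_j)\otimes[\tau_j]\rangle=4D(e^{2\pi i/m_j})$, and summing over $j=1,\ldots,r$ and adding the Gauss--Bonnet term $8\pi\chi(X)$ gives the claimed formula $S=8\pi\chi(X)+4\sum_{j=1}^r D(e^{2\pi i/m_j})$. Beyond the bookkeeping with the Bloch--Wigner identities, the only genuinely delicate points are the justification that the integration path may be taken to be the straight segment (which is legitimate since $\check\theta_{\tau_j}$ is a closed $1$-form on a simply connected region avoiding the pole of $\tau_j^{-1\prime\prime}/\tau_j^{-1\prime}$) and the convergence of the integral near $s=0$, where the $\log s$ singularity is integrable.
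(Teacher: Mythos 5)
Your proposal is correct and follows essentially the same route as the paper: the reduction to $S=2\iint_{X\sqcup\bar X}e^{\varphi}d^2z+\frac{i}{2}\sum_j\langle\varkappa,(z_j-\bar z_j)\otimes[\tau_j]\rangle$, Gauss--Bonnet for the area term, the straight-line parametrization splitting the elliptic integral into $(\mathrm{I})+(\mathrm{II})+(\mathrm{III})$ with Lemma \ref{lemma_n1} handling $(\mathrm{III})$, and the same Bloch--Wigner identities to collapse the answer to $-8i\,D(e^{2\pi i/m_j})$ per elliptic generator. The only cosmetic difference is that you invoke $D(\bar z)=-D(z)$ in the final combination where the paper cancels terms directly, which yields the identical result.
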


\section{Variations of the classical Liouville action}\label{s:variation}

In this section, we want to compute the first and second variations of the classical Liouville action on $\mathfrak{D}(\Gamma)$. Most of the computations are similar to the one given in \cite{2}  when $\Gamma$ does not contain parabolic or elliptic elements. However, we have to be careful when analysing the possible singularities at the parabolic and elliptic fixed points. There might also be extra terms appearing at the elliptic fixed points since the double complex $\Sigma_1-\Sigma_2$ is not a cycle.

Given a harmonic Beltrami differential $\mu$,
let $f^{\mu}$ be the unique quasi-conformal mapping with Beltrami differential $\mu$ that fixes the points $0, 1, \infty$.
Notice that $f^{\vep\mu}$ varies holomorphically with respect to $\vep$ and thus
\begin{align*}
\left.\frac{\pa}{\pa\bar{\vep}}f^{\vep\mu}\right|_{\vep=0}=0.
\end{align*}
Let
\begin{align*}
\dot{f}=\left.\frac{\pa}{\pa\vep}f^{\vep\mu}\right|_{\vep=0}.
\end{align*}
It follows from the definition $$f^{\vep\mu}_{\bar{z}}=\vep\mu f^{\vep\mu}_z$$ that
$$\dot{f}_{\bar{z}}=\mu.$$
For any linear fractional transformation $\gamma$, let $$\gamma^{\vep\mu}=f^{\vep\mu}\circ\gamma\circ (f^{\vep\mu})^{-1}.$$ Then $\gamma^{\vep\mu}$ varies holomorphically with respect to $\vep$ .
We collect some formulas in the following theorem.

\begin{lemma}\label{theorem1}
Let $\mu$ be a harmonic Beltrami differential of $\Gamma$. On $\mathfrak{D}(\Gamma)$, we have the following variation formulas:
\begin{enumerate}
\item[(i)] $\displaystyle \left.\frac{\pa}{\pa\vep}\left(e^{\varphi^{\vep\mu}\circ f^{\vep\mu}}\left|f_z^{\vep\mu}\right|^2\right)\right|_{\vep=0}=0$,
\item[(ii)] $\displaystyle \left.\frac{\pa}{\pa\vep}\left(\varphi^{\vep\mu}\circ f^{\vep\mu}\right)\right|_{\vep=0}=-\dot{f}_z$,

\item[(iii)] $\displaystyle \left.\frac{\pa}{\pa\vep}\left(\varphi^{\vep\mu}_z\circ f^{\vep\mu}f_z^{\vep\mu}\right)\right|_{\vep=0}=-\dot{f}_{zz}$,

\item[(iv)] $\displaystyle \left.\frac{\pa}{\pa\vep}\left(\varphi^{\vep\mu}_{\bar{z}}\circ f^{\vep\mu}\bar{f}_{\bar{z}}^{\vep\mu}\right)\right|_{\vep=0}=0$,

\item[(v)] $\displaystyle \left.\frac{\pa}{\pa\vep}\left(\log\left|\left(\gamma^{\vep\mu}\right)'\circ f^{\vep\mu}\right|^2\right)\right|_{\vep=0}=\dot{f}_z\circ\gamma-\dot{f}_z$,

\item[(vi)] $\displaystyle \left.\frac{\pa}{\pa\vep}\left(\frac{\left(\gamma^{\vep\mu}\right)^{\prime\prime}}{\left(\gamma^{\vep\mu}\right)^{\prime}}\circ f^{\vep\mu}f^{\vep\mu}_z \right)\right|_{\vep=0}=\dot{f}_{zz}\circ\gamma \gamma'-\dot{f}_{zz}$,

    \item[(vii)] $\displaystyle \left.\frac{\pa}{\pa\vep}\left(\frac{\left(\gamma^{\vep\mu}\right)^{\prime\prime}}{\left(\gamma^{\vep\mu}\right)^{\prime}}\circ f^{\vep\mu} \right)\right|_{\vep=0}=\dot{f}_{zz}\circ\gamma \gamma'-\dot{f}_{zz}-\frac{\gamma''}{\gamma'}\dot{f}_z$.
\end{enumerate}

\end{lemma}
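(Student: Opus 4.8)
The plan is to reduce the whole lemma to item (ii) --- the first-variation formula for the Poincar\'e density, which is of Ahlfors type --- and to deduce (i) and (iii)--(vii) from it by formal manipulations. Throughout I would use only the following elementary facts, all either given above or immediate: $f^{\vep\mu}$, and hence $f^{\vep\mu}_z$, $f^{\vep\mu}_{zz}$ and each $\gamma^{\vep\mu}$, vary holomorphically in $\vep$, so that $\pa_{\bar\vep}(\,\cdot\,)|_{\vep=0}=0$ and therefore $\pa_\vep\overline{(\,\cdot\,)}|_{\vep=0}=0$ for any such quantity; the normalizations $f^0=\id$, $f^0_z=1$, $f^0_{zz}=0$; the Beltrami identity $\dot f_{\bar z}=\mu$; and the equivariance $\gamma^{\vep\mu}\circ f^{\vep\mu}=f^{\vep\mu}\circ\gamma$ for $\gamma\in\Gamma$.

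Granting (ii), the remaining items come out as follows. For (i): since $\log|f^{\vep\mu}_z|^2=\log f^{\vep\mu}_z+\overline{\log f^{\vep\mu}_z}$ and the conjugate term contributes nothing to $\pa_\vep|_{\vep=0}$, one has $\pa_\vep\log|f^{\vep\mu}_z|^2|_{\vep=0}=\dot f_z$, hence $\pa_\vep(e^{\varphi^{\vep\mu}\circ f^{\vep\mu}}|f^{\vep\mu}_z|^2)|_{\vep=0}=e^{\varphi}(\pa_\vep(\varphi^{\vep\mu}\circ f^{\vep\mu})|_{\vep=0}+\dot f_z)$, which vanishes by (ii). For (iii) and (iv): writing $u_\vep=\varphi^{\vep\mu}\circ f^{\vep\mu}$ and using $f^{\vep\mu}_{\bar z}=\vep\mu f^{\vep\mu}_z$, the chain rule expresses $\pa_z u_\vep$ and $\pa_{\bar z}u_\vep$ as linear combinations of $P_\vep:=(\varphi^{\vep\mu}_w\circ f^{\vep\mu})f^{\vep\mu}_z$ and $Q_\vep:=(\varphi^{\vep\mu}_{\bar w}\circ f^{\vep\mu})\overline{f^{\vep\mu}_z}$ with coefficients $1,\vep\mu,\overline{\vep\mu}$ and determinant $1-|\vep\mu|^2$; inverting this $2\times2$ system and differentiating at $\vep=0$ (where the $\vep$-dependent coefficients drop out) gives $\pa_\vep P_\vep|_{\vep=0}=\pa_z(\pa_\vep u_\vep|_{\vep=0})=-\dot f_{zz}$, which is (iii), and $\pa_\vep Q_\vep|_{\vep=0}=\pa_{\bar z}(\pa_\vep u_\vep|_{\vep=0})-\mu\varphi_z=-\mu_z-\mu\varphi_z=-e^{-\varphi}\pa_z(e^{\varphi}\mu)$, which vanishes precisely because $\mu$ is \emph{harmonic}, giving (iv) (here I used $\pa_\vep u_\vep|_{\vep=0}=-\dot f_z$ from (ii) and $\pa_{\bar z}\dot f_z=\pa_z\mu$). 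Finally (v)--(vii): differentiating $\gamma^{\vep\mu}\circ f^{\vep\mu}=f^{\vep\mu}\circ\gamma$ once in $z$ and using that $\gamma^{\vep\mu}$ is M\"obius and $\gamma$ holomorphic gives $((\gamma^{\vep\mu})'\circ f^{\vep\mu})f^{\vep\mu}_z=(f^{\vep\mu}_z\circ\gamma)\gamma'$; taking $\log|\cdot|^2$ and $\pa_\vep|_{\vep=0}$ yields (v); differentiating the same identity once more in $z$, solving first for $\tfrac{(\gamma^{\vep\mu})''}{(\gamma^{\vep\mu})'}\circ f^{\vep\mu}\cdot f^{\vep\mu}_z$ and then for $\tfrac{(\gamma^{\vep\mu})''}{(\gamma^{\vep\mu})'}\circ f^{\vep\mu}$, and evaluating at $\vep=0$, yields (vi) and (vii). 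All of this is routine bookkeeping.

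For (ii) itself I would use the Ahlfors linearization of the Liouville equation. Because $f^{\vep\mu}$ solves the Beltrami equation with coefficient $\vep\mu$, the pull-back metric $(f^{\vep\mu})^{*}(e^{\varphi^{\vep\mu}}|dw|^{2})=e^{u_\vep}|f^{\vep\mu}_z|^{2}|dz+\vep\mu\,d\bar z|^{2}$ has constant curvature $-1$, and $u_\vep$, regarded as a function of the coordinate $\xi=f^{\vep\mu}$, is the hyperbolic density $\varphi^{\vep\mu}$ and therefore satisfies $\pa_\xi\pa_{\bar\xi}u_\vep=\tfrac12 e^{u_\vep}$. Rewriting this in the $z$-coordinate --- expanding $\pa_\xi,\pa_{\bar\xi}$ through the inverse Jacobian of $\xi=f^{\vep\mu}$ --- and collecting the terms linear in $\vep$ about $\vep=0$ (the constant term being the Liouville equation for $\varphi$) produces a linear equation of the shape $(\pa_z\pa_{\bar z}-\tfrac12 e^{\varphi})\dot u=\mu_z\varphi_z+\mu\varphi_{zz}+\tfrac12 e^{\varphi}\dot f_z$ for $\dot u=\pa_\vep u_\vep|_{\vep=0}$. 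A direct check shows that $-\dot f_z$ solves the same equation --- the difference of the two sides collapses to $\pa_z(\mu_z+\mu\varphi_z)=0$, which holds since $\mu$ is harmonic --- so $\dot u+\dot f_z$ lies in the kernel of $\pa_z\pa_{\bar z}-\tfrac12 e^{\varphi}=\tfrac14 e^{\varphi}(\Delta_{\mathrm{hyp}}-2)$. By (v) the quantity $\dot u+\dot f_z$ is moreover $\Gamma$-invariant, hence descends to the finite-area hyperbolic surface $X\sqcup Y$; since $2$ does not lie in the spectrum of the non-negative hyperbolic Laplacian, any bounded solution of $(\Delta_{\mathrm{hyp}}-2)\psi=0$ there must vanish, and we conclude $\dot u=-\dot f_z$, i.e.\ (ii).

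The step I expect to be the real obstacle is this last one: verifying that $\dot u+\dot f_z$ is bounded (equivalently $L^{2}$) near the punctures and ramification points of $X\sqcup Y$ --- and, more basically, that the first-order expansion above is legitimate there --- so that the uniqueness statement for $\Delta_{\mathrm{hyp}}-2$ genuinely applies. At a ramification point there is nothing to check, because the elliptic fixed points of $\Gamma$ lie in $\Omega$ and $\varphi$ is regular at them (as noted after \eqref{eq0630_1}), so they are interior regular points; but at a cusp one needs the asymptotics of $\varphi^{\vep\mu}\circ f^{\vep\mu}$ and of $\dot f$ near the parabolic fixed points, which is exactly what the estimates of Appendix \ref{a1} (as in \cite{1} for the purely parabolic case) are designed to supply. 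Once (ii) is in hand, everything else is formal.
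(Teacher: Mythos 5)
Your proposal is correct in substance but takes a genuinely different route from the paper. The paper's proof of this lemma is essentially a citation: (i) is Ahlfors' classical first-variation formula \cite{3}, items (ii), (iii), (v), (vi) are quoted from pp.~213--214 of \cite{2}, and only (iv) (via the identity $\varphi_z\mu+\mu_z=0$ for harmonic $\mu$) and (vii) (read off from (vi)) are actually worked out. You instead make the lemma self-contained: you take (ii) as the base case, prove it by linearizing the Liouville equation $\pa_w\pa_{\bar w}\varphi^{\vep\mu}=\tfrac12 e^{\varphi^{\vep\mu}}$ along $f^{\vep\mu}$, and derive everything else by chain-rule bookkeeping. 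I checked the linearized equation $(\pa_z\pa_{\bar z}-\tfrac12 e^{\varphi})\dot u=\mu_z\varphi_z+\mu\varphi_{zz}+\tfrac12 e^{\varphi}\dot f_z$, the verification that $-\dot f_z$ solves it (the discrepancy is $-\pa_z(\mu_z+\mu\varphi_z)=-\pa_z\bigl(e^{-\varphi}\pa_z(e^{\varphi}\mu)\bigr)=0$ by harmonicity), the $2\times 2$ inversion giving (iii)--(iv), and the equivariance identity $((\gamma^{\vep\mu})'\circ f^{\vep\mu})f^{\vep\mu}_z=(f^{\vep\mu}_z\circ\gamma)\gamma'$ giving (v)--(vii); all are right, your (iv) and (vii) coincide with the paper's, and your use of (v) to get $\Gamma$-invariance of $\dot u+\dot f_z$ is not circular since (v) is proved independently of (ii). What your route buys is independence from the external references, at the price of essentially reproving Ahlfors' theorem; note you also reverse the logical order, deducing (i) from (ii) where the paper goes the other way.

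Two caveats. First, your uniqueness step is misphrased: you assert that $2$ is not in the spectrum of the \emph{non-negative} hyperbolic Laplacian, but on a finite-area surface with cusps that operator has continuous spectrum $[1/4,\infty)\ni 2$ and can even have embedded eigenvalues there, so the statement is false as written. Your own identity $\pa_z\pa_{\bar z}-\tfrac12 e^{\varphi}=\tfrac14 e^{\varphi}(\Delta_{\mathrm{hyp}}-2)$ forces $\Delta_{\mathrm{hyp}}$ to be the non-positive Laplacian; the operator you actually need is $\Delta_0+\tfrac12$ with $\Delta_0=-e^{-\varphi}\pa_z\pa_{\bar z}\ge 0$ (the same operator as in Section \ref{s:elliptic}), which is bounded below by $\tfrac12$. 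Equivalently, pairing $\pa_z\pa_{\bar z}\psi=\tfrac12 e^{\varphi}\psi$ against $\bar\psi$ over a fundamental domain gives $-\int|\pa_{\bar z}\psi|^2=\tfrac12\int e^{\varphi}|\psi|^2$, forcing $\psi=0$. Second, you rightly flag that the remaining work is to justify the integration by parts and the membership of $\dot u+\dot f_z$ in the relevant function space near the cusps; Lemmas \ref{lemma4} and \ref{lemma5} control $\varphi$, $\dot f_z$, $\dot f_{zz}$ there, but the boundedness of $\dot u$ itself still has to be extracted, so this step should not be waved through.
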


\begin{proof}
The proof of (i) follows from the classical result of Ahlfors \cite{3}. The equalities (ii), (iii), (v) and (vi) are given in p.213-214 of \cite{2}. For (iv), we start with the following equality given in (3.6) of \cite{2}:
\begin{align*}
\left.\frac{\pa}{\pa\vep}\left(\varphi^{\vep\mu}_{\bar{z}}\circ f^{\vep\mu}\bar{f}_{\bar{z}}^{\vep\mu}\right)\right|_{\vep=0}=&-\left(\varphi_z\dot{f}_{\bar{z}}+\dot{f}_{z\bar{z}}\right)\\=&
-\left(\varphi_z\mu+\mu_z\right).
\end{align*}Since $\mu$ is a harmonic Beltrami differential, i.e.,
$$\mu=e^{-\varphi} \bar{q}$$ for some holomorphic quadratic differential $q$, it follows immediately that
\begin{align}\label{eq11_14_5}\varphi_z\mu+\mu_z=0,\end{align}and (iv) is proved. (vii) follows immediately from (vi).
\end{proof}

\vspace{0.5cm}
\begin{lemma}\label{lemma1}We have the following formulas:
\begin{align}\label{eq11_14_2}
\dot{f}_{zz}\circ\gamma\gamma'-\dot{f}_{zz}=\frac{1}{2}\left(\dot{f}_z\circ\gamma+\dot{f}_z\right)\frac{\gamma''}{\gamma'}+\frac{\dot{c}(\gamma)}{c(\gamma)}\frac{\gamma''}{\gamma'} \, ,
\end{align}
\begin{align}\label{eq11_14_4}
\dot{f}_{z\bar{z}}\circ\gamma\overline{\gamma'}-\dot{f}_{z\bar{z}}=\dot{f}_{\bar{z}}\frac{\gamma''}{\gamma'} \, .
\end{align}
\end{lemma}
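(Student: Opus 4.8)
The plan is to derive both identities by differentiating the conjugation relation $\gamma^{\vep\mu}\circ f^{\vep\mu}=f^{\vep\mu}\circ\gamma$ in the variables $z$, $\bar z$ and $\vep$, and then invoking the variational formulas already recorded in Lemma \ref{theorem1} together with the elementary identity \eqref{eq11_14_1}. Differentiating the conjugation relation once in $\vep$ at $\vep=0$ gives the basic relation $\dot f\circ\gamma=\gamma'\dot f+\dot\gamma$, in which $\dot\gamma(z)=\left.\pa_{\vep}\right|_{\vep=0}\gamma^{\vep\mu}(z)$ is holomorphic in $z$ because every $\gamma^{\vep\mu}$ is a M\"obius transformation; everything below is obtained by differentiating this relation further.

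For \eqref{eq11_14_4} I would differentiate the basic relation in $\bar z$. Since $\gamma$, $\gamma'$ and $\dot\gamma$ are holomorphic, the chain rule gives $\dot f_{\bar z}\circ\gamma\cdot\overline{\gamma'}=\gamma'\dot f_{\bar z}$, which is just the automorphy of the Beltrami differential $\mu=\dot f_{\bar z}$. Differentiating this once more in $z$, using that $\overline{\gamma'}$ is antiholomorphic, gives $\dot f_{z\bar z}\circ\gamma\cdot\gamma'\overline{\gamma'}=\gamma''\dot f_{\bar z}+\gamma'\dot f_{z\bar z}$, and dividing by $\gamma'$ yields \eqref{eq11_14_4}. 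This half is elementary: it is simply the $z$-derivative of the automorphy relation for $\mu$.

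The identity \eqref{eq11_14_2} is the substantive one. I would start from Lemma \ref{theorem1}(vi), which already identifies its left-hand side as $\left.\pa_{\vep}\right|_{\vep=0}\bigl(\tfrac{(\gamma^{\vep\mu})''}{(\gamma^{\vep\mu})'}\circ f^{\vep\mu}\cdot f^{\vep\mu}_z\bigr)$. Because $\gamma^{\vep\mu}$ is M\"obius, \eqref{eq11_14_1} lets me rewrite $\tfrac{(\gamma^{\vep\mu})''}{(\gamma^{\vep\mu})'}=-2c(\gamma^{\vep\mu})\bigl((\gamma^{\vep\mu})'\bigr)^{1/2}$, so that the quantity being differentiated becomes the triple product $-2\,c(\gamma^{\vep\mu})\cdot\bigl((\gamma^{\vep\mu})'\circ f^{\vep\mu}\bigr)^{1/2}\cdot f^{\vep\mu}_z$. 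Applying the product rule at $\vep=0$, the three factors contribute $\dot c(\gamma)$, then $\tfrac12(\gamma')^{1/2}(\dot f_z\circ\gamma-\dot f_z)$ (using $\left.\pa_{\vep}\right|_{\vep=0}\log\bigl((\gamma^{\vep\mu})'\circ f^{\vep\mu}\bigr)=\dot f_z\circ\gamma-\dot f_z$, which is the content of Lemma \ref{theorem1}(v), equivalently the $z$-derivative of the basic relation), and then $\dot f_z$. Collecting terms gives $-2(\gamma')^{1/2}\bigl(\dot c(\gamma)+\tfrac12 c(\gamma)(\dot f_z\circ\gamma+\dot f_z)\bigr)$, and substituting $-2c(\gamma)(\gamma')^{1/2}=\gamma''/\gamma'$ back in (again from \eqref{eq11_14_1}) turns this into precisely the right-hand side of \eqref{eq11_14_2}. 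An alternative route, which avoids Lemma \ref{theorem1}(vi), is to differentiate the basic relation twice in $z$, then eliminate $\dot\gamma'$ and $\dot\gamma''$ using the vanishing of the Schwarzian of $\gamma$, i.e.\ $\gamma'''/\gamma'=\tfrac32(\gamma''/\gamma')^2$, and \eqref{eq11_14_1} applied to $\gamma^{\vep\mu}$; this reduces \eqref{eq11_14_2} to the single scalar identity $\tfrac{\dot\gamma''}{\gamma'}-\tfrac32\tfrac{\gamma''}{\gamma'}\tfrac{\dot\gamma'}{\gamma'}=\tfrac{\dot c(\gamma)}{c(\gamma)}\tfrac{\gamma''}{\gamma'}$, which is in turn just $\left.\pa_{\vep}\right|_{\vep=0}$ of \eqref{eq11_14_1}.

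I do not expect any genuine obstacle here; the only points requiring care are bookkeeping ones — fixing one consistent branch of $(\gamma')^{1/2}$ so that \eqref{eq11_14_1} holds with a definite sign for the whole family $\gamma^{\vep\mu}$, and making sure that each formula borrowed from Lemma \ref{theorem1} is evaluated after the correct pre-composition with $f^{\vep\mu}$. In particular, in contrast with the rest of Section \ref{s:variation}, these are purely local identities near an arbitrary point of $\Omega$, so no estimates near the parabolic or elliptic fixed points are involved.
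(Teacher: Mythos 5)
Your proposal is correct and follows essentially the same route as the paper: \eqref{eq11_14_2} is obtained by combining \eqref{eq11_14_1} with parts (v) and (vi)/(vii) of Lemma \ref{theorem1} (you apply the product rule to the factorization $\gamma''/\gamma'=-2c(\gamma)(\gamma')^{1/2}$, while the paper equivalently takes the logarithmic $\vep$-derivative and solves for $\dot{c}(\gamma)/c(\gamma)$), and \eqref{eq11_14_4} is the $z$-derivative of the automorphy relation $\dot{f}_{\bar z}\circ\gamma\,\overline{\gamma'}/\gamma'=\dot{f}_{\bar z}$, exactly as in the paper. The only bookkeeping point you flag — a consistent branch of $(\gamma')^{1/2}$ along the holomorphic family $\gamma^{\vep\mu}$ — is genuine but harmless, and your derivation is otherwise complete.
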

\begin{proof}
Here we give a proof for \eqref{eq11_14_2} that is much simpler than the one given in \cite{2}. From \eqref{eq11_14_1}, we have
\begin{align*}
\log|2c(\gamma)|^2=\log\left|\frac{\gamma''}{\gamma'}\right|^2-\frac{1}{2}\log|\gamma'|^2.
\end{align*}
It follows from (vii) and (v) of Theorem \ref{theorem1} that
\begin{align*}
\frac{\dot{c}(\gamma)}{c(\gamma)}=\frac{\dot{f}_{zz}\circ\gamma\gamma'-\dot{f}_{zz}-\displaystyle\frac{\gamma''}{\gamma'}\dot{f}_z}{\displaystyle\frac{\gamma''}{\gamma'}}-\frac{1}{2}
\left(\dot{f}_z\circ\gamma-\dot{f}_z\right),
\end{align*}from which \eqref{eq11_14_2} follows immediately.
The equation \eqref{eq11_14_4} follows immediately by differentiating the identity
\begin{align*}
\dot{f}_{\bar{z}}\circ\gamma\frac{\overline{\gamma'}}{\gamma'}=\dot{f}_{\bar{z}}
\end{align*}with respect to $z$.
\end{proof}

The Lie derivatives of the smooth family  of $(l, m)$ tensors $\omega$ on $\mathfrak{D}(\Gamma)$ along the vector field  determined by $\mu$   are defined as
\begin{align*}
L_{\mu}\omega=&\left.\frac{\pa}{\pa\vep}\right|_{\vep=0}\omega^{\vep\mu}\circ f^{\vep\mu} (f^{\vep\mu}_z)^l\left(\overline{f^{\vep\mu}_z}\right)^m,\\
L_{\bar{\mu}}\omega=&\left.\frac{\pa}{\pa\bar{\vep}}\right|_{\vep=0}\omega^{\vep\mu}\circ f^{\vep\mu} (f^{\vep\mu}_z)^l\left(\overline{f^{\vep\mu}_z}\right)^m.
\end{align*}
Let
\begin{align*}
\vartheta(z)=&2\varphi_{zz}-\varphi_z^2\\
=&\begin{cases} 2\mathcal{S}\left(J_1^{-1}\right)(z),\quad&\text{if}\quad z\in \Omega_1\\
2\mathcal{S}\left(J_2^{-1}\right)(z),\quad&\text{if}\quad z\in \Omega_2\end{cases},
\end{align*}
where
\begin{align*}
\mathcal{S}(h)=&\left(\frac{h_{zz}}{h_z}\right)_z-\frac{1}{2}\left(\frac{h_{zz}}{h_z}\right)^2\\
=& \frac{h_{zzz}}{h_z}-\frac{3}{2}\left(\frac{h_{zz}}{h_z}\right)^2
\end{align*}is the Schwarzian derivative of $h$.

\vspace{0.5cm}
\begin{theorem}\label{firstvariation}On the deformation space $\mathfrak{D}(\Gamma)$,
\begin{align*}
L_{\mu} S=\iint\limits_{\Gamma\backslash \Omega}\vartheta(z)\mu(z)d^2z.
\end{align*}Equivalently,
\begin{align*}
\pa S=\vartheta.
\end{align*}
\end{theorem}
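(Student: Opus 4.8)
The plan is to differentiate $S=S[\varphi]$ along the holomorphic flow $f^{\vep\mu}$ generated by the harmonic Beltrami differential $\mu$ and to read off the answer as an integral over a fundamental domain. The first step is to arrange the Lie derivative so that only the cochains $\omega,\check\theta,\check u$ get differentiated: under the deformation the marked, normalized quasi-Fuchsian structure — hence the fundamental domain and the chains $F_i,L_i,W_i$ — is carried along by $f^{\vep\mu}$, so that
\[
L_\mu S=\frac i2\Bigl(\langle L_\mu\omega,\,F_1-F_2\rangle-\langle L_\mu\check\theta,\,L_1-L_2\rangle+\langle L_\mu\check u,\,W_1-W_2\rangle\Bigr).
\]
Here I would use $\varphi=\varphi^{\vep\mu}\circ f^{\vep\mu}$ evaluated through Lemma \ref{theorem1}(i)--(iv), together with the Liouville equation $\varphi_{z\bar z}=\tfrac12 e^{\varphi}$ and the harmonicity identity \eqref{eq11_14_5}, to compute $L_\mu\omega$, and Lemma \ref{theorem1}(v)--(vii) and Lemma \ref{lemma1} to compute $L_\mu\check\theta$ and $L_\mu\check u$. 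The structural point is that $L_\mu$ commutes with both $d$ and $\delta$, so the cocycle relations $\delta\omega=d\check\theta$ and $\delta\check\theta=\check u$ pass to the derivatives.

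Next I would repeat the homology/cohomology bookkeeping of Section \ref{sec2} (adapted as in the classical computation of Section \ref{ss:classical}), using $\partial'(F_1-F_2)=\partial''(L_1-L_2)$, $\partial'(W_1-W_2)=V_1-V_2$, the adjunctions $\langle d(\cdot),c\rangle=\langle\cdot,\partial'c\rangle$ and $\langle\delta(\cdot),c\rangle=\langle\cdot,\partial''c\rangle$, and the identity \eqref{eq:0914qF}. After Stokes' theorem and these cancellations the three pairings collapse, and the only surviving bulk term is the $\Gamma$-invariant $2$-form $\vartheta\,\mu\,dz\wedge d\bar z$ integrated over $F_1-F_2$ — the fact that $\vartheta=2\varphi_{zz}-\varphi_z^2=2\mathcal S(J_i^{-1})$ is a genuine holomorphic quadratic differential (not merely a projective connection) is what makes this term $\delta$-closed, so it survives untouched. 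What remains are boundary contributions forced by the failure of $\Sigma_1-\Sigma_2$ to be a cycle, namely $\partial(\Sigma_1-\Sigma_2)=-\sum_{j}(z_j-\bar z_j)\otimes[\tau_j]-(\text{puncture terms})$.

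The bulk term gives the asserted answer: using $dz\wedge d\bar z=-2i\,d^2z$ and the opposite orientation of $F_2=J_1(\bar F)$,
\[
\frac i2\iint_{F_1-F_2}\vartheta\,\mu\,dz\wedge d\bar z=\iint_{\Gamma\backslash\Omega}\vartheta(z)\mu(z)\,d^2z.
\]
It then remains to show the leftover boundary terms vanish. Near the punctures $x_i$ this is handled exactly as in \cite{1} using the estimates of Appendix \ref{a1}: the relevant one-forms are integrable and the regularizing contours contribute nothing in the limit. Near the elliptic fixed points $z_j$ the leftover is, in analogy with the term $\langle\varkappa,(z_j-\bar z_j)\otimes[\tau_j]\rangle$ of Section \ref{ss:classical}, a segment integral $\int_{\bar z_j}^{z_j}$ of an explicit one-form built from $L_\mu\check\theta_{\tau_j}$ and the derivative of the potential $\varkappa$; here one uses that $\varphi$ and $\mu$ extend smoothly across $z_j$ (so no logarithmic divergence occurs), the explicit form \eqref{e:def-rho-lambda2} of $\tau_j$, and Lemma \ref{lemma1}, to check that this integral vanishes. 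Combining, $L_\mu S=\iint_{\Gamma\backslash\Omega}\vartheta\mu\,d^2z$ for every harmonic Beltrami differential $\mu$; since $f^{\vep\mu}$ depends holomorphically on $\vep$ and such $\mu$ span the holomorphic tangent space of $\mathfrak D(\Gamma)$, this is equivalent to $\partial S=\vartheta$.

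I expect the last step — the vanishing of the elliptic contribution — to be the main obstacle. There is no formal reason for it: in the classical action these same fixed points produce the genuinely nonzero Bloch--Wigner terms of Theorem \ref{classical_Liouville}, so one must pass to the local model of $\tau_j$ near $z_j$ and compute the $\mu$-derivative of those terms directly and see that it dies. Heuristically, the local contribution at $z_j$ depends only on the conical angle $2\pi/m_j$, which is fixed along $\mathfrak D(\Gamma)$; but making this precise off the Fuchsian locus is exactly the required local analysis, which is also the reason the analogous care is needed in Section \ref{s:variation} for the second variation.
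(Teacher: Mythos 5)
Your proposal follows essentially the same route as the paper: differentiate the three pairings using Lemmas \ref{theorem1} and \ref{lemma1}, collapse them via Stokes' theorem and the $d$/$\partial'$, $\delta$/$\partial''$ adjunctions to the bulk term $\vartheta\mu\,dz\wedge d\bar z$ plus a leftover supported at the elliptic fixed points, and show that leftover vanishes. The only place you overestimate the difficulty is the final step: in the paper the leftover is the explicit $0$-cochain $l$ paired with $\sum_j(\mathrm{w}_{1j}-\mathrm{w}_{2j})\otimes[\tau_j]$, i.e.\ $l_{\tau_j}$ evaluated at the fixed points, and it vanishes pointwise because $|\tau_j'(\mathrm{w}_{ij})|=1$ and $\tau_j(\mathrm{w}_{ij})=\mathrm{w}_{ij}$ annihilate both the $\log|\gamma'|^2$ factor and the difference $\dot f_z\circ\tau_j-\dot f_z$ in $l_{\tau_j}$ --- no segment integral or local model of the cone angle is needed.
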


\begin{proof}

By definition,
\begin{align*}
L_{\mu}S =&\frac{i}{2}\Bigl(\langle L_{\mu}\omega, F_1-F_2\rangle-\langle L_{\mu}\check{\theta}, L_1-L_2\rangle+\langle L_{\mu}\check{u}, W_1-W_2\rangle\Bigr).
\end{align*}
As in p. 217 of \cite{2}, we find that
\begin{align*}
L_{\mu}\omega = &- \varphi_{\bar{z}}\dot{f}_{zz} dz\wedge d\bar{z}\\
=&\vartheta \mu dz\wedge d\bar{z}-d\xi,
\end{align*}where
\begin{align*}
\xi=&2\varphi_z\dot{f}_{\bar{z}}d\bar{z}-\varphi d \dot{f}_z=-2\dot{f}_{z\bar{z}}d\bar{z}-\varphi d \dot{f}_z.
\end{align*}
Here we have used the equality \eqref{eq11_14_5}.
It follows that
\begin{align*}
\langle L_{\mu}\omega, F_1-F_2\rangle =& \langle \vartheta \mu dz\wedge d\bar{z}, F_1-F_2\rangle -\langle d\xi, F_1-F_2\rangle\\
=&\langle \vartheta \mu dz\wedge d\bar{z}, F_1-F_2\rangle -\langle \delta\xi, L_1-L_2\rangle.
\end{align*} 
The second equality follows from Theorem \ref{t:justification-Stokes} and
$$\partial' (F_1-F_2)=\partial'' (L_1-L_2).$$
Now as in p.218 of \cite{2}, using the equality \eqref{eq11_14_4} we obtain
\begin{align*}
\delta\xi_{\gamma^{-1}}=-2\left(\dot{f}_{z\bar{z}}\circ\gamma\overline{\gamma'}-\dot{f}_{z\bar{z}}\right)d\bar{z}-\varphi d\left(\dot{f}_z\circ\gamma-\dot{f}_z\right)+\log|\gamma'|^2 d\left(\dot{f}_z\circ\gamma\right).
\end{align*}
Now, for the second term $\langle L_{\mu}\check{\theta}, L_1-L_2\rangle$, by (v) of Lemma \ref{theorem1},
\begin{align*}
&L_{\mu}\left(\varphi-\frac{1}{2}\log|\gamma'|^2-2\log2-\log|c(\gamma)|^2\right)\\
=&-\dot{f}_z-\frac{1}{2}\left(\dot{f}_z\circ\gamma-\dot{f}_z\right)
-\frac{\dot{c}(\gamma)}{c(\gamma)}
=-\frac{1}{2}\left(\dot{f}_z\circ\gamma+\dot{f}_z\right)-\frac{\dot{c}(\gamma)}{c(\gamma)}\\
&\hspace{5.3cm} =-\frac{\dot{f}_{zz}\circ\gamma\gamma'-\dot{f}_{zz}}{\displaystyle\frac{\gamma''}{\gamma'}}
\end{align*}
where the last line follows from \eqref{eq11_14_2} and
\begin{align*}
L_{\mu}\left(\frac{\gamma''}{\gamma'}dz-\frac{\overline{\gamma''}}{\overline{\gamma'}}d\bar{z}\right)=&L_{\mu}d\log|\gamma'|^2
=d\left(\dot{f}_z\circ\gamma-\dot{f}_z\right).
\end{align*}
Hence,
\begin{align*}
L_{\mu}\check{\theta}_{\gamma^{-1}}=&\left(\varphi-\frac{1}{2}\log|\gamma'|^2-2\log2-\log|c(\gamma)|^2\right)d\left(\dot{f}_z\circ\gamma-\dot{f}_z\right)\\
&-\left( \frac{1}{2}\left(\dot{f}_z\circ\gamma+\dot{f}_z\right)+\frac{\dot{c}(\gamma)}{c(\gamma)}\right)\left(2\frac{\gamma''}{\gamma'}dz-d\log|\gamma'|^2\right)\\
=&\left(\varphi-\frac{1}{2}\log|\gamma'|^2-2\log2-\log|c(\gamma)|^2\right)d\left(\dot{f}_z\circ\gamma-\dot{f}_z\right)\\&-2\left(\dot{f}_{zz}\circ\gamma\gamma'-\dot{f}_{zz}\right)dz
+\left( \frac{1}{2}\left(\dot{f}_z\circ\gamma+\dot{f}_z\right)+\frac{\dot{c}(\gamma)}{c(\gamma)}\right)d\log|\gamma'|^2.
\end{align*}
Let
$$\chi=\delta\xi+L_{\mu}\check{\theta}.$$It follows that
\begin{align*}
\chi_{\gamma^{-1}}=& dl_{\gamma^{-1}},
\end{align*}
where
\begin{align*}
l_{\gamma^{-1}}=&\frac{1}{2}\log|\gamma'|^2\left(\dot{f}_z\circ\gamma +\dot{f}_z+2\frac{\dot{c}(\gamma)}{c(\gamma)}\right)\\
&-\left(\log|c(\gamma)|^2+2+2\log 2\right)\left(\dot{f}_z\circ\gamma-\dot{f}_z\right).
\end{align*}By Lemma \ref{lemma3} and Lemma \ref{lemma5}, $l_{\kappa_i}$ is well-defined when $z$ approaches the fixed point of $\kappa_i$ on $\mathcal{C}$.
Hence, we have
\begin{align*}
\langle \delta\xi+L_{\mu}\check{\theta}, L_1-L_2\rangle =&\langle dl, L_1-L_2\rangle \\
=&\langle l, \pa'L_1-\pa'L_2\rangle.
\end{align*}
Since
\begin{align*}
L_{\mu}\check{u}=&L_{\mu}\delta \check{\theta}=\delta L_{\mu}\check{\theta}
=\delta \chi
=\delta dl
=d\delta l,
\end{align*}
we have
\begin{align*}
\left\langle L_{\mu}\check{u}, W_1-W_2\right\rangle =&\left\langle\delta l, \pa'(W_1-W_2)\right\rangle\\
=&\left\langle\delta l, V_1-V_2\right\rangle\\
=&\left\langle l, \pa''V_1-\pa''V_2\right\rangle.
\end{align*}
From this it follows that
\begin{align*}
L_{\mu}S= \frac{i}{2}\Bigl(\langle \vartheta \mu dz\wedge d\bar{z}, F_1-F_2\rangle-\langle  l, \pa'L_1-\pa'L_2-\pa''V_1+\pa''V_2\rangle\Bigr).
\end{align*}
Let $v_i\in\mathcal{C}$ be the fixed point of the parabolic generator $\kappa_i$,  and let $\mathrm{w}_{1j}$ and $\mathrm{w}_{2j}$ be the fixed points of the elliptic generator $\tau_j$ in $\Omega_1$ and $\Omega_2$ respectively. Then
\begin{align*}
\pa'L_1-\pa'L_2=&\pa''V_1-\sum_{i=1}^nv_i\otimes [\kappa_i]-\sum_{j=1}^r \mathrm{w}_{1j}\otimes [\tau_j]\\
&-\pa''V_2+\sum_{i=1}^nv_i\otimes [\kappa_i]+\sum_{j=1}^r\mathrm{w}_{2j}\otimes [\tau_j]\\
=&\pa''V_1-\pa''V_2-\sum_{j=1}^r \left(\mathrm{w}_{1j}-\mathrm{w}_{2j}\right)\otimes [\tau_j].
\end{align*}
Hence,
\begin{align*}
L_{\mu}S= & \frac{i}{2}\left(\langle \vartheta \mu dz\wedge d\bar{z}, F_1-F_2\rangle+\left\langle l, \sum_{j=1}^r (\mathrm{w}_{1j}-\mathrm{w}_{2j})\otimes [\tau_j]\right\rangle\right).
\end{align*}
Using Lemma \ref{lemma1} and the fact that $\tau_j(\mathrm{w}_{1j})=\mathrm{w}_{1j}$, $\tau_j(\mathrm{w}_{2j})=\mathrm{w}_{2j}$, one finds that $$\left\langle l, \sum_{j=1}^r (\mathrm{w}_{ij}-\mathrm{w}_{2j})\otimes [\tau_j]\right\rangle=0.$$  Hence, the presence of elliptic fixed points does not contribute additional terms.
This concludes that
\begin{align*}
L_{\mu}S= &\frac{i}{2}\langle \vartheta \mu dz\wedge d\bar{z}, F_1-F_2\rangle,
\end{align*}which is the assertion of the theorem.

 \end{proof}

 Before going to the second variation, let us collect some additional variation formulas.
 \begin{lemma}\label{theorem2}Let $\mu$ be a harmonic Beltrami differential of $\Gamma$. On $\mathfrak{D}(\Gamma)$, we have the following variation formulas:
 \begin{enumerate}
\item[(i)] $\displaystyle \left.\frac{\pa}{\pa\bar{\vep}}\left(\varphi^{\vep\mu}\circ f^{\vep\mu}\right)\right|_{\vep=0} =-\bar{\dot{f}}_{\bar{z}}$,
\item[(ii)] $\displaystyle \left.\frac{\pa}{\pa\bar{\vep}}\left(\varphi^{\vep\mu}_z\circ f^{\vep\mu}f^{\vep\mu}_z\right)\right|_{\vep=0} =0 $,

 \item[(iii)] $\displaystyle \left.\frac{\pa}{\pa\bar{\vep}}\left(\varphi^{\vep\mu}_{zz}\circ f^{\vep\mu}\left(f^{\vep\mu}_z\right)^2\right)\right|_{\vep=0} =-\frac{1}{2}e^{\varphi} \bar{\mu}$.

\end{enumerate}

 \end{lemma}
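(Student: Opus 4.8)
Parts (i) and (ii) will follow directly from Lemma \ref{theorem1}, once we observe that $\varphi^{\vep\mu}\circ f^{\vep\mu}$ is \emph{real-valued} (it is the logarithm of a conformal factor), so that the $\bar\vep$-derivative at $\vep=0$ of any quantity built from it is the complex conjugate of the corresponding $\vep$-derivative. Using part (ii) of Lemma \ref{theorem1},
\begin{align*}
\left.\frac{\pa}{\pa\bar\vep}\left(\varphi^{\vep\mu}\circ f^{\vep\mu}\right)\right|_{\vep=0}
&=\overline{\left.\frac{\pa}{\pa\vep}\left(\varphi^{\vep\mu}\circ f^{\vep\mu}\right)\right|_{\vep=0}}\\
&=\overline{-\dot{f}_z}=-\bar{\dot{f}}_{\bar z},
\end{align*}
which is (i). For (ii), reality of $\varphi^{\vep\mu}$ gives $\overline{\varphi^{\vep\mu}_z}=\varphi^{\vep\mu}_{\bar z}$ and $\overline{f^{\vep\mu}_z}=\bar{f}^{\vep\mu}_{\bar z}$, so that $\overline{\varphi^{\vep\mu}_z\circ f^{\vep\mu}f^{\vep\mu}_z}=\varphi^{\vep\mu}_{\bar z}\circ f^{\vep\mu}\bar{f}^{\vep\mu}_{\bar z}$, which is exactly the quantity in part (iv) of Lemma \ref{theorem1}; conjugating that identity,
\begin{align*}
\left.\frac{\pa}{\pa\bar\vep}\left(\varphi^{\vep\mu}_z\circ f^{\vep\mu}f^{\vep\mu}_z\right)\right|_{\vep=0}
&=\overline{\left.\frac{\pa}{\pa\vep}\left(\varphi^{\vep\mu}_{\bar z}\circ f^{\vep\mu}\bar{f}^{\vep\mu}_{\bar z}\right)\right|_{\vep=0}}=\overline{0}=0.
\end{align*}

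Part (iii) is genuinely of second order in the target variable and has no analogue in Lemma \ref{theorem1}; here the plan is to differentiate the quantity of (ii) once more in $z$ and isolate the $\varphi^{\vep\mu}_{zz}$ term. The chain rule gives
\begin{align*}
\pa_z\!\left(\varphi^{\vep\mu}_z\circ f^{\vep\mu}f^{\vep\mu}_z\right)
=&\ \varphi^{\vep\mu}_{zz}\circ f^{\vep\mu}(f^{\vep\mu}_z)^2
+\varphi^{\vep\mu}_{z\bar z}\circ f^{\vep\mu}\,\overline{f^{\vep\mu}_{\bar z}}\,f^{\vep\mu}_z\\
&+\varphi^{\vep\mu}_z\circ f^{\vep\mu}f^{\vep\mu}_{zz},
\end{align*}
so, using $f^{\vep\mu}_{\bar z}=\vep\mu f^{\vep\mu}_z$ (hence $\overline{f^{\vep\mu}_{\bar z}}f^{\vep\mu}_z=\bar\vep\bar\mu|f^{\vep\mu}_z|^2$),
\begin{align*}
\varphi^{\vep\mu}_{zz}\circ f^{\vep\mu}(f^{\vep\mu}_z)^2
=&\ \pa_z\!\left(\varphi^{\vep\mu}_z\circ f^{\vep\mu}f^{\vep\mu}_z\right)
-\bar\vep\bar\mu|f^{\vep\mu}_z|^2\,\varphi^{\vep\mu}_{z\bar z}\circ f^{\vep\mu}\\
&-\varphi^{\vep\mu}_z\circ f^{\vep\mu}f^{\vep\mu}_{zz}.
\end{align*}
Now apply $\pa/\pa\bar\vep$ at $\vep=0$. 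The first term on the right vanishes, since by (ii) the $\bar\vep$-derivative of $\varphi^{\vep\mu}_z\circ f^{\vep\mu}f^{\vep\mu}_z$ is identically $0$ at $\vep=0$, hence so is its $\pa_z$; the last term vanishes, since $f^0=\id$ forces $f^0_{zz}=0$ while $f^{\vep\mu}_{zz}$ is holomorphic in $\vep$; in the middle term only the factor $\bar\vep$ is hit by $\pa/\pa\bar\vep$, and evaluating the rest at $\vep=0$, where $f^0=\id$ and $\varphi^0=\varphi$, gives $-\bar\mu\,\varphi_{z\bar z}$. Since the hyperbolic metric satisfies the Liouville equation $\varphi_{z\bar z}=\frac12 e^{\varphi}$, this equals $-\frac12 e^{\varphi}\bar\mu$, which is (iii).

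All of these are short computations in the spirit of Lemma \ref{theorem1}; the one point that needs a little care is in (iii), where one should use the result of (ii) as a black box rather than first substituting $\varphi^{\vep\mu}_z\circ f^{\vep\mu}f^{\vep\mu}_z=\pa_z(\varphi^{\vep\mu}\circ f^{\vep\mu})-\bar\vep\bar\mu\,\varphi^{\vep\mu}_{\bar z}\circ f^{\vep\mu}\bar{f}^{\vep\mu}_{\bar z}$ and then differentiating in $z$: that route drags in a $\pa_z\bar\mu$ contribution and forces one to recombine several pieces via $\varphi_z\mu+\mu_z=0$ (equation \eqref{eq11_14_5}) and the Liouville equation, whereas keeping (ii) intact makes the cancellations immediate. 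As in Lemma \ref{theorem1} and Theorem \ref{firstvariation} these are pointwise identities, and since $\varphi$ is regular at the elliptic fixed points (as noted after \eqref{eq0630_1}) while $f^{\vep\mu}$ and $\mu$ are smooth, no singularity issue intervenes.
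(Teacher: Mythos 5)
Your proposal is correct and follows essentially the same route as the paper: (i) by conjugating Lemma \ref{theorem1}(ii) using the reality of $\varphi^{\vep\mu}\circ f^{\vep\mu}$, (ii) from the conjugate of the first-order formulas together with \eqref{eq11_14_5} (the paper differentiates (i) in $z$, you conjugate Lemma \ref{theorem1}(iv) — an immaterial difference), and (iii) by differentiating the identity of (ii) in $z$, observing that the cross term contributes $-\varphi_{z\bar z}\bar\mu$ while the $f^{\vep\mu}_{zz}$ term drops out, and finishing with the Liouville equation $\varphi_{z\bar z}=\tfrac12 e^{\varphi}$ (which you state in the correct form, whereas the paper's proof has a sign typo in the exponent).
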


 \begin{proof}
 (i) is the complex conjugate of the formula (ii) in Lemma \ref{theorem1}. Differentiate this formula with respect to $z$, we have
 \begin{align*}
 \left.\frac{\pa}{\pa\bar{\vep}}\left(\varphi^{\vep\mu}_z\circ f^{\vep\mu}f^{\vep\mu}_z+\varphi^{\vep\mu}_{\bar{z}}\circ f^{\vep\mu}\overline{f^{\vep\mu}}_z\right)\right|_{\vep=0}
 =-\bar{\dot{f}}_{z\bar{z}}.
 \end{align*}Using $\bar{\dot{f}}_{z}=\bar{\mu}$, we have
 \begin{align}\label{eq11_14_6}
 \left.\frac{\pa}{\pa\bar{\vep}}\left(\varphi^{\vep\mu}_z\circ f^{\vep\mu}f^{\vep\mu}_z\right)\right|_{\vep=0} =-\bar{\mu}_{\bar{z}}-\varphi_{\bar{z}}\bar{\mu}=0.
 \end{align}The last equality follows from \eqref{eq11_14_5}. Differentiating \eqref{eq11_14_6} again with respect to $z$, we have
 \begin{align*}
 \left.\frac{\pa}{\pa\bar{\vep}}\left(\varphi^{\vep\mu}_{zz}\circ f^{\vep\mu}\left(f^{\vep\mu}_z\right)^2+\varphi^{\vep\mu}_{z\bar{z}}\circ f^{\vep\mu}f^{\vep\mu}_z\overline{f^{\vep\mu}}_z+\varphi^{\vep\mu}_z\circ f^{\vep\mu}f^{\vep\mu}_{z{z}}\right)\right|_{\vep=0}
 =0.
 \end{align*}
 This gives
 \begin{align*}
 \left.\frac{\pa}{\pa\bar{\vep}}\left(\varphi^{\vep\mu}_{zz}\circ f^{\vep\mu}\left(f^{\vep\mu}_z\right)^2\right)\right|_{\vep=0} =-\varphi_{z\bar{z}}\bar{\mu}.
 \end{align*}
 Then (iii) follows from the fact that  the hyperbolic metric $e^{\varphi(z)}|dz|^2$ satisfies the Liouville equation
  $$\varphi_{z\bar{z}}=\frac{1}{2}e^{-\varphi}.$$
 \end{proof}

\begin{theorem}\label{thm:second-var}
Let $\omega_{WP}$ be the symplectic form of the Weil-Petersson metric on $\mathfrak{D}(\Gamma)$. On $\mathfrak{D}(\Gamma)$,
\begin{align*}
d\vartheta=\bar{\pa}\pa S=-2i\omega_{WP}.
\end{align*}Hence, $-S$ is a K\"ahler potential of the Weil-Petersson metric on $\mathfrak{D}(\Gamma)$.
\end{theorem}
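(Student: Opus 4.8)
The plan is to compute the second variation $\bar{\pa}\pa S$ by applying the antiholomorphic Lie derivative $L_{\bar\mu}$ to the first variation formula $L_\mu S = \iint_{\Gamma\backslash\Omega}\vartheta(z)\mu(z)\,d^2z$ established in Theorem \ref{firstvariation}. Since $\mu$ and $\bar\mu$ are to be regarded as tangent vectors at the origin, and $\mu$ itself does not vary (it is a fixed harmonic Beltrami differential) while $d^2z$ is a fixed area form pulled back appropriately, the whole content is to differentiate $\vartheta = 2\varphi_{zz}-\varphi_z^2$ along the $\bar\mu$-direction. Concretely, I would write
\begin{align*}
L_{\bar\mu}\langle\vartheta,\mu\,dz\wedge d\bar z\rangle = \iint_{\Gamma\backslash\Omega}\left(L_{\bar\mu}\vartheta\right)\mu\,d^2z,
\end{align*}
where $L_{\bar\mu}\vartheta = \left.\frac{\pa}{\pa\bar\vep}\right|_{\vep=0}\bigl(\vartheta^{\vep\mu}\circ f^{\vep\mu}(f^{\vep\mu}_z)^2\bigr)$ because $\vartheta$ is a quadratic differential. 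Using Lemma \ref{theorem2}(ii),(iii), namely $\pa_{\bar\vep}(\varphi^{\vep\mu}_z\circ f^{\vep\mu}f^{\vep\mu}_z)|_0 = 0$ and $\pa_{\bar\vep}(\varphi^{\vep\mu}_{zz}\circ f^{\vep\mu}(f^{\vep\mu}_z)^2)|_0 = -\tfrac12 e^{\varphi}\bar\mu$, I expect the cross term involving $\varphi_z^2$ to vanish and the leading term to give $L_{\bar\mu}\vartheta = -e^{\varphi}\bar\mu$.

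Substituting this back yields
\begin{align*}
L_{\bar\mu}L_\mu S = -\iint_{\Gamma\backslash\Omega}e^{\varphi(z)}\bar\mu(z)\mu(z)\,d^2z = -\iint_{X\sqcup Y}\mu(z)\overline{\mu(z)}\rho(z)\,d^2z = -\|\mu\|_{\mathrm{WP}}^2,
\end{align*}
recognizing $e^{\varphi}=\rho$ and using the Weil-Petersson pairing \eqref{e:def-WP pairing}. Translating the pair $(L_\mu, L_{\bar\mu})$ of Lie derivatives into $(\pa,\bar\pa)$ on $\mathfrak{D}(\Gamma)$ in the standard way (as in \cite{2}), and matching against the convention $\omega_{WP} = \tfrac{i}{2}\langle\cdot,\cdot\rangle_{\mathrm{WP}}$ under which $\bar\pa\pa S$ pairs to $-2i$ times the WP form on the relevant tangent vectors, gives $\bar\pa\pa S = -2i\,\omega_{WP}$. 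Combined with $\pa S = \vartheta$ from Theorem \ref{firstvariation}, we get $d\vartheta = \bar\pa\vartheta = \bar\pa\pa S = -2i\,\omega_{WP}$, and since this is a negative multiple of a positive $(1,1)$-form, $-S$ is a global Kähler potential for the Weil-Petersson metric.

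The main obstacle I anticipate is not the formal computation above but the justification that no boundary contributions arise at the parabolic and elliptic fixed points when commuting $L_{\bar\mu}$ past the integral and when re-expressing things via the cohomological pairing $\langle\cdot,\cdot\rangle$; the chain $\Sigma_1-\Sigma_2$ is not a cycle in the presence of elliptic elements, so one must check — exactly as was done for the first variation in the proof of Theorem \ref{firstvariation}, invoking Lemma \ref{lemma1} and the fixed-point relations $\tau_j(\mathrm{w}_{ij})=\mathrm{w}_{ij}$ together with the $O(1)$ behaviour of $\varphi$ near elliptic points guaranteed by the estimates of Appendix \ref{a1} — that all such terms cancel. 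A secondary technical point is the interchange of $L_{\bar\mu}$ with the integration and with $\delta$, which rests on the same Stokes-type justification (Theorem \ref{t:justification-Stokes}) already used above; once these vanishing statements are in hand, the identity $d\vartheta = -2i\,\omega_{WP}$ and the Kähler potential claim follow immediately.
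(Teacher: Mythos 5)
Your proposal is correct and follows essentially the same route as the paper: apply $L_{\bar{\nu}}$ to the first variation formula, use $L_{\bar{\nu}}\mu=0$ together with Lemma \ref{theorem2}(ii),(iii) to get $L_{\bar{\nu}}\vartheta=-e^{\varphi}\bar{\nu}$, and identify the result with the negative of the Weil--Petersson pairing \eqref{e:def-WP pairing}. The paper's proof is in fact terser than yours, as it treats the boundary and elliptic-fixed-point issues as already settled by the proof of Theorem \ref{firstvariation} rather than revisiting them.
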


\begin{proof}
By the definition of $\langle \, , \, \rangle_{\text{WP}}$ in \eqref{e:def-WP pairing},
we need to show that
\begin{align*}
L_{\bar{\nu}}L_{\mu} S=-\iint\limits_{\Gamma\backslash\Omega}\mu(z)\overline{\nu(z)}\rho(z)d^2z.
\end{align*}
Using the result of Theorem \ref{firstvariation}, the fact that $L_{\bar{\nu}}\mu=0$, as well as (ii) and (iii) in Lemma \ref{theorem2}, we have
\begin{align*}
L_{\bar{\nu}}L_{\mu}S=&  \frac{i}{2}\left\langle L_{\bar{\nu}}\left(2\varphi_{zz}(z)-\varphi_z(z)^2\right) \mu dz\wedge d\bar{z}, F_1-F_2\right\rangle\\
=&-\iint\limits_{\Gamma\backslash\Omega } \mu(z)\overline{\nu(z)}\rho(z)d^2z.
\end{align*}The result follows.

\end{proof}

\section{Renormalized volume of quasi-Fuchsian 3-manifold}\label{s:holography}

The group $\text{PSL}(2, \mathbb{C})$ acts on the hyperbolic three space
$$\mathbb{U}^3=\{Z=(z,t) \,|\,z\in\mathbb{C}, \;t>0\}$$ and its closure. Given
$\displaystyle \gamma=\begin{pmatrix} a & b\\ c & d\end{pmatrix}\in \text{PSL}(2, \mathbb{C})$, let $$J_{\gamma}(Z)=\frac{1}{|cz+d|^2+|ct|^2}.$$Then
$\gamma$ maps $Z$ to $\gamma Z$, where
\begin{align*}
z(\gamma Z)=& \left((az+b)\overline{(cz+d)}+a\bar{c} t^2\right)J_{\gamma}(Z),\\
t(\gamma Z)=& tJ_{\gamma}(Z).
\end{align*}
Given a quasi-Fuchsian group $\Gamma\subset\text{PSL}(2, \mathbb{C})$, let $M=\Gamma\backslash\mathbb{U}^3$ be the quotient 3-manifold, which is called
\emph{quasi-Fuchsian 3-manifold}. The boundary of $M$ is $\pa'M=X\sqcup Y\simeq \Gamma\backslash\Omega$. In this section, we want to define the renormalized volume of $M$ and prove its relation to the Liouville action.

\subsection{Rank one cusps} Let $\zeta$ be  a parabolic fixed point of a quasi-Fuchsian group $\Gamma$ and $Stab_\zeta$ be the parabolic subgroup fixing  $\zeta$. We  call $\zeta$ a rank one or two cusp if $Stab_\zeta$ has one or two generators respectively. From now on we consider only the rank one cusp.

The quotient of the horoball $\mathcal{H}_s=\{(z,t)\in \mathbb{U}^3\, |\, t\geq s\}$ by the rank one parabolic subgroup $Stab_\zeta$ may or may not be embedded in $M=\Gamma\backslash \mathbb{U}^3$. Once it is embedded for some $t=s_0$, it is also embedded for all  larger values of  $s_0$. In this case, the embedded image is the same as $\pi(\mathcal{H}_{s_0})$, where $\pi:\mathbb{U}^3\to M$ denotes the projection map. This subset is homeomorphic to $\{ 0 < |z| \leq 1\}\times \mathbb{R}$.   We refer to this as a \emph{solid cusp tube} and its boundary $\pi(\partial \mathcal{H}_{s_0})$ as a \emph{cusp cylinder}.
A solid cusp tube has an infinite volume and a cusp cylinder has an infinite area.  For a sufficiently large $s$,
$\gamma (\mathcal{H}_s)\cap \mathcal{H}_s =\varnothing$  for  $\gamma\notin Stab_\zeta$, while $\gamma( \mathcal{H}_{s})=\mathcal{H}_s$ for $\gamma\in Stab_\zeta$.

A solid cusp tube is related to two punctures on the
boundary $X\sqcup Y$. There exists a pair of punctures $p_1$ on $X$,  $p_2$ on $Y$, uniquely associated with the conjugacy class of the rank one cusp.
If $c_1$ in $X$, $c_2$ in $Y$ are small circles retractible to $p_1$, $p_2$ respectively, there is a \emph{pairing cylinder} $C$ in $M$, which is a cylinder closed in $M$, and bounded by $c_1$, $c_2$. It bounds a subregion of $M$
called a \emph{solid pairing tube}, which is homeomorphic to $C\times(0,1]$. The solid pairing tubes corresponding to the different conjugacy classes of rank one cusps can be chosen to be mutually disjoint in $M$. The circles $c_1$, $c_2$
can be chosen so that the pair lifts to a round circles in $\Omega$ mutually tangent at the fixed point $\zeta$.
Such a pair of circles is called a \emph{double horocycle} at $\zeta$.

Let us consider a special case when $\Gamma$ is a Fuchsian group.
Suppose $\gamma_0:z\mapsto z+1$ is a generator of a rank one parabolic subgroup.
Then $\{ z\in\mathbb{C}\,|\, \mathrm{Im} (z)=\pm b\}$ for a constant $b>1$ is a double horocycle at the fixed point $\infty$.
Let $P_\pm\subset \mathbb{U}^3$ denote the vertical planes rising from them and consider $Q=\{ (z,t)\in\mathbb{U}^3\, |\, -b\leq \mathrm{Im} (z) \leq b, t>0\}$ they bound. Truncate $Q$ by the half space
$K=\{(z,t)\in \mathbb{U}^3 \, | \, t\geq a\}$ for a constant $a>1$. The relative boundary in $\mathbb{U}^3$ of the resulting tunnel $Q\setminus K$ projects to a pairing cylinder
in $M$. We refer to the section 3.6 of \cite{Mar} for more explanations about rank one cusps.

When the rank one cusp $v_i=\infty$ is associated to the parabolic subgroup generated by
$\kappa_i=\begin{pmatrix} 1 & q_i\\ 0 & 1 \end{pmatrix}$, we have
\begin{align}\label{e:sigma}
\sigma_i^{-1} \kappa_i\sigma_i=\begin{pmatrix} 1 & 1 \\ 0 & 1\end{pmatrix},\qquad \text{where}\quad
\sigma_i=\begin{pmatrix} q_i^{\frac12} & 0\\ 0 & q_i^{-\frac12} \end{pmatrix},
\end{align}
and $\sigma_i$ maps a horoball $\mathcal{H}_s$ onto $\mathcal{H}_{|q_i|s}$.
Hence, for $v_i=\infty$ with the associated parabolic element $\begin{pmatrix} 1 & q_i\\ 0 & 1 \end{pmatrix}$, we define
$\mathcal{H}_{i,\vep}$ to be $\mathcal{H}_{|q_i|/\vep}=\sigma_i(\mathcal{H}_{1/\vep})$.
When the rank one cusp $v_i$ is finite and associated with the parabolic subgroup generated by
\begin{align}\label{e:def-kappa}
\kappa_i=\begin{pmatrix}
1+q_iv_i & -q_i v_i^2\\ q_i & 1-q_iv_i \end{pmatrix},
\end{align}
we have
\begin{align}\label{e:sigma2}
\sigma_i^{-1}\kappa_i \sigma_i = \begin{pmatrix} 1 & -1 \\ 0 & 1 \end{pmatrix}, \qquad \text{where} \quad
\sigma_i=\begin{pmatrix} q_i^{\frac12}v_i & -q_i^{-\frac12}\\ q_i^{\frac12} & 0 \end{pmatrix}.
\end{align}
In this case,
\begin{lemma}\label{l:trans}
For the element $\sigma_i\in PSL(2,\mathbb{C})$ and an open horoball
$\mathcal{H}_s$, the image
$\sigma_i(\mathcal{H}_s)$ is an open horoball tangent to $\mathbb{C}$ at $\sigma_i(\infty)=v_i$ and with radius
$(2|q_i|s)^{-1}$.
\end{lemma}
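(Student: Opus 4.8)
The plan is to verify Lemma \ref{l:trans} by a direct computation, exploiting the explicit form of $\sigma_i$ given in \eqref{e:sigma2}. First I would recall that $\text{PSL}(2,\mathbb{C})$ acts on $\mathbb{U}^3$ by the formulas displayed at the start of Section \ref{s:holography}, so it suffices to track the image of a horoball $\mathcal{H}_s=\{(z,t)\,|\,t\geq s\}$ under $\sigma_i$. Since $\mathcal{H}_s$ is the horoball at $\infty$ of ``height'' $s$, and since Möbius transformations of $\mathbb{U}^3$ send horoballs to horoballs, the image $\sigma_i(\mathcal{H}_s)$ must be a horoball tangent to $\hat{\mathbb{C}}$ at $\sigma_i(\infty)$. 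From $\sigma_i=\ma{q_i^{1/2}v_i}{-q_i^{-1/2}}{q_i^{1/2}}{0}$ one reads off $c(\sigma_i)=q_i^{1/2}$, $d(\sigma_i)=0$, and $\sigma_i(\infty)=\tfrac{q_i^{1/2}v_i}{q_i^{1/2}}=v_i$, which already pins down the point of tangency.

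Next I would pin down the radius. There are two equivalent routes. The clean conceptual one: the horoball at $\infty$ of height $s$ has, for the boundary parabolic translation normalized to $z\mapsto z+1$, the property that its boundary horosphere is at Euclidean height $s$; applying $\sigma_i$ and using $t(\sigma_i Z)=tJ_{\sigma_i}(Z)$ with $J_{\sigma_i}(Z)=(|c(\sigma_i)z+d(\sigma_i)|^2+|c(\sigma_i)t|^2)^{-1}=(|q_i|\,|z|^2+|q_i|t^2)^{-1}$, one computes the supremum of $t(\sigma_i Z)$ over $Z\in\partial\mathcal{H}_s$. Parametrizing $\partial\mathcal{H}_s$ by $(z,s)$ with $z\in\mathbb{C}$, we get $t(\sigma_i(z,s))=\dfrac{s}{|q_i|(|z|^2+s^2)}$, whose maximum over $z\in\mathbb{C}$ is attained at $z=0$ and equals $\dfrac{1}{|q_i|s}$. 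Since the Euclidean diameter of a horoball tangent to $\mathbb{C}$ equals this maximal height, the radius is $\tfrac12\cdot\tfrac{1}{|q_i|s}=(2|q_i|s)^{-1}$, as claimed. A parallel sanity check comes from comparing with \eqref{e:sigma}: there $\sigma_i=\ma{q_i^{1/2}}{0}{0}{q_i^{-1/2}}$ is diagonal and scales heights by $|q_i|$, i.e. sends $\mathcal{H}_s$ to $\mathcal{H}_{|q_i|s}$; the finite-cusp $\sigma_i$ of \eqref{e:sigma2} is, up to the translation sending $0$ to $v_i$, the composition of such a diagonal scaling with the inversion $z\mapsto -1/z$ (which has $\ma{0}{-1}{1}{0}$), and the inversion sends the horoball $\mathcal{H}_{s'}$ at $\infty$ to the horoball of Euclidean radius $\tfrac{1}{2s'}$ tangent at $0$; tracking the constants $s'=|q_i|s$ recovers $(2|q_i|s)^{-1}$.

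I would then assemble these two observations: $\sigma_i(\mathcal{H}_s)$ is a horoball (image of a horoball under an isometry), it is tangent to $\mathbb{C}$ at $\sigma_i(\infty)=v_i$ (from the matrix entries), and its Euclidean radius is $(2|q_i|s)^{-1}$ (from the height computation). This is the full statement of the lemma. I do not expect any genuine obstacle here; the only mild care needed is bookkeeping of the square-root normalizations $q_i^{\pm 1/2}$ in $\sigma_i$ and the factor of $2$ relating the maximal Euclidean height of a horoball to its radius, together with checking that the ``$\sup$'' over the horosphere is indeed attained (at $z=0$) rather than approached. If one prefers to avoid invoking ``Möbius maps send horoballs to horoballs,'' one can instead verify directly that the set $\{t(\sigma_iZ)\,|\,Z\in\mathcal{H}_s\}$ together with the base points traces out exactly the Euclidean ball tangent at $v_i$ of the asserted radius, but this is just the same computation written out in full and is not worth grinding through.
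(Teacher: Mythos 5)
Your proof is correct and follows essentially the same route as the paper: both invoke the standard fact that M\"obius transformations send horoballs to horoballs, read off the tangency point $v_i=\sigma_i(\infty)$, and then extract the radius from the explicit action of $\sigma_i$ on $\mathbb{U}^3$ (the paper checks that the image of a circle on $\partial\mathcal{H}_s$ lies on the sphere $|Z-Z_0|^2=(2|q_i|s)^{-2}$, while you equivalently compute the maximal height $\sup_z t(\sigma_i(z,s))=(|q_i|s)^{-1}$ and halve it). No gaps.
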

\begin{proof}
It is well known that the image by $\sigma_i$ of an open horoball $\mathcal{H}_s$ is a open horoball tangent to $\mathbb{C}$. Hence, it is sufficient to find the tangency point of the horoball and its radius.
For $Z=(z,t)$, the image of $\sigma_i$ in \eqref{e:sigma2} of $Z$ is given by
\begin{align}\label{e:sigmafor}
\sigma_i(Z)=\left( v_i -\frac{1}{q_i}\, \frac{\bar{z}}{|z|^2+t^2}, \quad \frac{1}{|q_i|}\, \frac{t}{|z|^2+t^2} \right).
\end{align}
Using this, one can check that  for a fixed $x_0$ the image of a circle
$$\{(z,t)\in\mathbb{U}^3\, | z=x_0+iy, t=s\}\cup \{\infty\}$$
under $\sigma_i$ is
a circle satisfying $|Z-Z_0|^2= (2|q_i|s)^{-2}$ where $Z_0=(v_i, (2|q_i|s)^{-1})$.  Moreover it is easy to check that this circle is tangent to $\mathbb{C}$ at $v_i$. The claim follows from this.
\end{proof}
Using Lemma \ref{l:trans}, for finite $v_i$ we define $\mathcal{H}_{i,\vep}$ to be
the image by $\sigma_i$ of an open horoball $\mathcal{H}_{1/\vep}$,
which is an open horoball tangent to $\mathbb{C}$ at $\sigma_i(\infty)=v_i$ with  radius  ${\vep}/{2|q_i|}$.

For   $s\gg 0$, we consider $\mathcal{K}_{s}$  defined by
\begin{align*}
\mathcal{K}_s:=\{(z,t)\in\mathbb{U}^3\,|\, z=x+iy, |y| > s \}.
\end{align*}
Note that the boundary of $\mathcal{H}_{s}\cup\mathcal{K}_{s}$ for  $s\gg0$
projects to a pairing cylinder in $M$ if $\infty$ is a rank one cusp.
For $v_i=\infty$, we put $\mathcal{K}_{i,\vep}$ to be $\mathcal{K}_s$ with $s=\vep^{-1/2}$.
For other finite $v_i$, we define $\mathcal{K}_{i,\vep}$ to be the image by $\sigma_i$ of $\mathcal{K}_s$ with $s=\vep^{-1/2}$.

\subsection{Conical singularity}
 For an elliptic element $\tau_j$ of order $m_j$ with fixed points $\mathrm{w}_1$ and $\mathrm{w}_2$, we have that $\tau_j=\rho_j \lambda_{m_j} \rho_j^{-1}$, where
\begin{align}\label{e:def-rho-lambda}
\rho_j=\begin{pmatrix}-\varpi e^{-i\alpha}\mathrm{w}_2 & \varpi e^{i\alpha}\mathrm{w}_1\\ -\varpi e^{-i\alpha}&\varpi e^{i\alpha}  \end{pmatrix}, \qquad
\lambda_{m_j}= \begin{pmatrix}e^{\frac{\pi i}{m_j}} & 0\\ 0 & e^{-\frac{\pi i}{m_j}} \end{pmatrix}
\end{align}
for some $\alpha$ and
$\varpi^2=({\mathrm{w}_1-\mathrm{w}_2})^{-1}$. Note that the $t$-axis   in $\mathbb{U}^3$ is fixed by  $\lambda_{m_j}$  and it is mapped to the geodesic $h_j$ by $\rho_j$
which is a semicircle in $\mathbb{U}^3$
with two end points $\mathrm{w}_1$ and $\mathrm{w}_2$. We consider a neighborhood of the $t$-axis defined by
\begin{align*}
\mathcal{Q}_\vep=\{ (z,t)\in \mathbb{U}^3\, | \, t\geq \vep^{-1} |z| \}.
\end{align*}
Then the neighborhood $\mathcal{Q}_\vep$ is mapped to a neighborhood $\mathcal{Q}_{j,\vep}:=\rho_j(\mathcal{Q}_\vep)$ of the geodesic $h_j$. Note that this subset
$\mathcal{Q}_{j,\vep}$ is invariant under the action of $\tau_j$.

For a sufficiently small $\vep>0$, the quotient of $\mathcal{Q}_{j,\vep}$ by the finite group generated by $\tau_j$ can be embedded into
$M$, and the images corresponding to pairs of elliptic fixed points can be mutually disjoint in $M$. The hyperbolic metric over these regions
has the \emph{conical singularity} of codimension $2$ since the angle around the projection image of the geodesic $h_j$ is $2\pi/m_j$.

 \begin{figure}[h]
\epsfxsize=1\linewidth \epsffile{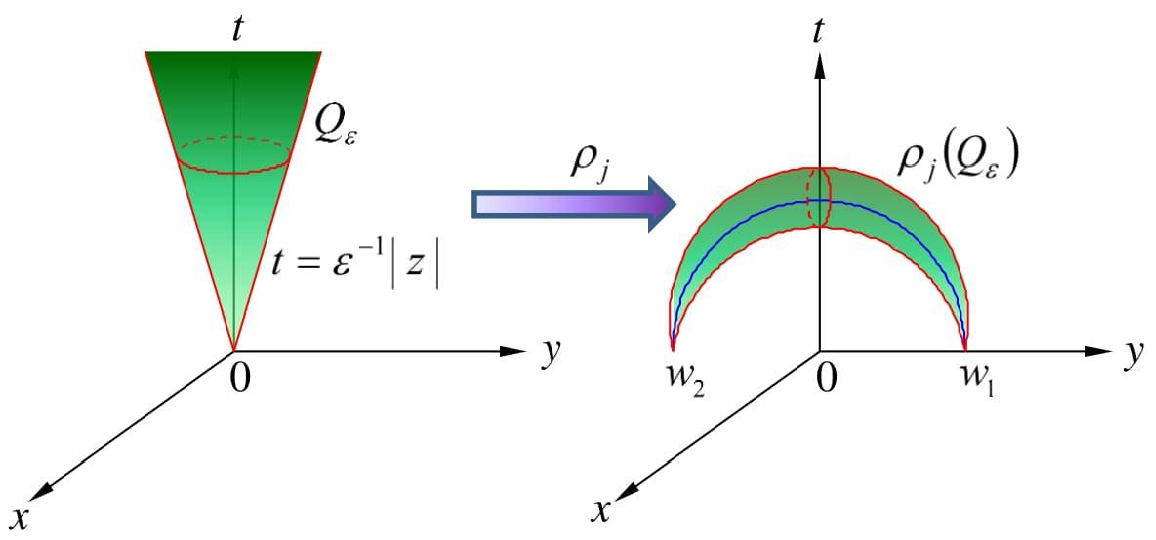}   \caption{\label{f2} $\mathcal{Q}_{\vep}$ and $\mathcal{Q}_{j,\vep}:=\rho_j(\mathcal{Q}_{\vep})$.}\end{figure}

A direct computation gives the following result.

\begin{lemma}\label{l:ell-geo} For sufficiently small $\vep>0$, two points $ (0, \vep)$ and $(0,\vep^{-1})$ in the $t$-axis are mapped by $\rho_j$ to two points in the geodesic $h_j$ whose coordinates are given by
\begin{align*}
\rho_j \left((0,\vep)\right)=&\left(\mathrm{w}_1 +(\mathrm{w}_2-\mathrm{w}_1)\vep^2 +O(\vep^4) ,\,  |\mathrm{w}_1-\mathrm{w}_2| (\vep - \vep^3)+ O(\vep^5)\right),\\
\rho_j \left((0,\vep^{-1})\right)=&\left(\mathrm{w}_2 +(\mathrm{w}_1-\mathrm{w}_2)\vep^2 +O(\vep^4) ,\,  |\mathrm{w}_1-\mathrm{w}_2| (\vep - \vep^3)+ O(\vep^5)\right).
\end{align*}
\end{lemma}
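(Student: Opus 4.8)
The plan is to carry out the direct computation referred to just before the statement, using the explicit formula for the action of $\mathrm{PSL}(2,\mathbb{C})$ on $\mathbb{U}^3$ recalled at the beginning of this section. Writing $\rho_j=\begin{pmatrix} a & b\\ c & d\end{pmatrix}$ with $a=-\varpi e^{-i\alpha}\mathrm{w}_2$, $b=\varpi e^{i\alpha}\mathrm{w}_1$, $c=-\varpi e^{-i\alpha}$, $d=\varpi e^{i\alpha}$ from \eqref{e:def-rho-lambda}, I would first record two elementary facts used throughout: $ad-bc=\varpi^2(\mathrm{w}_1-\mathrm{w}_2)=1$ by the definition $\varpi^2=(\mathrm{w}_1-\mathrm{w}_2)^{-1}$ (so the action formulas apply verbatim), and $|\varpi|^2=|\mathrm{w}_1-\mathrm{w}_2|^{-1}$.

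First I would restrict the action to the $t$-axis. Taking $Z=(0,t)$ we have $cz+d=d=\varpi e^{i\alpha}$, so $J_{\rho_j}((0,t))=\bigl(|d|^2+|c|^2t^2\bigr)^{-1}=\bigl(|\varpi|^2(1+t^2)\bigr)^{-1}$, and a short computation of $z(\rho_j Z)=(b\bar d+a\bar c\,t^2)J_{\rho_j}$ and $t(\rho_j Z)=tJ_{\rho_j}$ gives
\[
\rho_j\bigl((0,t)\bigr)=\left(\frac{\mathrm{w}_1+\mathrm{w}_2 t^2}{1+t^2},\;\frac{|\mathrm{w}_1-\mathrm{w}_2|\,t}{1+t^2}\right).
\]
Here the phase $\alpha$ cancels, and the limits $t\to 0$, $t\to\infty$ recover $\mathrm{w}_1$, $\mathrm{w}_2$, which is the built-in sanity check that $\rho_j$ maps the $t$-axis onto the geodesic $h_j$ with endpoints $\mathrm{w}_1,\mathrm{w}_2$.

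Then I would substitute $t=\vep$ and $t=\vep^{-1}$ into this formula and expand the two rational functions as power series in $\vep^2$, using $(1+\vep^2)^{-1}=1-\vep^2+O(\vep^4)$, and, in the case $t=\vep^{-1}$, first rewriting $(1+\vep^{-2})^{-1}=\vep^2(1+\vep^2)^{-1}$ so that the numerators become $\mathrm{w}_2+\mathrm{w}_1\vep^2$ and $|\mathrm{w}_1-\mathrm{w}_2|\vep$. Reading off the first two terms in each coordinate then yields exactly the stated asymptotics $\rho_j((0,\vep))=(\mathrm{w}_1+(\mathrm{w}_2-\mathrm{w}_1)\vep^2+O(\vep^4),\,|\mathrm{w}_1-\mathrm{w}_2|(\vep-\vep^3)+O(\vep^5))$ and $\rho_j((0,\vep^{-1}))=(\mathrm{w}_2+(\mathrm{w}_1-\mathrm{w}_2)\vep^2+O(\vep^4),\,|\mathrm{w}_1-\mathrm{w}_2|(\vep-\vep^3)+O(\vep^5))$.

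There is essentially no obstacle here: this is precisely the ``direct computation'' announced before the lemma. The only places calling for a moment of care are the bookkeeping items already noted --- verifying $\det\rho_j=1$, tracking the cancellation of the auxiliary phase $\alpha$, and expanding $(1+t^2)^{-1}$ about the correct endpoint in each case ($t\to 0$ for $t=\vep$, $t\to\infty$ for $t=\vep^{-1}$) so that the leading coefficients of the $z$-coordinate come out as $\mathrm{w}_1$ and $\mathrm{w}_2$ respectively rather than being interchanged.
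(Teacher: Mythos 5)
Your computation is correct and is exactly the ``direct computation'' the paper invokes without writing out: restricting the explicit $\mathrm{PSL}(2,\mathbb{C})$-action to the $t$-axis, checking $\det\rho_j=1$ and $|\varpi|^2=|\mathrm{w}_1-\mathrm{w}_2|^{-1}$, and expanding $\frac{\mathrm{w}_1+\mathrm{w}_2t^2}{1+t^2}$ and $\frac{|\mathrm{w}_1-\mathrm{w}_2|t}{1+t^2}$ at $t=\vep$ and $t=\vep^{-1}$ reproduces the stated asymptotics. Nothing is missing.
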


\vspace{0.5cm}
\subsection{Chain complex of the quasi-Fuchsian 3-manifold}
When $\Gamma$ is a Fuchsian group, we can choose a fundamental region $R$ for the action of $\Gamma$ in $\mathbb{U}^3$ in the following way.
$R$ is   bounded by the hemispheres which intersect $\hat{\C}$
along the circles that are orthogonal to $\RR$ and bound the fundamental domain $F$. The fundamental region $R$ is a three-dimensional $CW$-complex
with a single 3-cell given by the interior of $R$. The 2-cells --- the  faces $D_k,
D_k', E_k$, $E_k'$, $k=1, \dots, g$, $G_i$, $G_i'$, $i=1, \ldots, n$, $H_j$, $H_j'$, $j=1, \ldots, r$ are given by the parts of the boundary of $R$
bounded by the intersections of the hemispheres and the arcs $a_k - \bar{a}_k$, $
a_k'-\bar{a}_k'$, $b_k-\bar{b}_k$, $b_k'-\bar{b}_k'$, $c_i-\bar{c}_i$, $c_i'-\bar{c}_i'$, $d_j-\bar{d}_j$, $d_j'-\bar{d}_j'$ respectively. The
1-cells --- the edges, are given by the 1-cells of $F_1-F_2$ and by $e_k^0, e^1_k,
f_k^0, f^1_k$, $s_k$, $k=1, \dots, g$; $g_i^0$, $i=1, \ldots, n$ and $h_j^0, h_j^1$, $j=1, \ldots, r$, defined as follows. The edges $e^0_k$ are
intersections of the faces $E_{k-1}$ and $D_k$ joining the vertices $\bar{a}_k(0)$ to
$a_k(0)$, the edges $e_k^1$ are intersections of the faces $D_k$ and $E_k'$ joining
the vertices $\bar{a}_k(1)$ to $a_k(1)$; $f^0_k=e^0_{k+1}$ are intersections of $E_k$
and $D_{k+1}$ joining $\bar{b}_k(0)$ to $b_k(0)$, $f_k^1$ are intersections of $D_k'$
and $E_k$ joining $\bar{b}_k(1)$ to $b_k(1)$; $s_k$ are intersections of $E'_k$
and $D'_k$ joining $\bar{a}'_k(1)$ to $a'_k(1)$; $g_j^0$ are intersections of $G_{j-1}'$ and $G_j$ joining $\bar{c}_j(0)$ to $c_j(0)$; $h_j^0$ are intersections of $H_{j-1}'$ and $H_j$ joining $\bar{d}_j(0)$ to $d_j(0)$; and $h_j^1$ are intersections of $H_{j}'$ and $H_j$ joining $\bar{d}_j(1)$ to $d_j(1)$. Finally, the 0-cells ---
the vertices, are
given by the vertices of $F$. This property means that the edges of $R$ do not
intersect in $\up^3$. When $\Ga$ is a quasi-Fuchsian group, the fundamental region $R$
is a topological polyhedron homeomorphic to the geodesic polyhedron for the
corresponding Fuchsian group.

As stated in p. 176 of \cite{Kra72} (this proof is given for a Riemann surface with punctures, but its proof works for our case verbatim), one can show that there exists an open set $\mathcal{O}$ in $\overline{\mathbb{U}^3}$ such that $R\subset \mathcal{O}$ and a function $\eta\in C^{\infty}\left(\mathbb{U}^3\cup \Omega\right)$ such that

\begin{enumerate}
\item[(i)] $0<\eta<1$ and $\text{supp}\,\eta\subset \overline{\mathcal{O}}$.

\item[(ii)] For each $Z\in \mathbb{U}^3\cup \Omega$, there is a neighbourhood $U$ of $Z$ and a finite set $\Lambda$ of $\Gamma$ such that $\displaystyle \left.\eta\right|_{\gamma(U)}=0$ for each $\gamma\in\Gamma\setminus \Lambda$.

\item[(iii)] $\displaystyle \sum_{\gamma\in\Gamma}\eta(\gamma Z)=1$ for all $Z\in \mathbb{U}^3\cup\Omega$.

\item[(iv)] The only parabolic fixed points of $\Gamma$ that lie in $\overline{\mathcal{O}}$ are $v_i$, $i=1,\ldots, n$.

\item[(v)] For a fixed constant $\vep_0$, put the region $W_i:=\overline{\mathcal{O}}\cap
\left(\mathcal{H}_{i,\vep_0} \cup  \mathcal{K}_{i,\vep_0}\right)$. Then
$\displaystyle \left.\eta\right|_{\gamma(W_i)}=0$ for $\gamma\in\Gamma\setminus\{ \text{id}, \kappa_i, \kappa_i^{-1}\}$, where $\kappa_i$ is the parabolic transformation with fixed point $v_i$.
\end{enumerate}

\vspace{0.3cm}
For a given $ds^2=e^{\phi(z)}|dz|^2 \in \mathcal{CM}(X\sqcup Y)$, we define
\begin{align*}
\hat{f}(z,t)=\begin{cases} te^{\phi(z)/2}\quad \text{for} \ (z,t)\in \overline{\mathcal{O}} \quad\text{and}\quad t\leq \vep_0/2,\\
 1 \qquad\quad\ \ \text{for}  \ (z,t)\in \overline{\mathcal{O}} \quad\text{and}\quad t \geq \vep_0,
\end{cases}
\end{align*}
and extend it to be a smooth function $\hat{f}$ on $\overline{\mathcal{O}}$. Then let
\begin{align*}
f(Z)=\sum_{\gamma\in\Gamma} \eta(\gamma Z)\hat{f}(\gamma Z),
\end{align*}
which is a $\Gamma$-automorphic function on $\cup_{\gamma\in\Gamma} \gamma R$.
As in \cite{2}, one can show that
\begin{align*}
f(Z)=te^{\phi(z)/2}+O(t^3) \qquad \text{as}\quad t\to 0
\end{align*}uniformly on compact subsets of $ \cup_{\gamma\in\Gamma} \gamma R $. Note that we allow the case
$v_i=\infty$  in the above construction of $f$. Near $\infty$ we have the following fact for $f$.

\begin{lemma}\label{l:no-error}
Over the set $\{ (z,t)\in \mathcal{K}_{i,\vep_0} \,|\, t \leq \vep_0/2\}$ for a sufficiently small $\vep_0>0$,
the level defining function $f$ is given by $f(Z)=te^{\phi(z)/2}$.
\end{lemma}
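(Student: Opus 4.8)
The plan is to trace through the definition of $f(Z)=\sum_{\gamma\in\Gamma}\eta(\gamma Z)\hat f(\gamma Z)$ on the region in question and show that only the trivial contribution survives, and that on that trivial piece $\hat f$ agrees \emph{exactly} with $te^{\phi(z)/2}$ with no $O(t^3)$ error. The point is that near $v_i=\infty$ the situation is rigid: the metric $e^{\phi(z)}|dz|^2$ is invariant under the parabolic $\kappa_i:z\mapsto z+q_i$, so $\phi$ does not depend on $y=\mathrm{Im}\,z$ in the model coordinate; hence the surface $\{t=\text{const}\cdot e^{-\phi(z)/2}\}$ is genuinely a horosurface and no correction term is needed in the extension of $\hat f$.

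First I would invoke property (v) of the partition of unity $\eta$: on $W_i=\overline{\mathcal O}\cap(\mathcal H_{i,\vep_0}\cup\mathcal K_{i,\vep_0})$ one has $\eta|_{\gamma(W_i)}=0$ for all $\gamma\in\Gamma\setminus\{\mathrm{id},\kappa_i,\kappa_i^{-1}\}$. The set $\{(z,t)\in\mathcal K_{i,\vep_0}\,|\,t\le\vep_0/2\}$ is contained in $W_i$ once we observe (as in the construction before Lemma \ref{l:no-error}) that $R\subset\mathcal O$ and the relevant points lie in $\overline{\mathcal O}$. Therefore in the defining sum only $\gamma\in\{\mathrm{id},\kappa_i,\kappa_i^{-1}\}$ (and more generally the powers $\kappa_i^n$ whose translates still meet $\overline{\mathcal O}$) can contribute. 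Next I would use property (iii), $\sum_{\gamma\in\Gamma}\eta(\gamma Z)=1$: restricted to the orbit $\{\kappa_i^nZ\}_{n\in\ZZ}$, which is the only orbit seeing nonzero $\eta$ here, this gives $\sum_n\eta(\kappa_i^nZ)=1$. So $f(Z)=\sum_n\eta(\kappa_i^nZ)\hat f(\kappa_i^nZ)$, and it remains to check that $\hat f(\kappa_i^nZ)$ is \emph{independent of $n$} and equals $te^{\phi(z)/2}$.

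For this last point I would use that $\kappa_i$ acts on $\mathbb U^3$ as $Z=(z,t)\mapsto(z+q_i,t)$ (a Euclidean horizontal translation, since $\kappa_i=\ma{1}{q_i}{0}{1}$ fixes $\infty$), so $t(\kappa_i^nZ)=t$, and the automorphy $\phi\circ\kappa_i+\log|\kappa_i'|^2=\phi$ with $\kappa_i'\equiv1$ gives $\phi(z+nq_i)=\phi(z)$. Hence on the slab $t\le\vep_0/2$, where by definition $\hat f(Z)=te^{\phi(z)/2}$, we get $\hat f(\kappa_i^nZ)=t\,e^{\phi(z+nq_i)/2}=t\,e^{\phi(z)/2}=\hat f(Z)$ for every $n$; note that the entire orbit $\kappa_i^nZ$ stays in the region $t\le\vep_0/2$ so the same branch of the definition of $\hat f$ applies throughout, and \emph{no smooth extension / no $O(t^3)$ correction} is invoked. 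Substituting into $f(Z)=\sum_n\eta(\kappa_i^nZ)\hat f(\kappa_i^nZ)=\hat f(Z)\sum_n\eta(\kappa_i^nZ)=te^{\phi(z)/2}$ finishes the proof.

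The main obstacle is purely bookkeeping: one must be careful that the translated points $\kappa_i^nZ$ do not drift into the region $\vep_0/2\le t\le\vep_0$ where $\hat f$ was only an extension (they do not, since $\kappa_i$ preserves $t$), and that property (v) legitimately applies — i.e. that $\{(z,t)\in\mathcal K_{i,\vep_0}\,|\,t\le\vep_0/2\}\cap\overline{\mathcal O}$ is what contributes, with points outside $\overline{\mathcal O}$ carrying $\eta=0$ by property (i). Once the geometry of $\mathcal K_{i,\vep_0}$ (the pair of vertical slabs $|y|>\vep_0^{-1/2}$, transported by $\sigma_i$ in the finite-$v_i$ case) is matched against the support condition, the identity is immediate. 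For finite $v_i$ one reduces to the $v_i=\infty$ case by conjugating with $\sigma_i$ of \eqref{e:sigma2}, using that $\sigma_i$ is an isometry of $\mathbb U^3$ and that $\phi$ transforms with the correct automorphy factor.
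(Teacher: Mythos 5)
Your proof is correct and follows essentially the same route as the paper: property (v) reduces the sum defining $f$ to $\gamma\in\{\mathrm{id},\kappa_i,\kappa_i^{-1}\}$, the automorphy of $\phi$ together with the fact that $\kappa_i$ preserves the $t$-coordinate shows that $te^{\phi(z)/2}$ is $\kappa_i^{\pm1}$-invariant, and property (iii) collapses the $\eta$-weights to $1$. (One minor slip in your motivational opening: invariance of $e^{\phi(z)}|dz|^2$ under $z\mapsto z+q_i$ gives periodicity of $\phi$ in the direction of $q_i$, not independence of $\mathrm{Im}\,z$ --- but this claim is never used in your actual argument, which correctly relies only on $\phi(z+nq_i)=\phi(z)$.)
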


\begin{proof}
First, let us observe that
\begin{align*}
z(\gamma Z)= \frac{az+b}{d}, \quad t(\gamma Z)=\frac{t}{|d|^2}
\end{align*}
for a parabolic element $\gamma=\begin{pmatrix} a & b\\ 0& d\end{pmatrix}$.
Hence, for the parabolic element $\gamma=\kappa_i^{\pm 1}$,
\begin{align*}
t(\gamma Z) e^{\phi(\gamma Z)/2}=t |d|^{-2} e^{\phi(\gamma z)/2} = t e^{\phi(z)/2},
\end{align*}
where we used $ e^{\phi(\gamma z)/2} = e^{\phi(z)/2}|cz+d|^2$ at the second equality.
Over the region $W_i\cap \{ (z,t)\in \mathcal{K}_{i,\vep_0} \,|\, t \leq \vep_0/2\}$, we have  $\eta|_{\gamma(W_i)}=0$ for $\gamma\in\Gamma\setminus \{\mathrm{id},\kappa_i,\kappa_i^{-1}\}$.
Hence, for $Z\in\{ (z,t)\in \mathcal{K}_{i,\vep_0} \,|\, t \leq \vep_0/2\}$,
\begin{align*}
f(Z)
=&\left(\eta\, t\, e^{\phi/2}\right)(Z) +\left(\eta\, t\, e^{\phi/2}\right)(\kappa_i Z) +\left(\eta\,  t\, e^{\phi/2}\right)(\kappa_i^{-1}Z)\\
=&\left( \eta(Z)  +\eta(\kappa_iZ)  +\eta(\kappa_i^{-1}Z)\right)\, t \,e^{\phi(z)/2}= te^{\phi(z)/2}.
\end{align*}
\end{proof}

Due to the rank one cusps and conical singular lines, some arguments (for instance, the Stokes' Theorem)
do not work properly without truncating some parts near rank one cusps and conical singular lines.
Hence, we truncate a noncompact domain $R$ using the level defining function $f$ near the bottom boundary of $R$ in addition to
removing parts near rank one cusps and conical singularity lines as follows:
\begin{align*}
R_{\vep}=R\cap \{f\geq \vep\}\setminus  \left(\cup_{i=1}^n  \mathcal{P}_{i,\vep} \cup_{j=1}^r  \mathcal{Q}_{j,\vep}\right),
\end{align*}
where $\mathcal{P}_{i,\vep}:=\mathcal{H}_{i,\vep}\cup \mathcal{K}_{i,\vep}$.
By  construction, we have
\begin{align*}
\gamma(R_\vep)=\gamma(R)\cap \gamma( \{f\geq \vep\} ) \setminus  \gamma\left(\cup_{i=1}^n  \mathcal{P}_{i,\vep} \cup_{j=1}^r
\mathcal{Q}_{j,\vep}\right)
\end{align*}
for $\gamma\in\Gamma$. The boundary of $R_\vep$ produced by truncations consists of
\begin{gather*}
 -F_{\vep}=\pa' R_{\vep}\cap\{f=\vep\},\quad T^c_{i,\vep}:=\pa' R_{\vep}\cap \overline{\mathcal{P}}_{i,\vep}, \quad
T^e_{j,\vep}:=\pa'R_\vep \cap \overline{\mathcal{Q}}_{j,\vep}.
\end{gather*}
Let us note  that $T^c_{i,\vep}$ projects to a subset in a pairing cylinder of the $i$-th rank one cusp corresponding to
$v_i$. Hence the boundary of $R_\vep$ is given by
\begin{align*}
\pa' R_\vep=-F_\vep&+\sum_{k=1}^g \left(D_{k,\vep}-D_{k,\vep}'-E_{k,\vep}+E_{k,\vep}'\right)\\ &+\sum_{i=1}^n\left(G_{i,\vep}-G_{i,\vep}'\right)+\sum_{i=1}^n T^c_{i,\vep}\\
\ &+\sum_{j=1}^r\left(H_{j,\vep}-H_{j,\vep}'\right)+\sum_{j=1}^r T^e_{j,\vep}.
\end{align*}
Note that $D_{k,\vep}$, $D_{k,\vep}'$, $E_{k,\vep}$, $E_{k,\vep}'$ are truncated by removing parts
$\{f<\vep\}$, and that $G_{i,\vep}$, $G_{i,\vep}'$ are truncated by removing parts $\{f<\vep\}$ and $\mathcal{P}_{i,\vep}$,
$H_{j,\vep}$, $H_{j,\vep}'$ are truncated by removing parts $\{f<\vep\}$ and
$\mathcal{Q}_{j,\vep}$ respectively. Hence $G_{i,\vep}$, $G_{i,\vep}'$ have a common boundary
with $T^c_{i,\vep}$, and  $H_{j,\vep}$, $H_{j,\vep}'$ have a common boundary
with $T^e_{j,\vep}$ respectively. Put $g^1_{i,\vep}:=G_{i,\vep}\cap T^c_{i,\vep}$ so that $\kappa_i^{-1} g^1_{i,\vep}=
G_{i,\vep}'\cap T^c_{i,\vep}$, and  put $h^1_{j,\vep}:=H_{j,\vep}\cap T^e_{j,\vep}$ so that $\tau_j^{-1} h^1_{j,\vep}=
H_{j,\vep}'\cap T^e_{j,\vep}$.
Let
$$B_\vep=\sum_{k=1}^g\left(E_{k,\vep}\otimes [ \beta_k]-D_{k,\vep}\otimes [\alpha_k]\right)-\sum_{i=1}^n G_{i,\vep}\otimes [\kappa_i]
-\sum_{j=1}^r H_{j,\vep}\otimes [\tau_j] $$
Then we have
\begin{align*}
\pa' R_\vep= -F_\vep+  T^c_{\vep}+ T^e_{\vep}+\pa'' B_\vep,
\end{align*}where\begin{align*}
&T^c_\vep=\sum_{i=1}^n T^c_{i,\vep}, \qquad
  T^e_\vep=\sum_{j=1}^r T^e_{j,\vep}.
\end{align*}
Let $L_\vep$ denote the truncated object for $L$. Then
\begin{align*}
\pa'B_\vep=L_\vep&-\sum_{k=1}^g \left(\left(f_{k,\vep}^1-f_{k,\vep}^0\right)\otimes [\beta_k]-\left(e_{k,\vep}^1-e_{k,\vep}^0\right)\otimes [\alpha_k]\right)\\
&-\sum_{i=1}^n\left(g_{i,\vep}^0\otimes [\kappa_i]\right)+\sum_{i=1}^n\left(g_{i,\vep}^1\otimes [\kappa_i]\right)\\
&-\sum_{j=1}^r\left(h_{j,\vep}^0\otimes [\tau_j]\right)+\sum_{j=1}^r\left(h_{j,\vep}^1\otimes [\tau_j]\right).
\end{align*}
Let
\begin{align*}
E_\vep=&\sum_{k=1}^g \left(e_{k,\vep}^0\otimes[\alpha_k|\beta_k] -
f_{k,\vep}^0\otimes[\beta_k|\alpha_k] +
f_{k,\vep}^0\otimes\left[\gamma_k^{-1}|\alpha_k\beta_k\right]\right)\\ & \;\;\;
-\sum_{k=1}^{g-1}
f_{g,\vep}^0\otimes\left[\gamma_g^{-1}\ldots\gamma_{k+1}^{-1}|\gamma_k^{-1}\right]+\sum_{i=1}^{n-1}g_{1,\vep}^0\otimes[\kappa_1
\cdots\kappa_i|\kappa_{i+1}]\\
&\;\;\;\;\;\;+\sum_{j=1}^{r-1}g_{1,\vep}^0\otimes[\kappa_1
\cdots\kappa_n\tau_1\cdots \tau_j|\tau_{j+1}].
\end{align*}
Then
\begin{align*}
\pa'B_\vep=L_\vep +L^c_\vep+L^e_\vep  - \pa'' E_\vep,
\end{align*}where
\begin{align*}
 L^c_\vep=\sum_{i=1}^n g_{i,\vep}^1\otimes[\kappa_i],  \qquad L^e_\vep=\sum_{j=1}^r h_{j,\vep}^1\otimes[\tau_j].
\end{align*}
We also have
$$\pa' E_\vep=V_\vep,$$
where
\begin{equation*} \label{V}\begin{split}
{V_\vep} = \sum_{k=1}^g& \Big( \,  ({a}_{k,\vep}(0)-\bar{a}_{k,\vep}(0))\otimes[{\alpha}_k|{\beta}_k]\\
 &\ -({b}_{k,\vep}(0)-\bar{b}_{k,\vep}(0))\otimes\left[{\beta}_k |{\alpha}_k\right]\\
&\ +({b}_{k,\vep}(0)-\bar{b}_{k,\vep}(0))\otimes\left[{\gamma}_k^{-1}|{\alpha}_k{\beta}_k\right]\, \Big)\\
 -\sum_{k=1}^{g-1}&\
({b}_{g,\vep}(0)-\bar{b}_{g,\vep}(0))\otimes\left[{\gamma}_g^{-1}\ldots{\gamma}_{k+1}^{-1}|{\gamma}_k^{-1}\right]\\
 +\sum_{i=1}^{n-1}&\ ({c}_{1,\vep}(0)-\bar{c}_{1,\vep}(0))\otimes\left[{\kappa}_1
\cdots{\kappa}_i|{\kappa}_{i+1}\right]\\
+\sum_{j=1}^{r-1}&\ ({c}_{1,\vep}(0)-\bar{c}_{1,\vep}(0))\otimes\left[{\kappa}_1
\cdots{\kappa}_n \tau_1\cdots \tau_j|{\tau}_{j+1}\right].
\end{split}
\end{equation*}

\subsection{Renormalized volume and holography}\label{s1}

As in the section 5 of \cite{2}, let
$$w_3=\frac{1}{t^3}dx\wedge dy\wedge dt$$ be the hyperbolic volume form in $\mathbb{U}^3$, and define

\begin{align*}
w_2=&-\frac{i}{4t^2}dz\wedge d\bar{z},\\
(w_1)_{\gamma^{-1}}=&-\frac{i}{8}\log\left(|ct|^2J_{\gamma}(Z)\right)\left(\frac{\gamma''}{\gamma'}dz-\frac{\overline{\gamma''}}{\overline{\gamma'}}d\bar{z}\right),
\end{align*}so that
\begin{align*}
w_3=dw_2, \quad dw_1=\delta w_2.
\end{align*}
Here $c=c(\gamma)$ and   $(w_1)_{\gamma^{-1}}=0$ if $c(\gamma)=0$.
Since $\mathbb{U}^3$ is simply connected and $\delta w_1$ is closed, there exists $w_0$ such that $\delta w_1=dw_0$. As in \cite{2},  we can choose $w_0$   so that $\delta w_0=0$.

Denote by $V_{\vep}[\phi]$ the hyperbolic volume of $M_{\vep}$ --- the truncated 3-manifold of $\Gamma\backslash\mathbb{U}^3$ obtained by identifying appropriate faces and edges of $R_{\vep}$.  Then
\begin{align}\label{e:volume-comp}
V_{\vep}[\phi]=&\langle w_3, R_{\vep}\rangle \notag \\
=&\langle dw_2, R_{\vep}\rangle \notag \\
=&\langle w_2, -F_{\vep}+T_{\vep}^c+T_{\vep}^e+\pa''B_{\vep}\rangle \notag \\
=&-\langle w_2, F_{\vep}\rangle +\langle w_2, T^c_\vep \rangle +\langle w_2, T^e_\vep \rangle+\langle \delta w_2, B_{\vep}\rangle \notag \\
=&-\langle w_2, F_{\vep}\rangle +\langle w_2, T^c_\vep \rangle +\langle w_2, T^e_\vep \rangle+\langle w_1, L_{\vep}+L_{\vep}^c+L_{\vep}^e-\pa''E_{\vep}\rangle \notag\\
=&-\langle w_2, F_{\vep}\rangle +\langle w_1, L_{\vep}\rangle -\langle w_0, V_{\vep}\rangle \\
&+\langle w_2, T^c_\vep \rangle + \langle w_1, L^c_\vep \rangle+\langle w_2, T^e_\vep \rangle + \langle w_1, L^e_\vep \rangle.\notag
\end{align}
Denote by $A_{\vep}[\phi]$ the area of $\pa'M_{\vep}$ in the induced metric. When $\vep\rightarrow 0^+$, as in the proof of Theorem 5.1 of \cite{2},
one can show that
\begin{align*}
\langle w_2, F_{\vep}\rangle =&-\frac{1}{2}A_{\vep}[\phi] +\frac{i}{8}\langle \omega[\phi], F\rangle -\frac{1}{4}\iint\limits_{F}e^{\phi(z)}d^2z+o(1),\\
\langle w_1, L_{\vep}\rangle =&2\pi\chi(X)\left(\log 2-\log\vep\right)+\frac{i}{8}\langle \check{\theta}[\phi], L\rangle+o(1),\\
\langle w_0, V_{\vep}\rangle=& \frac{i}{8}\langle \check{u}, W\rangle +o(1).
\end{align*} Here $\chi(X)$ is defined in \eqref{e:def-chi}.
Combining \eqref{e:volume-comp} and these equalities,
\begin{align}
V_{\vep}[\phi]-\frac{1}{2}A_{\vep}[\phi]=&\, 2\pi\chi(X)\left(\log 2-\log\vep\right) \notag \\
&\, -\frac{i}{8}\left(\langle \omega[\phi], F\rangle -\langle \check{\theta}[\phi], L\rangle+\langle \check{u}, W\rangle \right)+\frac{1}{4}\iint\limits_{F}e^{\phi(z)}d^2z \notag \\
&\, +\langle w_2, T^c_\vep \rangle + \langle w_1, L^c_\vep \rangle+\langle w_2, T^e_\vep \rangle + \langle w_1, L^e_\vep \rangle\notag +o(1)\\
=&\, 2\pi\chi(X)\left(\log 2-\log\vep \right)-\frac14 \left( S[\phi]-\iint\limits_{F}e^{\phi(z)}d^2z\right)\\
&\, +\langle w_2, T^c_\vep \rangle + \langle w_1, L^c_\vep \rangle+\langle w_2, T^e_\vep \rangle + \langle w_1, L^e_\vep \rangle +o(1). \notag
\end{align}
In the following, we are going to compute the terms $\langle w_2, T^c_\vep \rangle$, $\langle w_1, L^c_\vep \rangle$ from rank one cusps,
and the terms  $\langle w_2, T^e_\vep \rangle$, $\langle w_1, L^e_\vep \rangle$ from the conical singularities.

\subsubsection{Computation of $\langle w_2, T^c_\vep \rangle$}
Here we deal with only the case when $v_i$ is finite since the other case is easier.
For  $\gamma\in \mathrm{PSL}(2,\mathbb{C})$,
\begin{align*}
(\delta w_2)_{\gamma^{-1}} =& \gamma^* w_2- w_2\\
=&\frac{i}{2} J_\gamma(Z) \left( |c|^2 dz\wedge d\bar{z} -\frac{c(\overline{cz+d})}{t} dz\wedge dt + \frac{\bar{c}(cz+d)}{t} d\bar{z}\wedge dt \right).
\end{align*}
Using this with $\gamma=\sigma_i$, we have
\begin{align*}
\langle w_2, T^c_{i,\vep} \rangle  =& \langle \sigma^*_i w_{2}, T^c_{0,\vep} \rangle
= \langle (\delta w_2)_{\sigma_i^{-1}} +w_2, T^c_{0,\vep} \rangle\\
=&\frac{i}{2}\Big{ \langle} \left(\frac{1}{|z|^2+t^2} -\frac{1}{2t^2}\right) dz\wedge d\bar{z}- \frac{\bar{z}dz-zd\bar{z}}{|z|^2+t^2}\wedge \frac{dt}{t}, T^c_{0,\vep} \Big{\rangle},
\end{align*}
where $T^c_{0,\vep}=\hat{T}_\vep \cup \check{T}_\vep$,
\begin{align*}
\hat{T}_{\vep}:=&\{ (z, t)\, | \, z=x+iy, 0\leq x \leq 1,  |y| \leq \frac{1}{\sqrt{\vep}} , t=\frac1{\vep}\}\\
\check{T}_{\vep}:=& \{(z,t)\, | \, z=x+iy, 0\leq x \leq 1, |y|=\pm \frac{1}{\sqrt{\vep}}, \sqrt{\vep} \leq t \leq \frac{1}{\vep} \}.
\end{align*}
Note that we used Lemma \ref{l:no-error} to determine the domain $\check{T}_\vep$. Then,
\begin{align*}
&\Big{\vert} \langle w_2, T^c_{i,\vep} \rangle \Big{\vert} \\
&\leq \Big{\vert}\frac{i}{2} \iint_{\hat{T}_\vep}  \left(\frac{1}{|z|^2+t^2}-\frac{1}{2t^2}\right) \, dz\wedge d\bar{z} \Big{\vert}
+\Big{\vert} \frac{i}{2} \iint_{\check{T}_\vep} \frac{1}{|z|^2 +t^2}\, 2i ydx \wedge \frac{dt}{t} \Big{\vert}\\
& \leq c ( \vep^{3/2} + \vep^{1/2}\log \vep )
\end{align*}
for a constant $c>0$. Hence, $\langle w_2, T^c_{\vep} \rangle$ does not contribute as $\vep\to 0^+$.

\subsubsection{Computation of  $\langle w_2, T^e_\vep \rangle$}
Recall that $\rho$ in \eqref{e:def-rho-lambda}
maps the $t$-axes to the rotation axes of the elliptic element $\tau$. For our purpose, we may assume that $\alpha=0$ in the expression of $\rho$ in \eqref{e:def-rho-lambda}. For such a $\rho$,
\begin{align*}
J_\rho(Z)= \frac{|\mathrm{w}_1-\mathrm{w}_2|}{ |z- 1|^2+t^2},   \quad
-c(\overline{cz+d})= \frac{\bar{z}-1}{|\mathrm{w}_1-\mathrm{w}_2|} .
\end{align*}
Using these, we find that
\begin{align*}
\langle w_2, T^e_{i,\vep} \rangle  =& \langle \rho^*_{i} w_{2}, T^e_{0,\vep} \rangle
\\=& \langle (\delta w_2)_{\rho_{i}^{-1}} +w_2, T^e_{0,\vep} \rangle\\
=&  \frac{i}{2}\Big{\langle}\left(\frac{1}{|z-1|^2+t^2} -\frac{1}{2t^2}\right) dz\wedge d\bar{z}, T^e_{0,\vep} \Big{\rangle}\\
&+\frac{i}{2}\Big{\langle} \left(\frac{ (\bar{z} -1) dz}{ |z-1|^2+t^2} -  \frac{ ({z} -1) d\bar{z}}{ |z-1|^2+t^2} \right) \wedge \frac{dt}{t}\, , \ T^e_{0,\vep} \Big{\rangle}.
\end{align*}
Here $T^e_{0,\vep}$ is a subset in the surface $\{(z,t)\, | \, t=\vep^{-1}|z| \}$ with $0\leq \mathrm{arg}(z) \leq \frac{2\pi}{m}$ and $a\vep \leq  t \leq b\vep^{-1}$ for some $a>0$, $b>0$ by Lemma \ref{l:ell-geo}.
Hence, we have
\begin{align*}
&\Big{\vert} \Big{\langle} \left(\frac{1}{|z-1|^2+t^2} -\frac{1}{2t^2}\right) dz\wedge d\bar{z}\, , \ T^e_{0,\vep} \Big{\rangle} \Big{\vert} \leq c_1 \vep^2\log\vep,\\
&\Big{\vert} \Big{\langle} \left(\frac{ (\bar{z} -1) dz}{ |z-1|^2+t^2} -  \frac{ ({z} -1) d\bar{z}}{ |z-1|^2+t^2} \right) \wedge \frac{dt}{t}\, ,\  T^e_{0,\vep} \Big{\rangle}\Big{\vert}\\
& \leq \ \Big{\vert}\Big{\langle} \frac12\, \frac{ (\bar{z}-1)dz -(z-1)d\bar{z}}{|z-1|t} \wedge \frac{dt}{t},\, \ T^e_{0,\vep} \big{\rangle}\Big{\vert}
\leq c_2 \vep \log\vep
\end{align*}
for some $c_1>0, c_2>0$.
Hence, as $\vep\to 0^+$ the term $\langle w_2, T^e_{\vep} \rangle$ does not contribute.

\subsubsection{Computation of $\langle w_1, L^c\rangle$}
We only deal with the case when $v_i$ is finite since the concerning term is trivial when $v_i=\infty$ by the definition of $(w_1)_{\gamma^{-1}}$.
Recalling $L^c_\vep=\sum_{i=1}^n g_{i,\vep}^1\otimes[\kappa_i]$, we consider
\begin{align*}
&\langle w_1, g_{i,\vep}^1\otimes [\kappa_i] \rangle = \langle (w_1)_{\kappa_i}, g_{i,\vep}^{1} \rangle = \langle \sigma^*_i (w_1)_{\kappa_i}, g_{\vep} \rangle\\
=& \big{\langle} (\delta w_1)_{\sigma_i^{-1},\kappa_i} +(w_1)_{\sigma^{-1}_i\kappa_i} - (w_1)_{\sigma_i^{-1}}\, ,\  g_{\vep} \big{\rangle}.
\end{align*}
Here we may assume that $g_{\vep}=\hat{g}_{\vep}\cup \check{g}_{\vep}$
where
\begin{align*}
\hat{g}_{\vep}=&\left\{\, (z,t)\in\mathbb{U}^3\, | \, z=1+iy, \, | y|
\leq \frac{1}{\sqrt{\vep}}, t=\frac{1}{\vep}\, \right\},\\
\check{g}_{\vep}=&\left\{\, (z,t)\in\mathbb{U}^3\, | \, z=1+iy, \,  y
= \pm \frac{1}{\sqrt{\vep}}, \  \  \sqrt{\vep} \leq t \leq \frac{1}{\vep} \, \right\}
\end{align*}
by Lemma \ref{l:no-error}.
For $(\delta w_1)_{\sigma_i^{-1},\kappa_i}$, we use (5.11) in \cite{2} to obtain
\begin{align*}
&\ (\delta w_1)_{\sigma_i^{-1},\kappa_i}\\
=&\ \frac{i}{4} \left( -\log (|z|^2+t^2) \right)\, \left( d \log | z+1|^2 \right)\\
&-\frac{i}4 \left(-\log(|z+1|^2+t^2)\right)\, \left( d\log |z|^2 \right)\\
&-\frac{i}4\left( \log t^2 -\log (|z|^2+t^2)-\log(|z+1|^2+t^2)\right)
\left(R(z,t)+\bar{R}(z,t) \right),
\end{align*}
where $R(z,t)= (|z|^2+t^2)^{-2} (z+1)^{-1}\left( -\bar{z}|z|^2 dz+ zt^2 d\bar{z}+2t|z|^2 dt\right)$.
Combining this and (5.10) in \cite{2},
\begin{align*}
&(\delta w_1)_{\sigma_i^{-1},\kappa_i} +(w_1)_{\sigma^{-1}_i\kappa_i} - (w_1)_{\sigma_i^{-1}}\\
=&\frac{i}{4} \Phi(z,t) \big(d\log |z +1|^2 - d\log|z|^2 -R(z,t)-\bar{R}(z,t)\big),
\end{align*}
where
$$
\Phi(z,t):=\log t^2 -\log\left({|z+1|^2+t^2}\right)-\log \left({|z|^2+t^2}\right).
$$
Then we
can see that
\begin{align*}
\big{\vert} \Phi(z,t) \big{\vert}  \leq c_1 | \log \vep |
\end{align*}
 over  $g_\vep$, for a constant $c_1>0$.
Using this, it is easy to show
\begin{align*}
\Big{\vert} \int_{g_{\vep}} \Phi(z,t)
(R(z,t)+\bar{R}(z,t)) \Big{\vert} \leq c_2\, \vep |\log\vep|
\end{align*}
for a constant $c_2>0$. The other integrand  $\Phi(z,t) \left(d\log |z +1|^2 - d\log|z|^2\right)$
vanishes over $\check{g}_\vep$. Hence, we need to check the integral of this over $\hat{g}_\vep$. For this, we have
\begin{align*}
 \Phi(z,t) = 2\log \vep - \log\left( \frac{(1+\vep^2(y^2+4))}{(1+\vep^2(y^2+1))}\right)
\end{align*}over $\hat{g}_\vep$.
Hence,
\begin{align*}
&\Big{\vert} \int_{{g}_{\vep}} \Phi(z,t) \left(d\log |z +1|^2 - d\log|z|^2\right) \Big{\vert}\\
\leq &\, \Big{\vert}  \int^{\vep^{-1/2}}_{-\vep^{-1/2}} 2\log\vep   \left(d\log(y^2+4) - d\log(y^2+1) \right) \Big{\vert}\\
& + \Big{\vert} \int^{\vep^{-1/2}}_{-\vep^{-1/2}} \log \left( \frac{(1+\vep^2(y^2+4))}{(1+\vep^2(y^2+1))}\right)
\left(d\log(y^2+4) - d\log(y^2+1) \right) \Big{\vert}\\
&  \leq  c_3\, \vep| \log\vep |
\end{align*}
for a constant $c_3>0$. Let us remark that the integral in the second line vanishes since the one form $d\log(y^2+4) - d\log(y^2+1)$ is odd with
respect to $y$.
Combining the above computations, we have
\begin{align*}
\Big{\vert}  \big{\langle} (\delta w_1)_{\sigma_i^{-1},\kappa_i} +(w_1)_{\sigma^{-1}_i\kappa_i} - (w_1)_{\sigma_i^{-1}}\, , \  g_{\vep} \big{\rangle} \Big{\vert} \leq c\, \vep |\log \vep|
\end{align*}
for a constant $c>0$.
Therefore, the term $\langle w_1, L^c_\vep\rangle$ does not contribute when we take
$\vep\to 0^+$.

\subsubsection{Computation of $\langle w_1, L_\vep^e \rangle$}
Let us recall that
\begin{equation*}
(w_1)_{\gamma^{-1}}=-\frac{i}{8} \log\left( |c t|^2 J_\gamma(Z) \right) \left( \frac{\gamma''}{\gamma'} dz - {\frac{\overline{\gamma''}}{\overline{\gamma'}}}  d\bar{z} \right).
\end{equation*}
 Recall that $h^1_{j,\vep}=H_{j,\vep}\cap T^e_{j,\vep}$ and  we can see that
\begin{equation*}
L^e_\vep=\sum_{j=1}^r h^1_{j,\vep}\otimes[\tau_j]   \  \   \rightarrow  \   \   L^e=\sum_{j=1}^r h_{j}\otimes[\tau_j] \quad \text{as} \quad \vep\to 0^+,
\end{equation*}
where $h_j$ is the geodesic connecting the two fixed points $\mathrm{w}_1$ and $\mathrm{w}_2$ under the action of the elliptic element $\tau_j$. It starts at $\mathrm{w}_2$ and ends at $\mathrm{w}_1$.
\begin{equation*}
\lim_{\vep\rightarrow 0^+}\langle w_1, L^e_{\vep} \rangle = \sum_{j=1}^r \langle (w_1)_{\tau_j}, h_j \rangle.
\end{equation*}
For this geodesic $h_j$ which is the Euclidean semicircle in $\mathbb{U}^3$ perpendicular to $\mathbb{C}$ at $\mathrm{w}_1$ and $\mathrm{w}_2$,
we use the following parametrization of $h_j$:
\begin{align*}
s\in [0,1]\  \ \  \longrightarrow \  \   \    h_j(s)=(z(s), t(s)),
\end{align*}
where $z(s)=s(\mathrm{w}_1-\mathrm{w}_2)+\mathrm{w}_2$ and $t(s)=|\mathrm{w}_1-\mathrm{w}_2|\sqrt{s-s^2}$ satisfying
\begin{equation*}
\left| z(s)- \frac{\mathrm{w}_1+\mathrm{w}_2}{2} \right|^2 + t(s)^2 = \left|\frac{\mathrm{w}_1-\mathrm{w}_2}{2} \right|^2.
\end{equation*}
For the elliptic element $\tau_j$, we have
\begin{align*}
\tau_j^{-1}(z)= \frac{  (\mathrm{w}_1 e^{i\beta} -\mathrm{w}_2 e^{-i\beta} )z -\mathrm{w}_1\mathrm{w}_2 (e^{i\beta} - e^{-i\beta}) }
{ (e^{i\beta}- e^{-i\beta})z-  (\mathrm{w}_2e^{i\beta}- \mathrm{w}_1 e^{-i\beta})},
\end{align*}
where $e^{i\beta}= e^{i \pi/m}$. Hence,
\begin{equation}\label{e:der-tau}
\tau_j^{-1\prime} (z)= \frac { (\mathrm{w}_1-\mathrm{w}_2)^2}{\left( (e^{i\beta}- e^{-i\beta}) z- (\mathrm{w}_2 e^{i\beta} -\mathrm{w}_1 e^{-i\beta}) \right)^2 }.
\end{equation}

\begin{lemma}\label{l:der-logJ} Along the curve $h_j(s)$,
\begin{equation*}
\frac{d}{ds} \log J_{\tau_j^{-1}} (h_j(s))=0.
\end{equation*}
\end{lemma}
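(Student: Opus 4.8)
The plan is to prove the stronger statement that $J_{\tau_j^{-1}}$ is identically equal to $1$ along the geodesic $h_j$; the vanishing of $\frac{d}{ds}\log J_{\tau_j^{-1}}(h_j(s))$ then follows at once.

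The conceptual argument I would give is as follows. Recall from the beginning of Section \ref{s:holography} that every $\gamma\in\mathrm{PSL}(2,\mathbb{C})$ satisfies $t(\gamma Z)=t(Z)\,J_\gamma(Z)$. By construction $h_j$ is the image under $\rho_j$ of the $t$-axis of $\mathbb{U}^3$, and the $t$-axis is fixed pointwise by $\lambda_{m_j}$; hence $\tau_j=\rho_j\lambda_{m_j}\rho_j^{-1}$ fixes every point of $h_j$, and therefore so does $\tau_j^{-1}$. Consequently $t(\tau_j^{-1}h_j(s))=t(h_j(s))$ for all $s$, and since $t(h_j(s))>0$ for $s\in(0,1)$, the transformation law forces $J_{\tau_j^{-1}}(h_j(s))=1$ there, hence on all of $h_j$ by continuity. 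Thus $\log J_{\tau_j^{-1}}(h_j(s))\equiv 0$, which gives the claim.

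For completeness, and in the style of the direct computation preceding Lemma \ref{l:ell-geo}, I would also verify this by hand. From the explicit expression for $\tau_j^{-1}$ just before the lemma, after normalizing its matrix so that the determinant equals $(\mathrm{w}_1-\mathrm{w}_2)^2$ (the normalization consistent with \eqref{e:der-tau}), one reads off $c(\tau_j^{-1})=(e^{i\beta}-e^{-i\beta})/(\mathrm{w}_1-\mathrm{w}_2)$ and
\[
c(\tau_j^{-1})\,z+d(\tau_j^{-1})=\frac{(e^{i\beta}-e^{-i\beta})\,z-(\mathrm{w}_2 e^{i\beta}-\mathrm{w}_1 e^{-i\beta})}{\mathrm{w}_1-\mathrm{w}_2}.
\]
Substituting $z=z(s)=s(\mathrm{w}_1-\mathrm{w}_2)+\mathrm{w}_2$ collapses the numerator to $(\mathrm{w}_1-\mathrm{w}_2)(\cos\beta+i(2s-1)\sin\beta)$, so that, using $t(s)^2=|\mathrm{w}_1-\mathrm{w}_2|^2(s-s^2)$,
\[
J_{\tau_j^{-1}}(h_j(s))^{-1}=|c(\tau_j^{-1})z(s)+d(\tau_j^{-1})|^2+|c(\tau_j^{-1})|^2 t(s)^2=\cos^2\beta+(2s-1)^2\sin^2\beta+4(s-s^2)\sin^2\beta=1,
\]
since $(2s-1)^2+4(s-s^2)=1$. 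Either route yields the lemma.

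There is no genuine obstacle here. In the computational route the only thing to watch is the correct scaling of the matrix of $\tau_j^{-1}$ (its determinant is $(\mathrm{w}_1-\mathrm{w}_2)^2$, not $1$) before extracting $c$ and $d$, together with the elementary cancellation $(2s-1)^2+4(s-s^2)=1$; in the conceptual route one only needs that an elliptic element fixes its rotation axis \emph{pointwise}, which is precisely why $h_j$ deserves that name.
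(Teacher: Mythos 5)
Your proof is correct, and it takes a genuinely different route from the paper's. The paper differentiates directly: it writes out $\frac{d}{ds}\log J_{\tau_j^{-1}}(h_j(s))$ as a quotient and shows the numerator vanishes using the identity $cz(s)+d=(\mathrm{w}_1-\mathrm{w}_2)\left(se^{i\beta}+(1-s)e^{-i\beta}\right)$, i.e.\ \eqref{eq0916_1}. You instead prove the stronger statement that $J_{\tau_j^{-1}}$ is \emph{constant} along $h_j$, and your conceptual argument --- $\tau_j$ is conjugate by $\rho_j$ to $\lambda_{m_j}$, which fixes the $t$-axis pointwise, so $\tau_j^{-1}$ fixes $h_j$ pointwise, and then $t(\gamma Z)=t\,J_\gamma(Z)$ forces $J_{\tau_j^{-1}}=1$ on $h_j$ --- is cleaner and explains \emph{why} the lemma holds; it buys you the result with no computation at all. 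Your computational verification is essentially the paper's identity \eqref{eq0916_1} repackaged: showing $|cz(s)+d|^2+|ct(s)|^2$ is constant via $(2s-1)^2+4(s-s^2)=1$ rather than showing its logarithmic derivative vanishes. Two bookkeeping remarks. First, your normalization sentence is internally inconsistent: the matrix of $\tau_j^{-1}$ as displayed in the paper already has determinant $(\mathrm{w}_1-\mathrm{w}_2)^2$, and the values $c=(e^{i\beta}-e^{-i\beta})/(\mathrm{w}_1-\mathrm{w}_2)$ you then quote correspond to the determinant-one rescaling; the arithmetic you carry out is the determinant-one version and is correct. Second, note that the paper's own subsequent lemma asserts $J_{\tau_j^{-1}}(h_j(s))\to|\mathrm{w}_1-\mathrm{w}_2|^{-2}$, so the paper is working with the unnormalized entries \eqref{e:express-cd}; under that convention the constant value of $J_{\tau_j^{-1}}$ along $h_j$ is $|\mathrm{w}_1-\mathrm{w}_2|^{-2}$ rather than $1$. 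Neither point affects the vanishing of the logarithmic derivative, which is all the lemma claims.
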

\begin{proof}
Notice that
\begin{align*}
&\frac{d}{ds} \log J_{\tau_j^{-1}} (h_j(s))\\
=&-\frac{ c\, z'(s) \, (\bar{c} \bar{z}(s)+ \bar{d}) + \bar{c}\, \bar{z}'(s)\, (cz(s)+d) + 2 t(s) t'(s) |c|^2}
{| c z(s)+ d |^2 + | c t(s)|^2},
\end{align*}
where \begin{gather}
c=c(\tau_j^{-1})= e^{i\beta} - e^{-i\beta}, \hspace{1cm}  d=d(\tau_j^{-1})= \mathrm{w}_1 e^{-i\beta}-\mathrm{w}_2 e^{i\beta},\label{e:express-cd} \\
 z'(s)=\mathrm{w}_1-\mathrm{w}_2,\hspace{2cm} 2t(s)t'(s)=(1-2s)|\mathrm{w}_1-\mathrm{w}_2|^2. \nonumber
\end{gather}
Using
\begin{equation}\label{eq0916_1}
cz(s)+d=(\mathrm{w}_1-\mathrm{w}_2)\left(se^{i\beta}+(1-s)e^{-i\beta}\right),
\end{equation}
one can show that the numerator  vanishes.
\end{proof}

\begin{lemma}\label{l:d-lambda} For $\gamma=\tau^{-1}_j$,  the  equality
\begin{equation*}
 \frac{\gamma''}{\gamma'} dz -{\frac{\overline{\gamma''}}{\overline{\gamma'}}}  d\bar{z} =
-2 d  \log \frac{
 ( e^{i\beta} - e^{-i\beta} )s + e^{-i\beta} } { ( e^{-i\beta} - e^{i\beta} )s + e^{i\beta}}
\end{equation*}holds along the curve $z(s)$.
\end{lemma}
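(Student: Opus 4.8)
The plan is to unwind the definitions: $\gamma''/\gamma'$ is the logarithmic derivative of $\gamma'=(\tau_j^{-1})'$, which is given explicitly in \eqref{e:der-tau}, and then to restrict the resulting $1$-form to the curve $z(s)$ using the parametrization already fixed above together with the identity \eqref{eq0916_1}.

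Concretely, first I would note that with $c$ and $d$ as in \eqref{e:express-cd} one has $\mathrm{w}_2 e^{i\beta}-\mathrm{w}_1 e^{-i\beta}=-d$, so the denominator appearing in \eqref{e:der-tau} is exactly $(cz+d)^2$; hence $\gamma'=(\mathrm{w}_1-\mathrm{w}_2)^2(cz+d)^{-2}$ and
\begin{align*}
\frac{\gamma''}{\gamma'}=\frac{d}{dz}\log\gamma'=-\frac{2c}{cz+d}.
\end{align*}
Next I would substitute $z=z(s)=s(\mathrm{w}_1-\mathrm{w}_2)+\mathrm{w}_2$, for which $dz=(\mathrm{w}_1-\mathrm{w}_2)\,ds$, and use \eqref{eq0916_1} in the form $cz(s)+d=(\mathrm{w}_1-\mathrm{w}_2)\bigl(se^{i\beta}+(1-s)e^{-i\beta}\bigr)$. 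The factor $\mathrm{w}_1-\mathrm{w}_2$ cancels, leaving
\begin{align*}
\frac{\gamma''}{\gamma'}\,dz=-\frac{2c\,ds}{se^{i\beta}+(1-s)e^{-i\beta}}=-2\,d\log\bigl((e^{i\beta}-e^{-i\beta})s+e^{-i\beta}\bigr),
\end{align*}
where the last step uses $se^{i\beta}+(1-s)e^{-i\beta}=(e^{i\beta}-e^{-i\beta})s+e^{-i\beta}$ and the fact that $c=e^{i\beta}-e^{-i\beta}$ is its $s$-derivative.

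Finally, since $s$ is a real parameter, along $z(s)$ one has $d\bar z=\overline{dz}$, so $\frac{\overline{\gamma''}}{\overline{\gamma'}}\,d\bar z$ is the complex conjugate of the $1$-form just computed, namely $-2\,d\log\bigl((e^{-i\beta}-e^{i\beta})s+e^{i\beta}\bigr)$; subtracting the two gives the asserted identity. This is a routine computation with no genuine obstacle; the only points requiring a little care are the sign bookkeeping that identifies the denominator of \eqref{e:der-tau} with $(cz+d)^2$ and the conjugation in the $d\bar z$ term.
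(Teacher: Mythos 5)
Your computation is correct and is precisely the argument the paper intends: the paper's proof simply states that the lemma follows from \eqref{e:der-tau} and \eqref{eq0916_1}, and your write-up expands exactly those two steps (identifying the denominator of \eqref{e:der-tau} with $(cz+d)^2$, restricting to $z(s)$, and conjugating for the $d\bar z$ term). No issues.
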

\begin{proof}
This follows from the equality \eqref{e:der-tau} and \eqref{eq0916_1}.
\end{proof}

By Lemma \ref{l:der-logJ} and Lemma \ref{l:d-lambda},
\begin{align*}
&\int^{1-\delta_1}_{\delta_0} (w_1)_{\tau_j}\\
=&\frac{i}{4}\int^{1-\delta_1}_{\delta_0}  \log\left( |c t|^2 J_\gamma(Z) \right)\,
d  \log \frac{
 ( e^{i\beta} - e^{-i\beta} )s + e^{-i\beta} }{ ( e^{-i\beta} - e^{i\beta} )s + e^{i\beta}}  \\
=& \frac{i}{4}\int^{1-\delta_1}_{\delta_0} d \left(  \log\left( |c t|^2 J_\gamma(Z) \right)\,
 \cdot \log \frac{
 ( e^{i\beta} - e^{-i\beta} )s + e^{-i\beta} }  { ( e^{-i\beta} - e^{i\beta} )s + e^{i\beta}}\right)\\
&- \frac{i}{4}\int^{1-\delta_1}_{\delta_0}
 \log \frac{
 ( e^{i\beta} - e^{-i\beta} )s + e^{-i\beta} } { ( e^{-i\beta} - e^{i\beta} )s + e^{i\beta}} \,d   \log\left( |c t|^2  \right)\\
&= (\text{I})+(\text{II})
\end{align*}
for some small $\delta_i >0$ with $i=0,1$ and $\gamma=\tau_j^{-1}$.

\begin{lemma}  As $s\to 0$ or $s\to 1$, $J_{\tau_j^{-1}} (h_j(s)) \to  |\mathrm{w}_1-\mathrm{w}_2|^{-2}$.
\end{lemma}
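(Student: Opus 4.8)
The plan is to use the fact, already established in Lemma \ref{l:der-logJ}, that $\log J_{\tau_j^{-1}}(h_j(s))$ is constant in $s$; hence $J_{\tau_j^{-1}}(h_j(s))$ is independent of $s$ on $(0,1)$, and the limiting values as $s\to 0$ or $s\to 1$ are simply that constant. It therefore remains only to evaluate the constant, which I will do by a direct computation using the parametrization of $h_j$ and the expressions \eqref{e:express-cd} for $c=c(\tau_j^{-1})$, $d=d(\tau_j^{-1})$.

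Concretely, starting from \eqref{eq0916_1} one has $cz(s)+d=(\mathrm{w}_1-\mathrm{w}_2)\bigl(se^{i\beta}+(1-s)e^{-i\beta}\bigr)$, and writing $se^{i\beta}+(1-s)e^{-i\beta}=\cos\beta+i(2s-1)\sin\beta$ gives
$$
|cz(s)+d|^2=|\mathrm{w}_1-\mathrm{w}_2|^2\bigl(\cos^2\beta+(2s-1)^2\sin^2\beta\bigr).
$$
On the other hand, from $|c|^2=|e^{i\beta}-e^{-i\beta}|^2=4\sin^2\beta$ and $t(s)^2=|\mathrm{w}_1-\mathrm{w}_2|^2\,s(1-s)$ one gets $|ct(s)|^2=4\sin^2\beta\,|\mathrm{w}_1-\mathrm{w}_2|^2\,s(1-s)$. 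Adding the two contributions and invoking the elementary identity $(2s-1)^2+4s(1-s)=1$ yields
$$
|cz(s)+d|^2+|ct(s)|^2=|\mathrm{w}_1-\mathrm{w}_2|^2\bigl(\cos^2\beta+\sin^2\beta\bigr)=|\mathrm{w}_1-\mathrm{w}_2|^2,
$$
so that $J_{\tau_j^{-1}}(h_j(s))=|\mathrm{w}_1-\mathrm{w}_2|^{-2}$ for every $s\in(0,1)$; letting $s\to 0$ or $s\to 1$ proves the lemma.

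There is no serious obstacle here: this is a short algebraic verification. The only points requiring care are that one must use $c=c(\tau_j^{-1})$ and $d=d(\tau_j^{-1})$ as in \eqref{e:express-cd} (not the coefficients of $\tau_j$ itself), and that the cancellation of the $\beta$-dependence is exactly the identity $(2s-1)^2+4s(1-s)=1$, which is what makes the geodesic $h_j$ a locus of constant $J_{\tau_j^{-1}}$.
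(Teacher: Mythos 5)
Your proof is correct and follows essentially the same route as the paper, which simply notes that the claim follows from the explicit expressions \eqref{e:express-cd} for $c(\tau_j^{-1})$ and $d(\tau_j^{-1})$; your algebra (in particular the identity $(2s-1)^2+4s(1-s)=1$) checks out. In fact you prove the slightly stronger statement that $J_{\tau_j^{-1}}(h_j(s))\equiv|\mathrm{w}_1-\mathrm{w}_2|^{-2}$ for all $s\in(0,1)$, which is consistent with (and re-derives) Lemma \ref{l:der-logJ}.
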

\begin{proof}
This follows from the equality \eqref{e:express-cd}.
\end{proof}

Now we take $\delta_0$ and $\delta_1$ to be the parameters whose images of the curve $h_j(s)$ meet the hypersurface defined by
$f(z,t)=\vep$. Since the $t$-coordinate of the hypersurface is given by  $t=\vep  e^{-\phi/2} +O(\vep^3)$ and
$t(s)^2=|\mathrm{w}_1-\mathrm{w}_2|^2 (s-s^2)$, we have
\begin{align*}
&\delta_0 = \vep^2 e^{-\phi(\mathrm{w}_2)} |\mathrm{w}_1-\mathrm{w}_2|^{-2} +O(\vep^3) ,\\
 & \delta_1 = \vep^2 e^{-\phi(\mathrm{w}_1)} |\mathrm{w}_1-\mathrm{w}_2|^{-2} +O(\vep^3).
\end{align*}
By these,
\begin{align*}
(\text{I})&= \frac{i}{4} \left( \log ( |c|^2|\mathrm{w}_1-\mathrm{w}_2|^2  J_{\tau_j^{-1}}(Z) )\cdot \log \frac{
  ( e^{i\beta} - e^{-i\beta} )s + e^{-i\beta} } { ( e^{-i\beta} - e^{i\beta} )s + e^{i\beta}} \right) \Big]^{1-\delta1}_{\delta_0}\\
&\ \  +  \frac{i}{4} \left( \log ( s-s^2 )\cdot \log \frac{
  ( e^{i\beta} - e^{-i\beta} )s + e^{-i\beta} } { ( e^{-i\beta} - e^{i\beta} )s + e^{i\beta}} \right) \Big]^{1-\delta1}_{\delta_0}\\
&= -2\beta \left( \log | e^{i\beta}- e^{-i\beta}|  \right) +O(\vep) \\
&\ \ - \frac{\beta}{2} \left( 4\log \vep- 4\log |\mathrm{w}_1-\mathrm{w}_2| - \phi(\mathrm{w}_1) -\phi(\mathrm{w}_2) \right) + O(\vep)\\
&= -\frac{\beta}{2}\left( 4\log \vep - (\phi(\mathrm{w}_1)+\phi(\mathrm{w}_2)) + 4\log | e^{i\beta}- e^{-i\beta}| -4\log |\mathrm{w}_1- \mathrm{w}_2| \right) +O(\vep).
\end{align*}

For the term $(\text{II})$,
\begin{align*}
(\text{II})&= -\frac{i}{4} \int^{1-\delta_1}_{\delta_0} ds \left(\frac{1}{s}- \frac{1}{1-s}\right) \log  \frac{
 ( e^{i\beta} - e^{-i\beta} )s + e^{-i\beta} } { ( e^{-i\beta} - e^{i\beta} )s + e^{i\beta}}\\
&=\frac{i}{4} \int^{1-\delta_1}_{\delta_0} ds \,  2i\beta\, \left(\frac{1}{s}+ \frac{1}{1-s} \right)\\
 &\ \ -\frac{i}{4} \int^{1-\delta_1}_{\delta_0} ds\, \frac{1}{s}\, \log \frac{1+ s(e^{2i\beta} -1)}{ 1+s( e^{-2i\beta}-1)}\\
&\ \ -\frac{i}{4} \int^{1-\delta_1}_{\delta_0} ds\, \frac{1}{1-s}\, \log \frac{1+ (1-s)(e^{2i\beta} -1)}{ 1+(1-s)( e^{-2i\beta}-1)}\\
&= (\text{III})+(\text{IV})+(\text{V}).
\end{align*}
 For $(\text{III})$,
\begin{align*}
(\text{III})= -\frac{\beta}{2}\left( -4\log\vep +(\phi(\mathrm{w}_1)+\phi(\mathrm{w}_2)) +4\log |\mathrm{w}_1-\mathrm{w}_2| \right) +O(\vep).
\end{align*}
Using the definition of dilogarithm function \eqref{0915_1}, we have
\begin{align*}
(\text{IV})+(\text{V})=\frac{i}{2} \left( \mathrm{Li}_2(1- e^{2i\beta})-\mathrm{Li}_2(1-e^{-2i\beta}) \right) +O(\vep).
\end{align*}
Note that the diverging terms in $(\text{I})$ and $(\text{III})$ cancel each other. Hence, we can formulate the concerning integral in terms of the
principal value of the integral. Combining all these computations, we have

\begin{align}\label{e:elliptic-con}
&\langle w_1, L^e \rangle
=\sum_{j=1}^r \mathrm{p.v.} \int_{h_j}  (w_1)_{\tau_j}  \notag \\
&=-\sum_{j=1}^r \, \left( \frac{\pi}{m_j}  \log \left( 4\sin^2\frac{\pi }{m_j} \right)  -\frac{i}{2} \left( \mathrm{Li}_2(1- e^{\frac{2\pi i}{m_j} }) -\mathrm{Li}_2(1-e^{-\frac{2\pi i}{m_j}})  \right)\right) \notag\\
&= -\sum_{j=1}^r \, D\left(1-e^{\frac{2\pi i}{m_j} }\right)
\end{align}
Using the Bloch-Wigner function $D(z)$ \eqref{e:def-BW} and the identity
$$D(z)=-D(1-z),$$we can rewrite the equality \eqref{e:elliptic-con} as follows:

\begin{theorem}
\begin{align*}
\langle w_1, L^e \rangle
=\sum_{j=1}^r \, D\left(e^{\frac{2\pi i}{m_j} }\right).
\end{align*}
\end{theorem}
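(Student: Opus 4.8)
The plan is to read the statement off equation \eqref{e:elliptic-con}, which has already been established, by applying one functional equation of the Bloch--Wigner function. To recall how \eqref{e:elliptic-con} is obtained: one first passes to the limit $\vep\to 0^+$, using that the truncated chains $L^e_\vep=\sum_{j=1}^r h^1_{j,\vep}\otimes[\tau_j]$ converge to $\sum_{j=1}^r h_j\otimes[\tau_j]$, where $h_j$ is the geodesic in $\mathbb{U}^3$ joining the two fixed points $\mathrm{w}_1,\mathrm{w}_2$ of $\tau_j$; this reduces the computation to the principal-value integrals $\mathrm{p.v.}\int_{h_j}(w_1)_{\tau_j}$. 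Parametrising $h_j$ by $s\in[0,1]$ with $z(s)=s(\mathrm{w}_1-\mathrm{w}_2)+\mathrm{w}_2$ and $t(s)=|\mathrm{w}_1-\mathrm{w}_2|\sqrt{s-s^2}$, one uses Lemma \ref{l:der-logJ} (that $\log J_{\tau_j^{-1}}$ is constant along $h_j$) and Lemma \ref{l:d-lambda} (the explicit primitive of $\tfrac{\gamma''}{\gamma'}\,dz-\tfrac{\overline{\gamma''}}{\overline{\gamma'}}\,d\bar z$ for $\gamma=\tau_j^{-1}$) to integrate by parts, producing a boundary contribution $(\text{I})$ and a remaining integral $(\text{II})$, and then splits $(\text{II})$ into $(\text{III})+(\text{IV})+(\text{V})$, the pieces $(\text{IV})$ and $(\text{V})$ being identified with $\mathrm{Li}_2(1-e^{2\pi i/m_j})-\mathrm{Li}_2(1-e^{-2\pi i/m_j})$ via the integral representation \eqref{0915_1}. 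Collecting the finite parts gives $\langle w_1, L^e\rangle = -\sum_{j=1}^r D(1-e^{2\pi i/m_j})$, i.e. \eqref{e:elliptic-con}.

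Granting \eqref{e:elliptic-con}, the theorem is immediate: with $z=e^{2\pi i/m_j}$ the reflection identity $D(z)=-D(1-z)$ for the Bloch--Wigner function (one of its standard functional equations, see \cite{Zagier}) yields $D(1-e^{2\pi i/m_j})=-D(e^{2\pi i/m_j})$, so substituting termwise into \eqref{e:elliptic-con} absorbs the sign and gives $\langle w_1, L^e\rangle=\sum_{j=1}^r D(e^{2\pi i/m_j})$, as claimed.

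The genuinely delicate step is therefore not this last manipulation but the derivation of \eqref{e:elliptic-con}. The truncated geodesic runs between parameters $\delta_0$ and $\delta_1$ whose images on $h_j$ lie on the level surface $f=\vep$, so one must expand $\delta_0=\vep^2 e^{-\phi(\mathrm{w}_2)}|\mathrm{w}_1-\mathrm{w}_2|^{-2}+O(\vep^3)$ and $\delta_1=\vep^2 e^{-\phi(\mathrm{w}_1)}|\mathrm{w}_1-\mathrm{w}_2|^{-2}+O(\vep^3)$, and then check that the $\log\vep$ divergences in $(\text{I})$ and $(\text{III})$ cancel and that the $\phi(\mathrm{w}_1)$, $\phi(\mathrm{w}_2)$ contributions cancel as well --- this is what makes the principal value exist, and it explains why the final answer depends only on the ramification indices $m_j$ and not on the chosen metric in $\mathcal{CM}(X\sqcup Y)$. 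With Lemmas \ref{l:der-logJ} and \ref{l:d-lambda} in hand this is a short endpoint computation, after which the theorem follows from the dilogarithm identity above.
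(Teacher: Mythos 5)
Your proposal is correct and follows the paper's own argument essentially verbatim: the theorem is obtained from \eqref{e:elliptic-con} (derived via the parametrisation of $h_j$, Lemmas \ref{l:der-logJ} and \ref{l:d-lambda}, the splitting into $(\text{I})$ through $(\text{V})$, and the cancellation of the $\log\vep$ and $\phi(\mathrm{w}_i)$ endpoint contributions) together with the reflection identity $D(z)=-D(1-z)$. No substantive difference from the paper's proof.
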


 \vspace{0.5cm}
For $e^{\phi(z)}|dz|^2 \in \mathcal{CM}(X\sqcup Y)$,
as in \cite{2}, we define the regularized on-shell Einstein-Hilbert action functional as
\begin{align}\label{eq:def-EH}
\mathcal{E}[\phi]=-4\lim_{\vep\rightarrow 0}\left(V_{\vep}[\phi]-\frac{1}{2}A_{\vep}[\phi]+2\pi\chi(X)\log\vep\right).
\end{align}
In other words, $\mathcal{E}[\phi]$ is $-4$ times the renormalized volume of the hyperbolic three manifold $M\simeq \Gamma\backslash\mathbb{U}^3$. The computations above shows that $\mathcal{E}[\phi]$ is the Liouville action
up to some topological data of the quasi-Fuchsian 3-manifold. More precisely,
\begin{theorem}\label{thm:holography}
For a quasi-Fuchsian group $\Gamma$ of type $(g,n; m_1,\ldots, m_r)$ and $e^{\phi(z)}|dz|^2 \in \mathcal{CM}(X\sqcup Y)$
where $X= \Gamma\backslash \Omega_1, Y=\Gamma\backslash \Omega_2$,
\begin{align}\label{eq0916_4}
\mathcal{E}[\phi]=S[\phi]-\iint\limits_{\Gamma\backslash\Omega} e^{\phi(z)}d^2z-8\pi\chi(X)\log 2 -4\sum_{j=1}^r D\left(e^{\frac{2\pi i}{m_j} }\right).
\end{align}
\end{theorem}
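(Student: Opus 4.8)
The plan is to prove \eqref{eq0916_4} by assembling the computations carried out in the preceding subsections, inserting them into the definition \eqref{eq:def-EH} of $\mathcal{E}[\phi]$, and then letting $\vep\to 0^+$. The starting point is the chain-level identity \eqref{e:volume-comp} for the truncated hyperbolic volume $V_\vep[\phi]=\langle w_3,R_\vep\rangle$: applying Stokes' theorem three times through $w_3=dw_2$, $\delta w_2=dw_1$, $\delta w_1=dw_0$, $\delta w_0=0$, together with the explicit boundary $\pa'R_\vep=-F_\vep+T^c_\vep+T^e_\vep+\pa''B_\vep$ and the formulas for $\pa'B_\vep$ and $\pa'E_\vep$ established above, one writes $V_\vep[\phi]$ as the sum of the three ``bulk'' pairings $-\langle w_2,F_\vep\rangle$, $\langle w_1,L_\vep\rangle$, $-\langle w_0,V_\vep\rangle$ and the four ``boundary'' pairings $\langle w_2,T^c_\vep\rangle$, $\langle w_1,L^c_\vep\rangle$, $\langle w_2,T^e_\vep\rangle$, $\langle w_1,L^e_\vep\rangle$ coming from the rank one cusps and the conical singularities.

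The next step is to substitute the $\vep\to 0^+$ asymptotics of the three bulk pairings. These are proved exactly as in the proof of Theorem~5.1 of \cite{2}, the only new input being the behaviour of the hyperbolic metric near ramification points and rank one cusps from Appendix~\ref{a1}: namely $\langle w_2,F_\vep\rangle=-\frac{1}{2}A_\vep[\phi]+\frac{i}{8}\langle\omega[\phi],F\rangle-\frac{1}{4}\iint_F e^{\phi}d^2z+o(1)$, $\langle w_1,L_\vep\rangle=2\pi\chi(X)(\log 2-\log\vep)+\frac{i}{8}\langle\check\theta[\phi],L\rangle+o(1)$, and $\langle w_0,V_\vep\rangle=\frac{i}{8}\langle\check u,W\rangle+o(1)$. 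Recognizing $\frac{i}{2}(\langle\omega[\phi],F\rangle-\langle\check\theta[\phi],L\rangle+\langle\check u,W\rangle)$ as $S[\phi]$ via \eqref{Liouvilleaction}, one obtains
\begin{equation*}
V_\vep[\phi]-\frac{1}{2}A_\vep[\phi]=2\pi\chi(X)(\log 2-\log\vep)-\frac{1}{4}\left(S[\phi]-\iint_F e^{\phi}d^2z\right)+\langle w_2,T^c_\vep\rangle+\langle w_1,L^c_\vep\rangle+\langle w_2,T^e_\vep\rangle+\langle w_1,L^e_\vep\rangle+o(1),
\end{equation*}
so the only divergence as $\vep\to 0^+$ is the $\log\vep$ term issuing from $\langle w_1,L_\vep\rangle$.

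Then I would invoke the four boundary estimates already established above: $\langle w_2,T^c_\vep\rangle\to 0$, $\langle w_1,L^c_\vep\rangle\to 0$, $\langle w_2,T^e_\vep\rangle\to 0$, while $\langle w_1,L^e_\vep\rangle\to\langle w_1,L^e\rangle=\sum_{j=1}^r D\left(e^{\frac{2\pi i}{m_j}}\right)$ by the Theorem immediately preceding this one. Feeding this into \eqref{eq:def-EH}, multiplying by $-4$, and using $\iint_F e^{\phi}d^2z=\iint_{\Gamma\backslash\Omega}e^{\phi}d^2z$ (the fundamental domain $F=F_1-F_2$ for the quasi-Fuchsian action covers $X\sqcup Y=\Gamma\backslash\Omega$) produces exactly \eqref{eq0916_4}.

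The substantive difficulty is not this final bookkeeping but the inputs it rests on, above all the analysis of $\langle w_1,L^e_\vep\rangle$: one must check that the divergent $\log\vep$ contributions in the pieces $(\text{I})$ and $(\text{III})$ cancel, so that the limit is a finite principal value along the geodesic $h_j$, and then evaluate that integral in closed form through the dilogarithm and the Bloch-Wigner identity $D(z)=-D(1-z)$. A structural point underlying the whole argument is that $\Sigma_1-\Sigma_2$ is no longer a cycle once $\Gamma$ contains elliptic elements; consequently every Stokes manipulation has to be carried out on the truncated region $R_\vep$ rather than on a closed cycle, and it is precisely the extra boundary pieces $T^e_\vep$ and $L^e_\vep$ that produce, in the limit, the new term $-4\sum_{j=1}^r D\left(e^{\frac{2\pi i}{m_j}}\right)$. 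One should also double-check that the adaptation of the Theorem~5.1 asymptotics of \cite{2} to the present setting with ramification points goes through with the estimates of Appendix~\ref{a1}.
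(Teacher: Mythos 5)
Your proposal is correct and follows essentially the same route as the paper: the chain-level Stokes decomposition \eqref{e:volume-comp}, the bulk asymptotics adapted from Theorem 5.1 of \cite{2}, the vanishing of $\langle w_2,T^c_\vep\rangle$, $\langle w_1,L^c_\vep\rangle$, $\langle w_2,T^e_\vep\rangle$, and the principal-value evaluation of $\langle w_1,L^e_\vep\rangle$ as $\sum_j D(e^{2\pi i/m_j})$, assembled via \eqref{eq:def-EH}. You also correctly identify the two genuinely delicate points (the cancellation of the $\log\vep$ divergences in the elliptic line integral and the non-cycle nature of $\Sigma_1-\Sigma_2$), which is exactly where the paper concentrates its effort.
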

Notice that there are contributions from the elliptic fixed points which do not depend on moduli parameters.

\begin{corollary}
The Liouville action functional defined by \eqref{Liouvilleaction} is independent of the choice of fundamental domain.
\end{corollary}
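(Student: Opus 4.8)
The plan is to read the corollary off directly from the holography identity in Theorem~\ref{thm:holography}. Rearranging \eqref{eq0916_4} gives
\begin{align*}
S[\phi]=\mathcal{E}[\phi]+\iint\limits_{\Gamma\backslash\Omega} e^{\phi(z)}\,d^2z+8\pi\chi(X)\log 2+4\sum_{j=1}^r D\left(e^{\frac{2\pi i}{m_j}}\right),
\end{align*}
so it suffices to check that every term on the right-hand side is independent of the choice of fundamental domain for $\Gamma$.

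I would dispose of the last three terms first, since they are manifestly intrinsic: $\iint_{\Gamma\backslash\Omega} e^{\phi(z)}\,d^2z$ is just the area of $X\sqcup Y$ in the metric $e^{\phi(z)}|dz|^2$, while $8\pi\chi(X)\log 2$ and $4\sum_{j=1}^r D(e^{2\pi i/m_j})$ depend only on the type $(g,n;m_1,\ldots,m_r)$ of $\Gamma$ through \eqref{e:def-chi} and the ramification indices $m_j$. For the term $\mathcal{E}[\phi]$, recall from \eqref{eq:def-EH} that it equals $-4$ times the renormalized volume of $M\simeq\Gamma\backslash\mathbb{U}^3$. The truncated manifold $M_\vep$ underlying $V_\vep[\phi]$ and $A_\vep[\phi]$ is obtained from $R_\vep=R\cap\{f\ge\vep\}\setminus(\cup_i\mathcal{P}_{i,\vep}\cup_j\mathcal{Q}_{j,\vep})$ by the side pairings; the level defining function $f$ is built from the metric $\phi$ and the auxiliary cutoff data $\eta,\vep_0,\hat f$, and the removed regions $\mathcal{P}_{i,\vep}=\mathcal{H}_{i,\vep}\cup\mathcal{K}_{i,\vep}$ and $\mathcal{Q}_{j,\vep}=\rho_j(\mathcal{Q}_\vep)$ are attached to the parabolic and elliptic fixed points of $\Gamma$. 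None of these ingredients refers to the fundamental polyhedron $R$ (equivalently, to the fundamental domain $F$): a different choice of $R$ merely presents the same subset $M_\vep\subset M$ by different face and edge identifications. Hence the hyperbolic volume $V_\vep[\phi]$ of $M_\vep$ and the area $A_\vep[\phi]$ of $\partial' M_\vep$, and therefore $\mathcal{E}[\phi]$ after passing to the limit in \eqref{eq:def-EH}, are intrinsic.

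Putting these together, $S[\phi]$ is a sum of quantities each independent of the choice of fundamental domain, which is the assertion. The step that genuinely carries weight is the identification \eqref{eq0916_4} itself: the expression \eqref{Liouvilleaction} defining $S[\phi]$ is assembled from $F_1-F_2$, $L_1-L_2$ and $W_1-W_2$, and when $\Gamma$ contains elliptic elements $\Sigma_1-\Sigma_2$ fails to be a cycle, so $S[\phi]$ is not visibly well posed; the force of Theorem~\ref{thm:holography} is precisely that this potentially ambiguous quantity coincides, up to the intrinsic correction terms above, with the renormalized volume. Once Theorem~\ref{thm:holography} is in hand, the present corollary needs only the routine observation that the renormalization scheme defining $\mathcal{E}[\phi]$ never invokes $F$.
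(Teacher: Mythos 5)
Your argument is exactly the one the paper intends: the corollary is stated without proof immediately after Theorem~\ref{thm:holography}, and the Remark following the definition \eqref{Liouvilleaction} records precisely this reasoning --- that the holography identity expresses $S[\phi]$ through the renormalized volume and manifestly intrinsic quantities. One small point to polish: the cutoff $\eta$ is in fact constructed from a neighbourhood $\mathcal{O}$ of $R$, so the level function $f$ is not literally independent of the choice of $R$; what actually matters is that $f$ is $\Gamma$-automorphic with the fixed asymptotics $f(Z)=te^{\phi(z)/2}+O(t^3)$, which suffices to make the limit in \eqref{eq:def-EH} independent of all auxiliary choices.
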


Of particular interest is when $\Gamma$ is a Fuchsian group. Using Theorem \ref{classical_Liouville}, we find that
\begin{theorem}\label{Fuchsian_volume}
When $\Gamma$ is a Fuchsian group and $\phi=\varphi$ is the hyperbolic metric,
\begin{align*}
\mathcal{E}[\varphi]=4\pi \chi(X)(1-2\log 2).
\end{align*}
In other words, the renormalized volume of $M\simeq \Gamma\backslash\mathbb{U}^3$ is equal to $$ \pi\chi(X)(2\log 2-1).$$
\end{theorem}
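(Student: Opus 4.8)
The plan is to deduce the theorem directly from the holography identity of Theorem~\ref{thm:holography}, specialized to the case in which $\Gamma$ is Fuchsian and $\phi=\varphi$ is the hyperbolic metric, together with the closed formula for the classical Liouville action in Theorem~\ref{classical_Liouville} and the Gauss--Bonnet computation of the hyperbolic area of $X\sqcup Y$. A Fuchsian group is in particular a quasi-Fuchsian group of the type under consideration, and the hyperbolic metric $e^{\varphi(z)}|dz|^2$ trivially lies in $\mathcal{CM}(X\sqcup Y)$, so Theorem~\ref{thm:holography} applies with $\phi=\varphi$, giving
\[
\mathcal{E}[\varphi]=S-\iint\limits_{\Gamma\backslash\Omega}e^{\varphi(z)}d^2z-8\pi\chi(X)\log 2-4\sum_{j=1}^r D\left(e^{\frac{2\pi i}{m_j}}\right),
\]
where $S=S[\varphi]$ is the classical Liouville action.

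Next I would substitute $S=8\pi\chi(X)+4\sum_{j=1}^r D(e^{2\pi i/m_j})$ from Theorem~\ref{classical_Liouville}; the two Bloch--Wigner sums cancel exactly, which is the one non-trivial cancellation in the proof. It remains to evaluate $\iint_{\Gamma\backslash\Omega}e^{\varphi(z)}d^2z$, the hyperbolic area of $X\sqcup Y$. Since $\Gamma$ is Fuchsian, $Y\cong\bar X$ has the same type $(g,n;m_1,\dots,m_r)$ as $X$, and by Gauss--Bonnet (each cusp contributing area $2\pi$ and each cone point of angle $2\pi/m_i$ contributing $2\pi(1-1/m_i)$) the hyperbolic area of each of $X$ and $Y$ equals $2\pi\chi(X)$ with $\chi(X)$ as in \eqref{e:def-chi}; equivalently, this is exactly the normalization already recorded in \eqref{eq0915_2}, where the area term of $S$ was identified as $2\iint_{X\sqcup\bar X}e^{\varphi}d^2z=8\pi\chi(X)$. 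Hence $\iint_{\Gamma\backslash\Omega}e^{\varphi}d^2z=4\pi\chi(X)$, and assembling the pieces yields
\[
\mathcal{E}[\varphi]=8\pi\chi(X)-4\pi\chi(X)-8\pi\chi(X)\log 2=4\pi\chi(X)(1-2\log 2).
\]
Finally, since $\mathcal{E}[\phi]$ is by definition \eqref{eq:def-EH} equal to $-4$ times the renormalized volume of $M\simeq\Gamma\backslash\mathbb{U}^3$, the renormalized volume equals $-\frac{1}{4}\mathcal{E}[\varphi]=\pi\chi(X)(2\log 2-1)$.

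There is no serious obstacle here: the statement is a corollary of results proved earlier in the paper, the hard analytic work (regularization near rank one cusps and conical singularities, the dilogarithm/Bloch--Wigner identities) having been carried out in Theorems~\ref{classical_Liouville} and~\ref{thm:holography}. The only point demanding a little attention is the bookkeeping of the area normalization---confirming that $d^2z$ in \eqref{eq0916_4} is the Euclidean area element, i.e.\ the area form of $e^{\varphi}|dz|^2$, and that the hyperbolic area of a surface of type $(g,n;m_1,\dots,m_r)$ is $2\pi\chi(X)$ rather than some other multiple---but this is pinned down by the consistency check already built into \eqref{eq0915_2}.
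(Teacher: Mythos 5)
Your proposal is correct and follows exactly the route the paper takes: the theorem is stated as an immediate consequence of Theorem~\ref{thm:holography} combined with Theorem~\ref{classical_Liouville}, with the Bloch--Wigner sums cancelling and the area term $\iint_{\Gamma\backslash\Omega}e^{\varphi}\,d^2z=4\pi\chi(X)$ supplied by Gauss--Bonnet (consistently with the normalization visible in \eqref{eq0915_2}). Your write-up simply makes explicit the bookkeeping the paper leaves implicit.
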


When $\Gamma$ contains elliptic elements, the appearance of the terms given by the Bloch-Wigner functions in \eqref{eq0916_4} might be seemed intriguing. However, such a term  already appears in the classical Liouville action. In fact, as shown in Theorem \ref{Fuchsian_volume},  this term cancels out when $\Gamma$ is a Fuchsian group and $\phi$ is the hyperbolic metric. In general, the Bloch-Wigner function term is an attribute of the Liouville action when $\Gamma$ contains elliptic elements.

\section{Potential of the TZ metric for an elliptic fixed point} \label{s:elliptic}
Given a quasi-Fuchsian group $\Gamma$ of type $(g, n; m_1, m_2,\ldots, m_r)$, let $\tau_j$ be an elliptic generator with fixed points $\mathrm{w}_{1j}$ and $\mathrm{w}_{2j}$ on $\Omega_1$ and $\Omega_2$ respectively. Corresponding to $\tau_j$, there are  Takhtajan-Zograf metrics on the Teichm\"uller space   $\mathfrak{T}(\Gamma_1)$ for the Riemann surface $X\simeq \Gamma\backslash \Omega_1\simeq \Gamma_1\backslash \mathbb{U}$ and the Teichm\"uller space $\mathfrak{T}(\Gamma_2)$ for the Riemann surface $Y\simeq \Gamma\backslash \Omega_2\simeq \Gamma_2\backslash\mathbb{L}$. They are given by
\begin{align*}
\langle \mu, \nu\rangle^{\text{ell}, 1}_{\text{TZ}, j}=&\iint\limits_{\Gamma\backslash\Omega_1}G(\mathrm{w}_{1j}, z)\mu(z)\overline{\nu(z)}\rho(z)d^2z,\\
\langle \mu, \nu\rangle^{\text{ell}, 2}_{\text{TZ}, j}=&\iint\limits_{\Gamma\backslash\Omega_2}G(\mathrm{w}_{2j}, z)\mu(z)\overline{\nu(z)}\rho(z)d^2z
\end{align*}respectively.  Here $G(z,z')$ denotes the integral kernel of $\left(\Delta_0+\frac12\right)^{-1}$ where
$\Delta_0$ is the hyperbolic Laplacian acting on the space of functions. We refer to \cite{TZ17} for more details about these metrics where a potential function has also been constructed on the Schottky deformation space. In this section, we want to construct a potential function for this metric on the quasi-Fuchsian deformation space.

Consider the function
$$ s_{1j}=\varphi(\mathrm{w}_{1j})=\log \frac{\left|(J_1^{-1})_z(\mathrm{w}_{1j})\right|^2}{\left[\text{Im}\; \left(J_1^{-1}(\mathrm{w}_{1j})\right)\right]^2}$$
 on the Teichm\"uller space   for the Riemann surface $X\simeq \Gamma\backslash \Omega_1$. Here $\mathrm{w}_{1j}$ is chosen so that it varies continuously with respect to moduli parameter.
Since $J_1^{-1}$ is a univalent function  on $\Omega_1$, $(J_1^{-1})_z(\mathrm{w}_{1j})\neq 0$ and this is well-defined.

Choosing a different representative $\tilde{\mathrm{w}}_{1j}=\gamma \mathrm{w}_{1j}$ for some $\gamma\in\Gamma$, we find that
$${s}_{1j}=\varphi(\mathrm{w}_{1j})=\varphi(\tilde{\mathrm{w}}_{1j})+\log |\gamma'(\mathrm{w}_{1j})|^2=\tilde{s}_{1j}+\log |\gamma'(\mathrm{w}_{1j})|^2.$$
Since $\gamma$ varies holomorphically with respect to moduli, we find that
$$L_{\bar{\nu}}L_{\mu}s_{1j}$$ does not depend on the choice of the representative point $\mathrm{w}_{1j}$.

On the Teichm\"uller space  for the Riemann surface $Y\simeq \Gamma\backslash \Omega_2$, one can define the function $s_{2j}=\varphi(\mathrm{w}_{2j})$ in the same way. The same properties as above hold for $s_{2j}$.

On the deformation space $\mathfrak{D}(\Gamma)$,   the function
\begin{align*}
s_j= \varphi(\mathrm{w}_{1j}) +\varphi(\mathrm{w}_{2j}) +2\log|\mathrm{w}_{1j} -\mathrm{w}_{2j}|^2
\end{align*}
does not depend on the choice of representatives. Indeed
for any $\gamma\in\Gamma$, we have
\begin{align*}
&\varphi(\gamma(\mathrm{w}_{1j}))+\varphi(\gamma(\mathrm{w}_{2j}))+2\log\left|\gamma(\mathrm{w}_{1j})-\gamma(\mathrm{w}_{2j})\right|^2\\
=&\varphi(\gamma(\mathrm{w}_{1j}))+\varphi(\gamma(\mathrm{w}_{2j}))+\log|\gamma'(\mathrm{w}_{1j})|^2+\log|\gamma'(\mathrm{w}_{2j})|^2+2\log\left| \mathrm{w}_{1j}- \mathrm{w}_{2j}\right|^2\\
=&\varphi(\mathrm{w}_{1j})+\varphi(\mathrm{w}_{2j})+2\log|\mathrm{w}_{1j}-\mathrm{w}_{2j}|^2.
\end{align*}

\begin{theorem}\label{t:potential-ram}Let $\omega_{\text{TZ}, j}^{\text{ell},i}$ be the symplectic form of the Takhtajan-Zograf metric on the Teichm\"uller space $\mathfrak{T}(\Gamma_i)$ for $i=1,2$ corresponding to the elliptic element $\tau_j$. Then
\begin{align*}
\bar{\pa}\pa s_{1j}=i\omega_{\text{TZ}, j}^{\text{ell},1}, \qquad {\bar{\pa}\pa s_{2j}=i\omega_{\text{TZ}, j}^{\text{ell},2}}.
\end{align*}
Hence, $2s_{j}$ is a well-defined potential over $\mathfrak{D}(\Gamma)\simeq \mathfrak{T}(\Gamma_1)\times \mathfrak{T}(\Gamma_2)$ of the Takhtajan-Zograf metric for the ramification point corresponding to $\tau_j$.
\end{theorem}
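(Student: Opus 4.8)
The plan is to realize $s_{1j}$ as the value at a \emph{fixed} point of the pulled-back Liouville field, reduce the statement to a mixed second variation, and extract the kernel $G$ from the linearized Liouville equation. Since $f^{\varepsilon\mu}$ conjugates $\tau_j$ to $\tau_j^{\varepsilon\mu}$, it carries the elliptic fixed point $\mathrm{w}_{1j}$ to $f^{\varepsilon\mu}(\mathrm{w}_{1j})$, so along the deformation in the direction $\mu$ we have $s_{1j}(\varepsilon\mu)=\bigl(\varphi^{\varepsilon\mu}\circ f^{\varepsilon\mu}\bigr)(\mathrm{w}_{1j})$ with $\mathrm{w}_{1j}$ held fixed. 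By Lemma~\ref{theorem1}(ii) this gives $L_\mu s_{1j}=-\dot f_z(\mathrm{w}_{1j})$, i.e. $\partial s_{1j}$ is the $(1,0)$-form $\mu\mapsto-\dot f_z(\mathrm{w}_{1j})$, so the content of the theorem is the mixed second variation $L_{\bar\nu}L_\mu s_{1j}=\ddot\Phi(\mathrm{w}_{1j})$, where $\ddot\Phi$ denotes the mixed $(\mu,\bar\nu)$-second variation of $\varphi^{\varepsilon\mu}\circ f^{\varepsilon\mu}$. Using the holomorphic dependence of the deformed generators on the parameter together with Lemma~\ref{theorem1}(v), one checks that $\ddot\Phi$ descends to a $\Gamma_1$-automorphic function on $\mathbb{U}$; by the estimates of Appendix~\ref{a1} it is bounded near the cusps and regular near the elliptic fixed points, so it may be treated as a function on $X\simeq\Gamma\backslash\Omega_1$.

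The key step is a PDE for $\ddot\Phi$. The linearization of the Liouville operator at $\varphi$ equals $\psi\mapsto -e^{\varphi}\bigl(\Delta_0+\tfrac12\bigr)\psi$, where $\Delta_0$ is the hyperbolic Laplacian (normalized so that $\Delta_0=-e^{-\varphi}\partial_z\partial_{\bar z}$); this is exactly why the kernel $G$ of $(\Delta_0+\tfrac12)^{-1}$ enters. Differentiating the Liouville equation once in $\mu$ and once in $\bar\nu$, after transporting everything to the fixed surface $\mathbb{U}$ by $f^{\varepsilon\mu}$, and then collapsing all the first-order and cross terms — here one uses Lemma~\ref{theorem1}, Lemma~\ref{lemma1}, and parts (ii),(iii) of Lemma~\ref{theorem2}, just as in the proofs of Theorems~\ref{firstvariation} and \ref{thm:second-var} — one obtains
\begin{align*}
\Bigl(\Delta_0+\tfrac12\Bigr)\ddot\Phi=\tfrac12\,\mu\,\overline{\nu}
\end{align*}
on $X$. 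Inverting $\Delta_0+\tfrac12$ against the hyperbolic area element $\rho\,d^2z$ and evaluating at $\mathrm{w}_{1j}$ gives
\begin{align*}
L_{\bar\nu}L_\mu s_{1j}=\ddot\Phi(\mathrm{w}_{1j})=\tfrac12\iint_{\Gamma\backslash\Omega_1}G(\mathrm{w}_{1j},z)\,\mu(z)\,\overline{\nu(z)}\,\rho(z)\,d^2z=\tfrac12\,\langle\mu,\nu\rangle^{\text{ell},1}_{\text{TZ},j},
\end{align*}
the factor $\tfrac12$ being consistent with the normalization $\bar\partial\partial S=-2i\omega_{WP}$ of Theorem~\ref{thm:second-var}. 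This is the identity $\bar\partial\partial s_{1j}=i\,\omega^{\text{ell},1}_{\text{TZ},j}$, and the identical argument on $\Omega_2$ gives $\bar\partial\partial s_{2j}=i\,\omega^{\text{ell},2}_{\text{TZ},j}$.

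Finally, for the last assertion I would note that $\mathrm{w}_{1j}$ and $\mathrm{w}_{2j}$ are the two fixed points of an elliptic element of $\Gamma$, hence depend holomorphically on $\mathfrak{D}(\Gamma)$, and, lying in the disjoint components $\Omega_1$ and $\Omega_2$, never collide; therefore $\log|\mathrm{w}_{1j}-\mathrm{w}_{2j}|^2$ is pluriharmonic and $\bar\partial\partial\bigl(2\log|\mathrm{w}_{1j}-\mathrm{w}_{2j}|^2\bigr)=0$. Combining this with the two identities above gives $\bar\partial\partial(2s_j)=2i\bigl(\omega^{\text{ell},1}_{\text{TZ},j}+\omega^{\text{ell},2}_{\text{TZ},j}\bigr)$, which is the symplectic form of the Takhtajan--Zograf metric on $\mathfrak{D}(\Gamma)\simeq\mathfrak{T}(\Gamma_1)\times\mathfrak{T}(\Gamma_2)$ in the normalization under which $-S$ is the Weil--Petersson potential; since $s_j$ was already shown to be independent of the choice of representatives, $2s_j$ is a globally well-defined potential. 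The main obstacle I anticipate is the second paragraph: showing that after the double variation \emph{all} the messy first-order and cross contributions (notably the product $\dot f_z\,\overline{\dot f_z}$ and the terms coming from the deformed Laplacian) collapse to the single scalar source $\tfrac12\,\mu\,\overline{\nu}$, and at the same time controlling the regularity and automorphy of $\ddot\Phi$ near the cusps and elliptic fixed points precisely enough to justify the $L^2$ Green's-function representation on the non-compact surface $\Gamma\backslash\Omega_1$.
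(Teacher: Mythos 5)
Your proposal follows the same overall strategy as the paper: realize $s_{1j}$ along the deformation as $(\varphi^{\vep\mu}\circ f^{\vep\mu})(\mathrm{w}_{1j})$, reduce the claim to the mixed second variation of the transported Liouville field at the fixed point, identify that second variation with $\tfrac12(\Delta_0+\tfrac12)^{-1}(\mu\bar\nu)$, and dispose of the $\log|\mathrm{w}_{1j}-\mathrm{w}_{2j}|^2$ term by pluriharmonicity. The one genuine difference is in how the key identity is established. The paper transports everything to $\mathbb{U}$ via the commutative diagram with $J_1$, observes that $\log|f^{\vep\mu}_z|^2$ drops out of the mixed variation because $f^{\vep\mu}$ depends holomorphically on $\vep$, and then simply \emph{quotes} Ahlfors' vanishing of the first variations of $F^{\vep\tilde\mu*}\hat\rho$ together with Wolpert's formula
\begin{align*}
\left.\frac{\pa^2}{\pa\bar{\vep}\pa\vep}\right|_{\vep=0}F^{\vep \tilde{\mu} *}\hat{\rho}=\frac{1}{2}\hat{\rho}\left(\tilde{\Delta}_0+\frac{1}{2}\right)^{-1}\left|\tilde{\mu} \right|^2 .
\end{align*}
Your route instead proposes to re-derive this by linearizing the Liouville equation and showing that all first-order and cross terms collapse to the source $\tfrac12\mu\bar\nu$ --- which is exactly the content of Wolpert's computation, and is the step you yourself flag as the main obstacle. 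So your argument is correct in outline, but the ``hard part'' you anticipate is not a gap in the paper's proof at all: it is resolved there by citation to \cite{Wol} (and \cite{3}), and you could close your own argument the same way rather than reproving the second-variation formula. Everything else --- the automorphy and regularity of the mixed variation, the evaluation at $\mathrm{w}_{1j}$ against the Green's kernel, and the holomorphic dependence of the fixed points giving $\bar\pa\pa\log|\mathrm{w}_{1j}-\mathrm{w}_{2j}|^2=0$ --- matches the paper.
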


\begin{proof}
It suffices to prove that for a harmonic Beltrami differential $\mu$ over $X\simeq \Gamma\backslash \Omega_1$,
\begin{align*}
L_{\bar{\mu}}L_{\mu}s_{1j}=\frac{1}{2}\iint\limits_{\Gamma\backslash\Omega_1}G(\mathrm{w}_{1j}, z)\mu(z)\overline{\mu(z)}\rho(z)d^2z.
\end{align*}
Let $\tilde{\mu}=J_1^{*}\mu$ and let $\hat{\rho}$ be the hyperbolic metric on $\mathbb{U}$. From the commutative diagram
\begin{align*}\begin{CD}
\mathbb{U} @> F^{\vep\tilde{\mu}}>> \mathbb{U} \\
@VVJ_1 V @VVJ_1^{\vep } V  \\
\Omega_1 @> f^{\vep\mu}>> \Omega_1^{\vep\mu}
\end{CD}\end{align*}
we have
\begin{align*}
\varphi^{\vep }\circ f^{\vep\mu}+\log \left| f^{\vep\mu}_z\right|^2=\log\left( F^{\vep\tilde{\mu} *}\hat{\rho}\right)\circ J_1^{-1} + \log\left|\left(J_1^{-1}\right)^{\prime}\right|^2.
\end{align*}From this and the Ahlfors formulae \cite{3}:
\begin{align*}
\left.\frac{\pa}{\pa\vep}\right|_{\vep=0}F^{\vep\tilde{\mu} *}\hat{\rho}=&0,\\
\left.\frac{\pa}{\pa\bar{\vep}}\right|_{\vep=0}F^{\vep\tilde{\mu} *}\hat{\rho}=&0,
\end{align*}
and the Wolpert's formula \cite{Wol}:
\begin{align*}
\left.\frac{\pa^2}{\pa\bar{\vep}\pa\vep}\right|_{\vep=0}F^{\vep \tilde{\mu} *}\hat{\rho}=&\frac{1}{2}\hat{\rho}\left(\tilde{\Delta}_0+\frac{1}{2}\right)^{-1}\left|\tilde{\mu} \right|^2
\end{align*}
where $\tilde{\Delta}_0$ is the hyperbolic Laplacian over $\mathbb{U}$ acting on functions,
we have
\begin{align*}
\left.\frac{\pa^2}{\pa\bar{\vep}\pa\vep}\right|_{\vep=0}s_{1j}^{\vep}=&\left.\frac{\pa^2}{\pa\bar{\vep}\pa\vep}\right|_{\vep=0} \varphi^{\vep}\circ f^{\vep\mu}(\mathrm{w}_{1j})\\
=&\frac{1}{2} \left(\tilde{\Delta}_0+\frac{1}{2}\right)^{-1}\left|\tilde{\mu} \right|^2\circ J_1^{-1}(\mathrm{w}_{1j})\\
=&\frac{1}{2} \left(\Delta_0+\frac{1}{2}\right)^{-1}\left|\mu\right|^2 (\mathrm{w}_{1j})\\
=&\frac{1}{2}\iint\limits_{\Gamma\backslash\Omega_1}G(\mathrm{w}_{1j}, z)\mu(z)\overline{\mu(z)}\rho(z)d^2z.
\end{align*}
This and the similar result for $i=2$ complete the proof of the first claim. The second claim follows from the fact
that $\mathrm{w}_{1j}$ and $\mathrm{w}_{2j}$ varies holomorphically.
\end{proof}

For the geodesic $h_j$ connecting two fixed points $\mathrm{w}_1$ and $\mathrm{w}_2$, its renormalized length is defined by
\begin{equation*}
\mathrm{Length}(h_j)= \mathrm{f.p.}  \lim_{\vep\to 0} \mathrm{length} ( R_\vep \cap h_j),
\end{equation*}
where the length is measured by the induced metric on the geodesic $h_j$ from the hyperbolic metric.
By the isometry carrying $h_j$ to the $t$-axes given by $\rho_j^{-1}$ in \eqref{e:def-rho-lambda}, the renormalized length can be computed by the length of the
corresponding subset of the $t$-axes. By Lemma \ref{l:ell-geo}, we can see that the intersection points
of $h_j$ and the hypersurface $f(z,t)=\vep$ are mapped to the following points at $t$-axis:
\begin{equation*}
t_1=\vep e^{-\varphi(\mathrm{w}_1)/2} |\mathrm{w}_1-\mathrm{w}_2|^{-1} +O(\vep^2), \qquad t_2=\vep^{-1}  e^{\varphi(\mathrm{w}_2)/2} |\mathrm{w}_1-\mathrm{w}_2| +O(\vep^2).
\end{equation*}
Hence, the length of $h_{j, \vep}$ is
\begin{equation*}
\int_{t_1}^{t_2} \frac{dt}{t} = -2\log \vep +\frac12 (\varphi(\mathrm{w}_1)+\varphi(\mathrm{w}_2) ) + 2\log |\mathrm{w}_1-\mathrm{w}_2| +O(\vep).
\end{equation*}
Now we can see that the renormalized length is
\begin{equation*}
\mathrm{Length}(h_j)=\frac12 (\varphi(\mathrm{w}_1)+\varphi(\mathrm{w}_2) ) + 2\log |\mathrm{w}_1-\mathrm{w}_2|,
\end{equation*}
which is just the same as $\frac12 s_j$.
By Theorem \ref{t:potential-ram}, this does not depend on the choice of the fundamental domain, and we have
\begin{theorem}\label{thm:potential-elliptic}
$4\,\mathrm{Length}(h_j) = 2 s_j$ is the potential of the Takhtajan-Zograf metric on $\mathfrak{D}(\Gamma)\simeq \mathfrak{T}(\Gamma_1)\times \mathfrak{T}(\Gamma_2)$ corresponding to the elliptic element $\tau_j$.
\end{theorem}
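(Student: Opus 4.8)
The plan is to deduce the statement directly from the computation of the renormalized length performed in the paragraph just before the theorem, combined with Theorem~\ref{t:potential-ram}. First I would recall that the isometry $\rho_j^{-1}$ of~\eqref{e:def-rho-lambda} carries the geodesic $h_j$ onto the $t$-axis of $\mathbb{U}^3$, along which the hyperbolic arc-length element is $dt/t$; hence the length of the truncated geodesic $R_\vep\cap h_j$ equals $\int_{t_1}^{t_2}dt/t=\log(t_2/t_1)$, where $t_1<t_2$ are the $t$-heights at which the preimage of $h_j$ meets the level hypersurface $\{f=\vep\}$. By Lemma~\ref{l:ell-geo} applied to the level-defining function $f(z,t)=t\,e^{\phi(z)/2}+O(t^3)$, one gets $t_1=\vep\,e^{-\varphi(\mathrm{w}_1)/2}|\mathrm{w}_1-\mathrm{w}_2|^{-1}+O(\vep^2)$ near $\mathrm{w}_1$ and $t_2=\vep^{-1}e^{\varphi(\mathrm{w}_2)/2}|\mathrm{w}_1-\mathrm{w}_2|+O(\vep^2)$ near $\mathrm{w}_2$, so that
\[
\mathrm{length}(R_\vep\cap h_j)=-2\log\vep+\tfrac12\bigl(\varphi(\mathrm{w}_1)+\varphi(\mathrm{w}_2)\bigr)+2\log|\mathrm{w}_1-\mathrm{w}_2|+O(\vep).
\]
Taking the finite part kills the $-2\log\vep$ term and leaves $\mathrm{Length}(h_j)=\tfrac12\bigl(\varphi(\mathrm{w}_1)+\varphi(\mathrm{w}_2)\bigr)+2\log|\mathrm{w}_1-\mathrm{w}_2|$.

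Next I would observe that, since $2\log|\mathrm{w}_{1j}-\mathrm{w}_{2j}|=\log|\mathrm{w}_{1j}-\mathrm{w}_{2j}|^2$, this is exactly $\tfrac12 s_j$, where $s_j=\varphi(\mathrm{w}_{1j})+\varphi(\mathrm{w}_{2j})+2\log|\mathrm{w}_{1j}-\mathrm{w}_{2j}|^2$; therefore $4\,\mathrm{Length}(h_j)=2s_j$. In particular $\mathrm{Length}(h_j)$ is independent of the choice of fundamental domain and of the representatives $\mathrm{w}_{1j},\mathrm{w}_{2j}$, because $s_j$ has this invariance by the discussion preceding Theorem~\ref{t:potential-ram}. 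Finally, Theorem~\ref{t:potential-ram} asserts that $2s_j$ is a global potential on $\mathfrak{D}(\Gamma)\simeq\mathfrak{T}(\Gamma_1)\times\mathfrak{T}(\Gamma_2)$ for the Takhtajan--Zograf metric attached to $\tau_j$; combining this with $4\,\mathrm{Length}(h_j)=2s_j$ gives the claim, with $\bar\pa\pa\bigl(4\,\mathrm{Length}(h_j)\bigr)=\bar\pa\pa(2s_j)=2i\bigl(\omega^{\mathrm{ell},1}_{\mathrm{TZ},j}+\omega^{\mathrm{ell},2}_{\mathrm{TZ},j}\bigr)$ once one notes that $\log|\mathrm{w}_{1j}-\mathrm{w}_{2j}|^2$ is pluriharmonic because the fixed points vary holomorphically.

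The only point requiring care is the bookkeeping in the first step: one must check that the truncation implicit in $\mathrm{length}(R_\vep\cap h_j)$ is precisely the cut of $h_j$ by the hypersurface $\{f=\vep\}$ near its two ideal endpoints, so that Lemma~\ref{l:ell-geo} can be invoked, and that the $O(\vep)$ errors in $t_1,t_2$ contribute only $O(\vep)$ to $\log(t_2/t_1)$ and hence drop out of the finite part. Granting the computation of the preceding paragraph, the theorem then follows at once from Theorem~\ref{t:potential-ram}, so I do not expect any further obstacle.
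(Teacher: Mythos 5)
Your proposal is correct and follows essentially the same route as the paper: it uses the isometry $\rho_j^{-1}$ and Lemma \ref{l:ell-geo} to locate the truncation heights $t_1,t_2$, computes $\log(t_2/t_1)$, takes the finite part to get $\mathrm{Length}(h_j)=\tfrac12 s_j$, and concludes via Theorem \ref{t:potential-ram}. Your added remark that the cross term $\log|\mathrm{w}_{1j}-\mathrm{w}_{2j}|^2$ is pluriharmonic is exactly the observation underlying the second claim of Theorem \ref{t:potential-ram}, so nothing further is needed.
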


\appendix
\section{Boundary behavior of the hyperbolic metric}\label{a1}
In this appendix, we collect some facts about the boundary behaviours of $\varphi$.
First we quote some important results about univalent functions on the unit disc (see Theorem 2.4 and Theorem 2.5 in p. 32 of \cite{4}).
\begin{theorem}
If $f:\mathbb{D}\rightarrow\mathbb{C}$ is a univalent function on the unit disc normalized such that $f(0)=0$ and $f'(0)=1$, then
\begin{enumerate}[(a)]
\item $\di  |f'(z)|\leq \frac{1+|z|}{(1-|z|)^3}$,
\item $ \di\left|\frac{zf''(z)}{f'(z)}-\frac{2|z|^2}{1-|z|^2}\right|\leq \frac{4|z|}{1-|z|^2}$.

\end{enumerate}

\end{theorem}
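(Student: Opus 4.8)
The statement is the classical Koebe distortion theorem together with the associated coefficient-type estimate (b), so the whole plan is to reduce everything to Bieberbach's inequality $|a_2|\le 2$ for the class $\mathcal S$ of normalized univalent functions $f(z)=z+a_2z^2+\cdots$ on $\mathbb{D}$. First I would prove (b), and then deduce (a) from it by integrating along radii; part (b) is logically the stronger local statement here, and (a) is its integrated consequence.

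To prove (b), fix $\zeta\in\mathbb{D}$, let $T_\zeta(z)=\frac{z+\zeta}{1+\bar\zeta z}$ be the disc automorphism sending $0$ to $\zeta$, and form the Koebe transform
$$g(z)=\frac{f\!\left(\frac{z+\zeta}{1+\bar\zeta z}\right)-f(\zeta)}{(1-|\zeta|^2)\,f'(\zeta)}.$$
Since $T_\zeta$ maps $\mathbb{D}$ univalently onto itself and $f$ is univalent, $g$ is univalent; a direct check using $T_\zeta'(0)=1-|\zeta|^2$ gives $g(0)=0$ and $g'(0)=1$, so $g\in\mathcal S$. Writing $g(z)=z+b_2z^2+\cdots$ and computing $g''(0)$ by the chain rule, using $T_\zeta'(0)=1-|\zeta|^2$ and $T_\zeta''(0)=-2\bar\zeta(1-|\zeta|^2)$, I expect the clean cancellation
$$b_2=\tfrac12\left[(1-|\zeta|^2)\,\frac{f''(\zeta)}{f'(\zeta)}-2\bar\zeta\right].$$
Bieberbach's bound $|b_2|\le 2$ then yields $\bigl|(1-|\zeta|^2)\tfrac{f''(\zeta)}{f'(\zeta)}-2\bar\zeta\bigr|\le 4$; multiplying through by $\frac{|\zeta|}{1-|\zeta|^2}$ and renaming $\zeta$ as $z$ produces exactly the estimate (b).

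For (a) I would integrate the real part of (b) along a ray. With $z=re^{i\theta}$ and $f'$ zero-free (univalence), $\log f'$ is holomorphic, so $\frac{\partial}{\partial r}\log|f'(re^{i\theta})|=\frac1r\,\re\!\left(z\tfrac{f''}{f'}\right)$. Estimate (b) bounds $\re\!\left(z\tfrac{f''}{f'}\right)$ between $\frac{2r^2-4r}{1-r^2}$ and $\frac{2r^2+4r}{1-r^2}$, so $\frac{\partial}{\partial r}\log|f'|\le \frac{2r+4}{1-r^2}$. Since
$$\frac{d}{dr}\log\frac{1+r}{(1-r)^3}=\frac{1}{1+r}+\frac{3}{1-r}=\frac{4+2r}{1-r^2},$$
integrating from $0$ to $r=|z|$ and using $|f'(0)|=1$ gives $|f'(z)|\le \frac{1+|z|}{(1-|z|)^3}$, which is (a). (The matching lower bound $\frac{1-|z|}{(1+|z|)^3}$ would follow symmetrically from the lower estimate, though only the upper bound is asserted.)

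The one genuinely nontrivial input, and hence the main obstacle, is Bieberbach's inequality $|a_2|\le 2$ itself: I would obtain it from the area theorem applied to the odd square-root transform $\sqrt{f(z^2)}\in\mathcal S$. Since the appendix merely \emph{quotes} these distortion facts, citing \cite{4} for $|a_2|\le 2$ is enough, and the two bounds (a), (b) then follow from the elementary Koebe-transform and integration steps outlined above.
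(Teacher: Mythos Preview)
Your argument is correct and is the standard textbook proof: Bieberbach's inequality applied to the Koebe transform yields (b), and integrating the real part of (b) along a radius gives (a). The paper itself does not prove this theorem at all; it simply quotes it from Pommerenke \cite{4} (Theorems~2.4 and~2.5), so there is nothing to compare against beyond noting that your derivation is exactly the one found in that reference.
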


\vspace{0.5cm}
Given an arbitrary univalent function $f:\mathbb{D}\rightarrow\mathbb{C}$, let
$$g(z)=\frac{f(z)-f(0)}{f'(0)}.$$
Then $g(0)=0$ and $g'(0)=1$.
From this, we obtain
\begin{corollary}\label{cor1}
If $f:\mathbb{D}\rightarrow\mathbb{C}$ is a univalent function on the unit disc, then
\begin{enumerate}[(a)]
\item $\di  |f'(z)|\leq \frac{16|f'(0)|}{(1-|z|^2)^3}$,
\item $ \di\left|\frac{zf''(z)}{f'(z)} -\frac{2|z|^2}{1-|z|^2}\right|\leq \frac{4|z|}{1-|z|^2}$.

\end{enumerate}
\end{corollary}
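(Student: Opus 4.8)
The plan is to deduce the corollary from the quoted distortion theorem by the normalization recorded immediately above the statement. Given an arbitrary univalent $f:\mathbb{D}\to\mathbb{C}$, one sets $g(z)=\bigl(f(z)-f(0)\bigr)/f'(0)$, which is again univalent and satisfies $g(0)=0$, $g'(0)=1$, so the theorem applies verbatim to $g$. It then remains only to transport the two inequalities for $g$ back to $f$.

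First I would handle (b), which is essentially free: since $g'(z)=f'(z)/f'(0)$ and $g''(z)=f''(z)/f'(0)$, the constant factor $f'(0)$ cancels in the ratio, giving $g''(z)/g'(z)=f''(z)/f'(z)$. Hence part (b) of the theorem for $g$ is literally the asserted estimate for $f$, with nothing further to check.

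For (a), part (a) of the theorem for $g$ reads $|g'(z)|\le (1+|z|)/(1-|z|)^3$, and multiplying by $|f'(0)|$ gives $|f'(z)|\le |f'(0)|\,(1+|z|)/(1-|z|)^3$. To reach the stated form I would then verify the elementary inequality
\[
\frac{1+|z|}{(1-|z|)^3}\;\le\;\frac{16}{(1-|z|^2)^3}\qquad (|z|<1).
\]
Writing $1-|z|^2=(1-|z|)(1+|z|)$, this reduces to $(1+|z|)^4\le 16$, i.e.\ $1+|z|\le 2$, which is immediate. This last manipulation is the only genuine computation in the proof, and it is routine, so I do not anticipate any real obstacle: all the content is carried by the cited theorem, and the corollary is just its scale- and translation-invariant reformulation.
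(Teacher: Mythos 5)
Your proposal is correct and follows exactly the route the paper intends: normalize to $g(z)=(f(z)-f(0))/f'(0)$, apply the quoted distortion theorem, note that the ratio $f''/f'$ is unchanged for (b), and absorb the factor $(1+|z|)^4\le 16$ for (a). The paper leaves these verifications implicit, but your filled-in details are the standard ones and are all accurate.
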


\vspace{0.5cm}
From (a) of Corollary \ref{cor1}, we  obtain
\begin{corollary}\label{cor2}
If $f:\mathbb{D}\rightarrow\mathbb{C}$ is a univalent function on the unit disc, then as $|z|\rightarrow 1^-$,
 $$   |f'(z)|=O\left( \frac{1}{(1-|z|^2)^3}\right).$$\end{corollary}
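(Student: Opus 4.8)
The plan is to read the statement off directly from part (a) of Corollary \ref{cor1}. That estimate gives, for every $z \in \mathbb{D}$,
\[
|f'(z)| \leq \frac{16\,|f'(0)|}{(1-|z|^2)^3},
\]
and the key point I would emphasize is that $16\,|f'(0)|$ is a constant depending only on the fixed univalent map $f$, not on $z$. Thus the right-hand side is a fixed multiple of $(1-|z|^2)^{-3}$, which is precisely the meaning of $|f'(z)| = O\bigl((1-|z|^2)^{-3}\bigr)$ as $|z| \to 1^-$.

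Concretely, I would observe that the inequality above rearranges to $|f'(z)|\,(1-|z|^2)^3 \leq 16\,|f'(0)|$ for all $|z|<1$, so the quotient $|f'(z)|\big/(1-|z|^2)^{-3}$ stays bounded by $16\,|f'(0)|$ uniformly on the disc, in particular as $z$ approaches $\partial\mathbb{D}$. No further limiting argument is required beyond this uniform bound.

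The hard part here is genuinely nonexistent: all the work has already been done in the growth theorem for univalent functions quoted at the start of the appendix and in its normalization-free version, Corollary \ref{cor1}. The only minor point worth flagging is that the implied constant depends on $f$ through $f'(0)$; this is harmless for the intended applications to the boundary behaviour of the hyperbolic potential $\varphi$ near punctures and ramification points in the sequel.
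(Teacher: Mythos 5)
Your argument is correct and is exactly the paper's: Corollary \ref{cor2} is read off immediately from part (a) of Corollary \ref{cor1}, with the implied constant $16\,|f'(0)|$ depending only on $f$. Nothing further is needed.
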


 \vspace{0.5cm}

Let $\sigma$ be the linear fractional transformation
$$\sigma(z)=\frac{z-i}{z+i}.$$Then $\sigma$ maps $\mathbb{U}$ onto $\mathbb{D}$.
Given a quasi-Fuchsian group $\Gamma$ with domain of discontinuity $\Omega_1\sqcup \Omega_2$, let $\Xi=J_1\circ \sigma^{-1}$. Then
$\Xi$ maps $ \mathbb{D}$ biholomorphically onto $\Omega_1$.   The hyperbolic metric     $e^{\varphi(z)}|dz|^2$ on $\Omega_1$ satisfies
\begin{align*}
e^{\varphi\circ \Xi (z)}|\Xi'(z)|^2=\frac{4}{(1-|z|^2)^2}.
\end{align*}
From this, we find that
\begin{align}\label{eq6_2_1}
\varphi\circ\Xi(z)=&\log 4-2\log (1-|z|^2)-\log |\Xi'(z)|^2,\end{align}
\begin{align}\label{eq3_10_1}
\varphi_z\circ\Xi(z)\Xi'(z)+\frac{\Xi''(z)}{\Xi'(z)}=&\frac{2\bar{z}}{1-|z|^2}.
\end{align}

 \vspace{0.5cm}

\begin{lemma}\label{lemma4}
As $z$ approaches the limit set $\mathcal{C}$,
$$\varphi(z)=O\left(\log \left(1-\left|\Xi^{-1}(z)\right|^2\right)\right)=O\left(\log\left[\left(\text{Im} \, J_1^{-1}(z)\right)^2\right]\right).$$
\end{lemma}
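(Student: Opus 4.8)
The plan is to transport the question to the unit disc through the biholomorphism $\Xi=J_1\circ\sigma^{-1}\colon\mathbb{D}\to\Omega_1$ and read off the growth of $\varphi$ from the Koebe-type distortion estimates recorded earlier in this appendix. Put $w=\Xi^{-1}(z)$; as $z$ approaches $\mathcal{C}=\partial\Omega_1$ we have $|w|\to 1^-$, and \eqref{eq6_2_1} gives
\begin{align*}
\varphi(z)=\varphi\circ\Xi(w)=\log 4-2\log\bigl(1-|w|^2\bigr)-\log\bigl|\Xi'(w)\bigr|^2 .
\end{align*}
The term $\log 4$ is bounded, and $-2\log(1-|w|^2)$ is, up to the factor $2$, exactly $\log(1-|\Xi^{-1}(z)|^2)$; so the whole matter reduces to showing $\log|\Xi'(w)|^2=O\bigl(\log(1-|w|^2)\bigr)$.

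To that end I would apply the distortion bounds to the univalent function $\Xi$. Part (a) of Corollary \ref{cor1} (equivalently Corollary \ref{cor2}) gives the upper bound $|\Xi'(w)|\le 16|\Xi'(0)|\,(1-|w|^2)^{-3}$. For the matching lower bound I would integrate the estimate (b) of Corollary \ref{cor1} along the radial segment $[0,w]$: with $z=re^{i\theta}$ one has $\partial_r\log|\Xi'(re^{i\theta})|=r^{-1}\mathrm{Re}\bigl(z\Xi''(z)/\Xi'(z)\bigr)$, and (b) forces this radial derivative to lie between $(2r-4)/(1-r^2)$ and $(2r+4)/(1-r^2)$; integrating yields $|\Xi'(w)|\ge |\Xi'(0)|\,(1-|w|)/(1+|w|)^3$. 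Since $\Xi'(0)\ne 0$ and $\tfrac12(1-|w|^2)\le 1-|w|\le 1-|w|^2$, taking logarithms in these two inequalities produces $\bigl|\log|\Xi'(w)|^2\bigr|\le C\,|\log(1-|w|^2)|+C'$ for $|w|$ near $1$. Inserting this into the displayed identity gives $|\varphi(z)|\le C''\,|\log(1-|w|^2)|$ near $\mathcal{C}$, which is the first asserted equality.

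For the second equality I would compute explicitly with the Cayley transform. Since $\Xi^{-1}=\sigma\circ J_1^{-1}$ with $\sigma(\zeta)=(\zeta-i)/(\zeta+i)$, one has $1-|\sigma(\zeta)|^2=4\,\mathrm{Im}\,\zeta/|\zeta+i|^2$ for $\zeta\in\mathbb{U}$, so
\begin{align*}
\log\bigl(1-|\Xi^{-1}(z)|^2\bigr)&=\log 4+\log\mathrm{Im}\,J_1^{-1}(z)-\log\bigl|J_1^{-1}(z)+i\bigr|^2\\
&=\tfrac12\log\!\Bigl[\bigl(\mathrm{Im}\,J_1^{-1}(z)\bigr)^2\Bigr]+O(1),
\end{align*}
the last step holding because, as $z$ tends to a finite point of $\mathcal{C}$, the factor $|J_1^{-1}(z)+i|$ stays bounded above and below while $\mathrm{Im}\,J_1^{-1}(z)\to 0$. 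As the right-hand side tends to $-\infty$, the additive $O(1)$ is absorbed and the two $O$-classes coincide, which combined with the previous paragraph finishes the proof; the situation near $\infty\in\mathcal{C}$ is reduced to this one by a preliminary Möbius normalization sending the boundary point under consideration to a finite point.

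I expect the one genuinely non-routine ingredient to be the \emph{lower} Koebe bound on $|\Xi'(w)|$: the upper bound alone controls $\varphi$ from below, and without the lower distortion estimate one cannot rule out that $-\log|\Xi'(w)|^2$ grows faster than $|\log(1-|w|^2)|$. Everything else — the radial integration, the Cayley identity, and the $\infty\in\mathcal{C}$ bookkeeping — is standard.
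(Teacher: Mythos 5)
Your proposal is correct and follows essentially the same route as the paper: apply \eqref{eq6_2_1}, control $\log|\Xi'(\Xi^{-1}(z))|^2$ by $O\bigl(\log(1-|\Xi^{-1}(z)|^2)\bigr)$ via the distortion estimates, and then convert to $\mathrm{Im}\,J_1^{-1}(z)$ using the identity $1-|\sigma(\zeta)|^2=4\,\mathrm{Im}\,\zeta/|\zeta+i|^2$. You are in fact slightly more careful than the paper, which cites only the upper bound of Corollary \ref{cor2}, whereas you correctly note that the matching lower Koebe bound (obtained by radially integrating part (b) of Corollary \ref{cor1}) is also needed to get two-sided control of $\log|\Xi'|^2$.
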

\begin{proof}
From \eqref{eq6_2_1}, we have
\begin{align*}
\varphi(z)=& \log 4-2\log \left(1-|\Xi^{-1}(z)|^2\right)-\log |\Xi'(\Xi^{-1}(z))|^2.
\end{align*}
By Corollary \ref{cor2}, we have
$$\log |\Xi'(\Xi^{-1}(z))|^2=O\left(\log\left(1-|\Xi^{-1}(z)|^2\right)\right)$$The assertion follows from the fact that
\begin{align*}
1-|\sigma(z)|^2=\frac{4\,\text{Im}\,z}{|z+i|^2}.
\end{align*}
\end{proof}

\begin{theorem}\label{thm2}
Let $\Omega_1$ be a component of the domain of discontinuity of the quasi-Fuchsian group $\Gamma$, and let $F_1$ be a fundamental domain for the action of $\Gamma$ on $\Omega_1$. Then the integral
\begin{align*}
\iint\limits_{F_1}\left(|\varphi_z|^2+e^{\varphi}\right)d^2z
\end{align*} is well-defined.
\end{theorem}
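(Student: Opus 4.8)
The plan is to pull the integral back to the unit disc by the Riemann map $\Xi=J_1\circ\sigma^{-1}\colon\mathbb{D}\to\Omega_1$ introduced in the appendix, and then to dominate the resulting integrand by a constant multiple of the hyperbolic area density, whose integral over a fundamental domain is finite. Put $\tilde F=\Xi^{-1}(F_1)\subset\mathbb{D}$. Since $\Xi$ is a biholomorphism, the change of variables $z=\Xi(w)$ gives $d^2z=|\Xi'(w)|^2\,d^2w$, hence
\begin{align*}
\iint_{F_1}\bigl(|\varphi_z|^2+e^{\varphi}\bigr)\,d^2z=\iint_{\tilde F}\Bigl(\bigl|(\varphi_z\circ\Xi)\,\Xi'\bigr|^2+e^{\varphi\circ\Xi}\,|\Xi'|^2\Bigr)\,d^2w .
\end{align*}
By \eqref{eq6_2_1} the second term of the integrand equals $4(1-|w|^2)^{-2}$, and by \eqref{eq3_10_1} the first term equals $\bigl|\,2\bar w\,(1-|w|^2)^{-1}-\Xi''(w)/\Xi'(w)\,\bigr|^2$.

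Next I would estimate the pre-Schwarzian $\Xi''/\Xi'$. As $\Xi$ is univalent on $\mathbb{D}$, part (b) of Corollary~\ref{cor1} gives $\bigl|\,w\Xi''(w)/\Xi'(w)-2|w|^2(1-|w|^2)^{-1}\,\bigr|\le 4|w|\,(1-|w|^2)^{-1}$, and dividing by $|w|$ (letting $w\to0$ for the value at the origin) yields
\begin{align*}
\Bigl|\,\frac{2\bar w}{1-|w|^2}-\frac{\Xi''(w)}{\Xi'(w)}\,\Bigr|\le\frac{4}{1-|w|^2}\qquad\text{for all }w\in\mathbb{D}.
\end{align*}
Hence the whole integrand in the $w$-picture is dominated pointwise by $16(1-|w|^2)^{-2}+4(1-|w|^2)^{-2}=20(1-|w|^2)^{-2}$. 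Since $4(1-|w|^2)^{-2}|dw|^2$ is the hyperbolic metric on $\mathbb{D}$ and $\tilde F$ is a fundamental domain for the Fuchsian group $\Xi^{-1}\Gamma\Xi=\sigma\Gamma_1\sigma^{-1}$, the integral $\iint_{\tilde F}4(1-|w|^2)^{-2}\,d^2w$ equals the hyperbolic area of $X\simeq\Gamma\backslash\Omega_1$, which by Gauss--Bonnet is $2\pi\chi(X)$ and finite because $\chi(X)>0$. Consequently $\iint_{F_1}(|\varphi_z|^2+e^{\varphi})\,d^2z\le 5\cdot2\pi\chi(X)=10\pi\chi(X)<\infty$, which is the assertion; the argument applies verbatim to any measurable fundamental domain in $\Omega_1$.

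I do not expect a serious obstacle beyond assembling the ingredients already prepared in the appendix. What makes it work is that the $e^{\varphi}$–term is simply the (finite) hyperbolic area, while the coordinate–dependent term $|\varphi_z|^2\,d^2z$, once transferred to $\mathbb{D}$, is controlled exactly by the univalent pre-Schwarzian bound of Corollary~\ref{cor1}(b), which is just strong enough near the limit set $\mathcal{C}$. The only point calling for a moment's care is the behaviour at the parabolic fixed points lying on $\mathcal{C}$: there both $e^{\varphi}$ and the bound above blow up pointwise, but the corresponding cuspidal (horocyclic) pieces of $\tilde F$ have finite hyperbolic area, so integrability is retained; the elliptic fixed points are interior points of $\Omega_1$ where $\varphi$ is real-analytic and present no difficulty.
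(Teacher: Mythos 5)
Your proposal is correct and follows essentially the same route as the paper's own proof: pull back to $\mathbb{D}$ via $\Xi$, identify $(\varphi_z\circ\Xi)\,\Xi'$ with $2\bar w(1-|w|^2)^{-1}-\Xi''/\Xi'$ using \eqref{eq3_10_1}, bound it by $4(1-|w|^2)^{-1}$ via Corollary~\ref{cor1}(b), and dominate the whole integrand by $5$ times the hyperbolic area density, which integrates to the finite hyperbolic area of $X$. The only cosmetic difference is that the paper phrases the conclusion through the truncated domains $F_1^{\varepsilon}$ and a limit $\varepsilon\to 0^+$, which your direct domination argument subsumes.
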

\begin{proof}
The group
 $$\hat{\Gamma}_1= \Xi^{-1}\circ \Gamma\circ \Xi  $$ is a subgroup of $\text{PSU}(1,1)$.
 Let $v_1, \ldots, v_n$ be the parabolic fixed points of $\Gamma$ and let $x_i=\Xi^{-1}(v_i)$ be the corresponding parabolic fixed points of $\hat{\Gamma}_1$.  Define $$ F_1^{\vep}=F_1\setminus \bigcup_{i=1}^n \Xi\left(\{|z-x_i|<\varepsilon\}\right).$$
We want to show that the limit
$$\lim_{\varepsilon\rightarrow 0^+}\iint\limits_{F_1^{\vep}}\left(|\varphi_z|^2+e^{\varphi}\right)d^2z$$exists.
Notice that
\begin{align*}
\iint\limits_{F_1^{\vep}}\left(|\varphi_z|^2+e^{\varphi}\right)d^2z=\iint\limits_{\Xi^{-1}(F_1^{\varepsilon})}
\left(|\varphi_z\circ \Xi(z) \Xi'(z)|^2+e^{\varphi\circ \Xi(z)}|\Xi'(z)|^2\right)d^2z.
\end{align*}
By  \eqref{eq3_10_1} and Corollary \ref{cor1},
\begin{align*}
|\varphi_z\circ \Xi(z) \Xi'(z)|=\left|\frac{\Xi''(z)}{\Xi'(z)}-\frac{2\bar{z}}{1-|z|^2}\right|\leq\frac{4}{1-|z|^2}.
\end{align*}
Therefore,
\begin{align*}
\iint\limits_{\Xi^{-1}(F_1^{\varepsilon})}
\left(|\varphi_z\circ \Xi(z) \Xi'(z)|^2+e^{\varphi\circ \Xi(z)}|\Xi'(z)|^2\right)d^2z\leq & 5\iint\limits_{\Xi^{-1}(F_1^{\varepsilon})}
\frac{4}{(1-|z|^2)^2}d^2z.
\end{align*}
As $\varepsilon\rightarrow 0^+$,
$$\iint\limits_{\Xi^{-1}(F_1^{\varepsilon})}
\frac{4}{(1-|z|^2)^2}d^2z$$ gives the hyperbolic area of the Riemann surface $X=\Gamma\backslash\Omega_1$, which is finite. This proves the assertion of the theorem.
\end{proof}
\begin{lemma}\label{lemma1}
Assume that $\mathrm{w}_1$ and $\mathrm{w}_2$ are the fixed points of the elliptic element $\tau\in \text{PSL} (2, \mathbb{C})$ of order $m$, then
\begin{enumerate}[(a)]
\item $\di |\tau'(\mathrm{w}_1)|=|\tau'(\mathrm{w}_2)|=1$,
\item $\di \frac{\tau''(\mathrm{w}_1)}{\tau'(\mathrm{w}_1)}=\frac{2\left(e^{\frac{2\pi i}{m}}-1\right)}{\mathrm{w}_1-\mathrm{w}_2}$,
\item $\di \frac{\tau''(\mathrm{w}_2)}{\tau'(\mathrm{w}_2)}=\frac{2\left(1-e^{-\frac{2\pi i}{m}}\right)}{\mathrm{w}_1-\mathrm{w}_2}$.
\end{enumerate}
\end{lemma}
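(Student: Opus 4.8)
The plan is to reduce all three identities to the explicit Möbius form of $\tau$. Since $\tau$ is elliptic of order $m$ with fixed points $\mathrm{w}_1$ and $\mathrm{w}_2$, I would write $\tau=\rho\lambda_m\rho^{-1}$ with $\rho$ and $\lambda_m$ as in \eqref{e:def-rho-lambda}, so that $\rho(0)=\mathrm{w}_1$, $\rho(\infty)=\mathrm{w}_2$ and $\lambda_m\colon w\mapsto e^{2\pi i/m}w$. Carrying out the conjugation and normalizing the resulting matrix to lie in $\mathrm{SL}(2,\mathbb{C})$ --- the raw $2\times2$ matrix one obtains has determinant $(\mathrm{w}_1-\mathrm{w}_2)^2$, so one divides through by $\mathrm{w}_1-\mathrm{w}_2$ --- one gets entries $c=c(\tau)$ and $d=d(\tau)$ with
\[
c\,\mathrm{w}_1+d=e^{-i\pi/m},\qquad c\,\mathrm{w}_2+d=e^{i\pi/m},\qquad c=\frac{e^{-i\pi/m}-e^{i\pi/m}}{\mathrm{w}_1-\mathrm{w}_2}.
\]
This is the same computation already carried out in the body of the paper for $\tau_j^{-1}$ (cf.\ the explicit formula for $\tau_j^{-1}$ and \eqref{e:der-tau}); one only replaces $\pi/m$ by $-\pi/m$ to pass from $\tau^{-1}$ to $\tau$.

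Granting this, (a) would follow at once from the identity $\tau'(z)=(cz+d)^{-2}$, valid in the $\mathrm{SL}(2,\mathbb{C})$ normalization: it gives $\tau'(\mathrm{w}_1)=e^{2\pi i/m}$ and $\tau'(\mathrm{w}_2)=e^{-2\pi i/m}$, both unimodular. (Alternatively, the multiplier of $\tau$ at a fixed point is a conjugation invariant, hence equals the multiplier $e^{\pm 2\pi i/m}$ of $\lambda_m$ at $0$ or $\infty$.) For (b) and (c) I would differentiate once more and use $\tau''(z)/\tau'(z)=-2c/(cz+d)$, substituting the two values of $cz+d$ found above:
\[
\frac{\tau''(\mathrm{w}_1)}{\tau'(\mathrm{w}_1)}=\frac{-2c}{e^{-i\pi/m}}=\frac{2(e^{2\pi i/m}-1)}{\mathrm{w}_1-\mathrm{w}_2},\qquad
\frac{\tau''(\mathrm{w}_2)}{\tau'(\mathrm{w}_2)}=\frac{-2c}{e^{i\pi/m}}=\frac{2(1-e^{-2\pi i/m})}{\mathrm{w}_1-\mathrm{w}_2},
\]
which are precisely the asserted formulas.

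I do not expect a genuine obstacle here; the proof is a direct computation, and the only care needed is bookkeeping: keeping the $\mathrm{SL}(2,\mathbb{C})$ normalization consistent (it matters for (a), while the ratio in (b), (c) is insensitive to the scaling of the matrix), and working with $\tau$ rather than $\tau^{-1}$ so that the rotation multiplier is $e^{2\pi i/m}$ and not its inverse. One should also record that the statement tacitly uses the convention, fixed by $\lambda_m$ in \eqref{e:def-rho-lambda}, that $\tau$ rotates by angle $2\pi/m$; for a general primitive elliptic element of order $m$ the identical argument yields $e^{2\pi i k/m}$ in place of $e^{2\pi i/m}$.
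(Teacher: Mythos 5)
Your proposal is correct and follows essentially the same route as the paper, whose proof is exactly the ``direct computation from the expression of $\tau$ in \eqref{e:def-rho-lambda}'' that you carry out: conjugating $\lambda_m$ by $\rho$, normalizing to $\mathrm{SL}(2,\mathbb{C})$, and evaluating $\tau'(z)=(cz+d)^{-2}$ and $\tau''(z)/\tau'(z)=-2c/(cz+d)$ at the fixed points. Your computed values $c\,\mathrm{w}_1+d=e^{-i\pi/m}$ and $c\,\mathrm{w}_2+d=e^{i\pi/m}$ check out, and your remarks on the normalization and on working with $\tau$ rather than $\tau^{-1}$ are the right points of care.
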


\begin{proof}
A direct computation from the expression of $\tau$ in \eqref{e:def-rho-lambda} gives the desired result.
\end{proof}

\begin{lemma}\label{lemma3}
Assume that $v\neq \infty$ is a fixed point of the parabolic element $\displaystyle\kappa=\begin{pmatrix} 1+qv & -qv^2\\q & 1-qv\end{pmatrix}\in \text{PSL}(2, \mathbb{C})$, then as $z\rightarrow v$,
\begin{enumerate}[(a)]
\item $\di \kappa'(z)=1+O((z-v))$,\\
\item $\di \frac{\kappa''(z)}{\kappa'(z)}=-2q+O((z-v))$.
\end{enumerate}
\end{lemma}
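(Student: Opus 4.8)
The plan is a direct computation from the matrix representation of $\kappa$. First I would record that the given matrix $\kappa=\begin{pmatrix} 1+qv & -qv^2\\ q & 1-qv\end{pmatrix}$ has determinant $(1+qv)(1-qv)+q^2v^2=1$, so it is already a normalized representative in $\mathrm{SL}(2,\mathbb{C})$. Consequently the associated linear fractional transformation $\kappa(z)=\dfrac{(1+qv)z-qv^2}{qz+1-qv}$ satisfies the standard formula $\kappa'(z)=(qz+1-qv)^{-2}$, and the crucial observation is the telescoping identity $qz+1-qv=1+q(z-v)$, which recenters everything at the fixed point $v$ and gives $\kappa'(z)=\bigl(1+q(z-v)\bigr)^{-2}$.

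From here part (a) follows at once by Taylor expanding about $z=v$: $\bigl(1+q(z-v)\bigr)^{-2}=1-2q(z-v)+O\bigl((z-v)^2\bigr)=1+O(z-v)$ as $z\to v$. For part (b) I would differentiate once more, $\kappa''(z)=-2q\bigl(1+q(z-v)\bigr)^{-3}$, whence $\dfrac{\kappa''(z)}{\kappa'(z)}=\dfrac{-2q}{1+q(z-v)}=-2q\bigl(1-q(z-v)+O((z-v)^2)\bigr)=-2q+O(z-v)$, which is the claimed expansion.

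There is no real obstacle here; the statement is purely computational. The only two points requiring a moment's attention are verifying the determinant normalization $ad-bc=1$ (without which the formula $\kappa'=(cz+d)^{-2}$ would carry an extra constant factor) and spotting the substitution $cz+d=1+q(z-v)$, after which both asymptotic expansions reduce to one-line geometric-series estimates. An alternative, slightly more invariant, route would be to conjugate $\kappa$ to the standard translation $z\mapsto z+1$ by a Möbius map sending $v$ to $\infty$ and then transfer back, but the explicit calculation above is shorter and self-contained.
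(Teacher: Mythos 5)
Your computation is correct and is exactly the ``direct computation from the expression of $\kappa$'' that the paper invokes without writing out: the determinant check, the identity $qz+1-qv=1+q(z-v)$, and the resulting expansions of $\kappa'$ and $\kappa''/\kappa'$ are all right. Nothing further is needed.
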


\begin{proof}
A direct computation from the expression of $\kappa$ in \eqref{e:def-kappa} gives the desired result.
\end{proof}

\begin{theorem}
 The classical Liouville action $S[\varphi]$ is well-defined.
\end{theorem}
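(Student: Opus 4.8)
The plan is to verify that each of the three pairings in the definition \eqref{Liouvilleaction} of $S[\varphi]$ (with $\phi=\varphi$) is a convergent improper integral, the only possible sources of divergence being the ends of the (images of the) fundamental polygons near the parabolic fixed points on the limit set $\mathcal C$ and near the elliptic fixed points of $\Gamma$. The hyperbolic and parabolic contributions are handled exactly as in \cite{2} and \cite{1}; what is genuinely new is to check that the elliptic elements introduce no divergence, and for this the appendix estimates together with the observation, made just after the definition of $\mathcal{CM}(X\sqcup Y)$, that $\varphi$ is regular at the elliptic fixed points will be enough.

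The first pairing is the easiest. Since $\omega=\omega[\varphi]=(|\varphi_z|^2+e^{\varphi})\,dz\wedge d\bar z$, up to the factor $\tfrac i2$ the quantity $\langle\omega,F_1-F_2\rangle$ equals $\iint_{F_1}(|\varphi_z|^2+e^{\varphi})\,d^2z-\iint_{F_2}(|\varphi_z|^2+e^{\varphi})\,d^2z$, and each of these is finite by Theorem \ref{thm2} applied to $\Omega_1$ and to $\Omega_2$ respectively.

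For the second pairing I would argue as follows. Here $L_1-L_2$ is a finite integer combination of edges of $J_1(F)$ and $J_1(\bar F)$, each tensored with a generator. On the hyperbolic edges $a_k,b_k$ the integration path is a compact arc in the interior of $\Omega$, on which $\varphi$ is smooth and $\check\theta_{\gamma^{-1}}[\varphi]$ is a smooth $1$-form (the arc avoiding the pole of $\gamma''/\gamma'$, as in \cite{2}), so these contribute finitely. On an elliptic edge $d_j$ the path ends at an elliptic fixed point $\mathrm{w}_{1j}\in\Omega_1$ (or $\mathrm{w}_{2j}\in\Omega_2$), which is an interior point at which $\varphi$ extends smoothly by \eqref{eq0630_1} and at which $|\tau_j'|$ and $\tau_j''/\tau_j'$ are finite by Lemma \ref{lemma1}; hence $\check\theta_{\tau_j^{-1}}[\varphi]$ is bounded there and the integral over $d_j$ converges. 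The only delicate case is a parabolic edge $c_i$: when $v_i=\infty$ the form $\check\theta_{\kappa_i}[\varphi]$ vanishes since $c(\kappa_i)=0$; when $v_i$ is finite, Lemma \ref{lemma3} shows that $\tfrac12\log|\kappa_i'|^2$ is bounded and $\tfrac{\kappa_i''}{\kappa_i'}\,dz-\tfrac{\overline{\kappa_i''}}{\overline{\kappa_i'}}\,d\bar z$ is a bounded $1$-form near $v_i$, while $\log|c(\kappa_i)|^2$ is a constant, so $\check\theta_{\kappa_i}[\varphi]$ is, up to a bounded factor, $\varphi$ times a bounded $1$-form; since $\varphi$ grows at worst like $\log$ of the distance to $\mathcal C$ by Lemma \ref{lemma4}, the integrand along $c_i$ is $O(|\log s|)$ in the arc-length parameter $s$, hence integrable. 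This is the estimate carried out in detail in \cite{1}.

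For the third pairing, $\check u=\delta\check\theta$ does not involve $\varphi$: by \eqref{u2} it is a finite sum of $1$-forms of the shape $\log|g|^2\,(g\,dz-\bar g\,d\bar z)$ with $g=h''/h'$ for a Möbius $h$, each having a single simple pole. Thus $\check u_{\gamma_1^{-1},\gamma_2^{-1}}$ is smooth off the finitely many poles of the relevant Möbius derivatives, and since $W_1-W_2$ is built from the compact, $\Gamma$-contractible paths $P_k$ and their conjugates, these paths may be chosen to avoid those finitely many points, exactly as in \cite{2}; each $\int_{P_k}\check u_{\gamma_1^{-1},\gamma_2^{-1}}$ is then finite, and the elliptic generators merely add further terms of the same type. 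Combining the three estimates gives the claim. The one step requiring genuine care is the parabolic estimate of the previous paragraph — establishing, via Lemmas \ref{lemma3} and \ref{lemma4}, that the logarithmic blow-up of $\varphi$ at $\mathcal C$ is integrable along $c_i$; the elliptic contributions, by contrast, are finite for the soft reason that the elliptic fixed points are interior points at which $\varphi$ and all the relevant Möbius data are regular.
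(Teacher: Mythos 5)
Your proposal is correct and follows essentially the same route as the paper: the bulk term is handled by Theorem \ref{thm2}, the edge terms $\langle\check\theta[\varphi],L_1-L_2\rangle$ by the logarithmic bound of Lemma \ref{lemma4} together with the parabolic estimates of Lemma \ref{lemma3}, and the $\langle\check u,W_1-W_2\rangle$ term by the regularity of the M\"obius data at the elliptic and parabolic fixed points (Lemmas \ref{lemma1} and \ref{lemma3} of the appendix). Your write-up merely makes explicit the case distinctions (hyperbolic versus parabolic versus elliptic edges, $v_i=\infty$ versus $v_i$ finite) that the paper leaves implicit.
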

\begin{proof}
Recall that the classical Liouville action is defined as
\begin{align*}
S[\varphi] =&\frac{i}{2}\Bigl(\langle \omega[\varphi], F_1-F_2\rangle-\langle\check{\theta}[\varphi], L_1-L_2\rangle+\langle \check{u}, W_1-W_2\rangle\Bigr).
\end{align*}
Theorem \ref{thm2} shows that $\langle \omega[\varphi], F_1-F_2\rangle$ is well-defined. Lemma \ref{lemma4} and Lemma \ref{lemma3} show that $\langle\check{\theta}[\varphi], L_1-L_2\rangle$ is well-defined. Lemma \ref{lemma1} and Lemma \ref{lemma3} show that $\langle \check{u}, W_1-W_2\rangle$ is well-defined.
\end{proof}

In the following, we want to justify the applicability of the Stokes' Theorem in the proof of Theorem \ref{firstvariation}.

Given $\mu\in \Omega^{-1,1}(\Gamma)$, we can write it as
$\mu=\mu_1+\mu_2$, where $\mu_i$ has support on $\overline{\Omega_i}$, $i=1, 2$. Let us just concentrate on $\mu_1$. It can be written as
$$\mu_1(z)=e^{-\varphi(z)}\overline{Q(z)},$$where $Q(z)$ is a cusp form of weight 4 for $\Gamma$. Let $v$ be the fixed point of the parabolic element $\kappa\in \Gamma$. There is a biholomorphism $J: \mathbb{U}\rightarrow\Omega_1$ such that $J(\infty)=v$, and
$$J^{-1}\circ \kappa\circ J=\begin{pmatrix} 1 & 1\\ 0 & 1\end{pmatrix}.$$Notice that
$X\simeq \widetilde{\Gamma}\backslash\mathbb{U}$, where $\widetilde{\Gamma}= J^{-1}\circ\Gamma\circ J$.
$$\tilde{\mu}=\mu_1\circ J \frac{\overline{J'}}{J'}$$is a cusp form of $\widetilde{\Gamma}$. It has an expansion of the form
$$\tilde{\mu}(z)=y^2\sum_{k=1}^{\infty} a_ke^{-2\pi i k\bar{z}}.$$
Therefore,
\begin{align*}
\mu_1(z)=&\left(\text{Im}\,J^{-1}(z)\right)^2\sum_{k=1}^{\infty}a_k e^{-2\pi i k \overline{J^{-1}(z)}}\;\frac{\overline{J^{-1\prime}(z)}}{J^{-1\prime}(z)}.
\end{align*}Hence,
\begin{align}\label{eq7_10_1}\left|\mu_1(z)\right|\sim C\left(\text{Im}\,J^{-1}(z)\right)^2 \exp\left(-2\pi \text{Im}\,J^{-1}(z)\right)\rightarrow 0 \end{align}as $z\rightarrow v$.
\begin{lemma}\label{lemma6}Let $v$ be the fixed point of the parabolic element $\kappa\in \Gamma$. Then
 $\dot{f}_{z\bar{z}}(z)\rightarrow0$ as $z\rightarrow v$.\end{lemma}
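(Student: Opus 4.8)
The plan is to reduce the statement to the exponential decay of $\mu_1$ near the cusp, already recorded in \eqref{eq7_10_1}, combined with an at-most-polynomial bound on $\varphi_z$ there.

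First I would observe that, since $\dot f_{\bar z}=\mu$, we have $\dot f_{z\bar z}=\mu_z$. Write $\mu=\mu_1+\mu_2$ with $\mu_i$ supported on $\overline{\Omega_i}$; as $v\in\mathcal C$ and $\overline{\Omega_2}\cap\Omega_1=\varnothing$, for $z\in\Omega_1$ near $v$ one has $\mu(z)=\mu_1(z)=e^{-\varphi(z)}\overline{Q(z)}$ with $Q$ a holomorphic (cusp) form of weight $4$. Since $\overline Q$ is anti-holomorphic, $\mu_{1,z}=-\varphi_z\,e^{-\varphi}\overline Q=-\varphi_z\mu_1$ --- this is exactly \eqref{eq11_14_5} --- so
\[
\dot f_{z\bar z}(z)=-\varphi_z(z)\,\mu_1(z)\qquad\text{for }z\in\Omega_1\text{ near }v .
\]
By \eqref{eq7_10_1}, $|\mu_1(z)|=O\bigl(y^{2}e^{-2\pi y}\bigr)$ where $y:=\operatorname{Im}J^{-1}(z)\to\infty$ as $z\to v$ within a fundamental domain for $\Gamma$ in $\Omega_1$. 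Hence the lemma reduces to the bound $|\varphi_z(z)|=O(y)$.

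Next I would bound $\varphi_z$ using the local uniformization $J\colon\mathbb{U}\to\Omega_1$ with $J(\infty)=v$ and $J^{-1}\kappa J\colon w\mapsto w+1$. Setting $g=J^{-1}$, and noting that $J$ and $J_1$ differ by an element of $\mathrm{PSL}(2,\mathbb{R})$ so that \eqref{eq0630_1} also reads $e^{\varphi(z)}=|g'(z)|^{2}/(\operatorname{Im}g(z))^{2}$, one gets
\[
\varphi_z(z)=\frac{g''(z)}{g'(z)}-\frac{g'(z)}{i\,\operatorname{Im}g(z)} .
\]
Conjugating the parabolic $\kappa$ to a translation $u\mapsto u+\beta$ by $\psi(z)=(z-v)^{-1}$, the function $\psi\circ J(w)-\beta w$ is $1$-periodic in $w$ and holomorphic for $\operatorname{Im}w$ large, hence $J(w)=v+\bigl(\beta w+h(e^{2\pi i w})\bigr)^{-1}$ with $h$ holomorphic near $0$. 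On the fundamental strip $|\operatorname{Re}w|\le\tfrac12$, $\operatorname{Im}w=y\to\infty$ (so $|w|\asymp y$) one reads off $|J'(w)|\asymp y^{-2}$ and $|J''(w)|=O(y^{-3})$, whence $|g'(z)|=|J'(w)|^{-1}\asymp y^{2}$ and $|g''(z)/g'(z)|=|J''(w)/J'(w)^{2}|=O(y)$. Therefore $|\varphi_z(z)|\le|g''/g'|+|g'|/y=O(y)$, as needed.

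Putting the two steps together, $|\dot f_{z\bar z}(z)|=|\varphi_z(z)|\,|\mu_1(z)|=O(y)\cdot O(y^{2}e^{-2\pi y})=O(y^{3}e^{-2\pi y})\to0$ as $z\to v$; the case $z\in\Omega_2$ is identical after replacing $J$ by the uniformization of $\Omega_2$. I expect the only genuine work to lie in the second step --- the cusp asymptotics of the uniformizing map near the parabolic fixed point --- which is why one passes to the translation normal form and exploits periodicity; once those polynomial bounds on $g'$ and $g''/g'$ are available, the exponential decay of $\mu_1$ from \eqref{eq7_10_1} closes the argument at once.
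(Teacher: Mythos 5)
Your proof is correct and follows essentially the same route as the paper: both reduce the claim via $\dot f_{z\bar z}=\mu_z=-\varphi_z\mu$ (i.e.\ \eqref{eq11_14_5}) to the exponential decay of the cusp form recorded in \eqref{eq7_10_1} beating a sub-exponential bound on $\varphi_z$ near the cusp. The only difference is that the paper's one-line proof cites Lemma \ref{lemma4} (a logarithmic bound on $\varphi$ itself) for this last ingredient, whereas you derive the explicit estimate $|\varphi_z|=O(y)$ from the cusp asymptotics of the uniformizing map --- thereby supplying the bound on $\varphi_z$ that the paper leaves implicit.
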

\begin{proof}
Recall that $\dot{f}_{\bar{z}}(z)=\mu(z)=\mu_1(z)+\mu_2(z)$. Then, by this and \eqref{eq11_14_5},
\begin{align*}
\dot{f}_{z\bar{z}}(z)=-\varphi_z(z)\mu_1(z)-\varphi_z(z)\mu_2(z).
\end{align*}
Lemma \ref{lemma4} and \eqref{eq7_10_1} show that $\dot{f}_{z\bar{z}}\rightarrow 0$ as $z\rightarrow v$.
\end{proof}

Now we consider $\dot{f}(z)$, $\dot{f}_z(z)$ and $\dot{f}_{zz}(z)$. Recall that
\begin{align*}
\dot{f}(z)=&-\frac{1}{\pi}\iint\limits_{\Omega_1}\frac{\mu_1(\zeta)}{(\zeta-z)}\frac{z(z-1)}{\zeta(\zeta-1)}d^2\zeta
-\frac{1}{\pi}\iint\limits_{\Omega_2}\frac{\mu_2(\zeta)}{(\zeta-z)}\frac{z(z-1)}{\zeta(\zeta-1)}d^2\zeta\\
=&A(z)+B(z).
\end{align*}It suffices to consider $A(z)$.
\begin{align*}
A(z)=&-\frac{1}{\pi}\iint\limits_{|\zeta-z|\leq \epsilon}\frac{\mu_1(\zeta)}{(\zeta-z)}\frac{z(z-1)}{\zeta(\zeta-1)}d^2\zeta-\frac{1}{\pi}\iint\limits_{ |\zeta-z|\geq \varepsilon}\frac{\mu_1(\zeta)}{(\zeta-z)}\frac{z(z-1)}{\zeta(\zeta-1)}d^2\zeta,\\
A_z(z)=&-\frac{1}{\pi}\iint\limits_{|\zeta-z|\leq \epsilon} \mu_1(\zeta)\left(\frac{1}{(\zeta-z)^2}-\frac{1}{\zeta(\zeta-1)}\right)d^2\zeta\\&-\frac{1}{\pi}\iint\limits_{ |\zeta-z|\geq \varepsilon}\mu_1(\zeta)\left(\frac{1}{(\zeta-z)^2}-\frac{1}{\zeta(\zeta-1)}\right)d^2\zeta,\\
A_{zz}(z)=&-\frac{2}{\pi}\iint\limits_{|\zeta-z|\leq \epsilon} \frac{\mu_1(\zeta) }{(\zeta-z)^3} d^2\zeta -\frac{2}{\pi}\iint\limits_{ |\zeta-z|\geq \varepsilon}\frac{\mu_1(\zeta) }{(\zeta-z)^3} d^2\zeta.
\end{align*}These together with \eqref{eq7_10_1}  imply that $A(z)$, $A_z(z)$ and $A_{zz}(z)$ are bounded when $z\rightarrow v$. Hence,

\begin{lemma}\label{lemma5}Let $v$ be the fixed point of the parabolic element $\kappa\in \Gamma$. Then
 $\dot{f}(z)$, $\dot{f}_z(z)$ and $\dot{f}_{zz}(z)$ are bounded when $z\rightarrow v$.\end{lemma}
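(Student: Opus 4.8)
The plan is to read the three bounds off the Ahlfors integral representation of $\dot f$ recorded just above the statement, $\dot f=A+B$, where
\[
A(z)=-\frac1\pi\iint_{\Omega_1}\frac{\mu_1(\zeta)}{\zeta-z}\,\frac{z(z-1)}{\zeta(\zeta-1)}\,d^2\zeta
\]
and $B$ is the same integral over $\Omega_2$ with $\mu_2$. Since $\mu_2$ obeys the exact analogue of \eqref{eq7_10_1} near $v$ (it is the harmonic Beltrami differential seen through $J_2$), the term $B$ and its $z$-derivatives are estimated verbatim as for $A$, so it suffices to bound $A(z)$, $A_z(z)$, $A_{zz}(z)$ as $z\to v$, with $v$ a finite parabolic fixed point and $z$ kept in $\Omega_1$, where $\mu=\mu_1$ is smooth and these derivatives exist classically. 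For $\dot f$ itself the claim is immediate: $\dot f=\partial_{\vep}f^{\vep\mu}|_{\vep=0}$ is the normalised quasiconformal vector field, which is continuous (indeed $C^{1,\alpha}$) on $\hat{\mathbb C}$, so $\dot f(z)\to\dot f(v)$.

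For $A_z$ and $A_{zz}$, differentiating under the integral produces the kernels $(\zeta-z)^{-2}$ and $(\zeta-z)^{-3}$ (the extra factors $\tfrac1{\zeta(\zeta-1)}$ are harmless, since the disc $\{|\zeta-z|\le\varepsilon\}$ can be kept away from $0$ and $1$ for $z$ near $v$). I would split each integral at $|\zeta-z|=\varepsilon$. On the exterior part the kernels are bounded by $\varepsilon^{-2}$, $\varepsilon^{-3}$ uniformly in $z$, and $\|\mu_1\|_\infty<1$ on a set of finite area, so those pieces stay bounded. On the interior disc I would exploit the rotational cancellation of the singular kernels: subtract from $\mu_1(\zeta)$ its Taylor polynomial at $\zeta=z$, of degree $0$ for $A_z$ and degree $1$ for $A_{zz}$; the polynomial terms integrate to zero over a disc centred at $z$ because the angular means of $e^{-2i\theta}$, $e^{-3i\theta}$, $e^{-4i\theta}$ vanish, while the Taylor remainders are $O(|\zeta-z|^{-1})$, hence integrable with size governed by the $C^1$, resp.\ $C^2$, norm of $\mu_1$ near $z$.

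The real obstacle is that $v$ lies on $\partial\Omega_1=\mathcal C$, so a disc of fixed radius about $z\in\Omega_1$ near $v$ is not contained in $\Omega_1$, and on fixed discs near $\mathcal C$ the derivatives of $\mu_1$ are not uniformly bounded; both the symmetry argument and the $C^1,C^2$ control break down there. I would resolve this by letting the cutoff shrink, $\varepsilon=\varepsilon(z)\asymp\operatorname{dist}(z,\mathcal C)\asymp(\operatorname{Im} J^{-1}(z))^{-1}$, so that $\{|\zeta-z|\le\varepsilon(z)\}\subset\Omega_1$ and, in the coordinate $w=J^{-1}(z)$ (which sends $v$ to $\infty$ and $\kappa$ to $w\mapsto w+1$), this disc lies in a half-plane $\{\operatorname{Im} w>\tfrac12\operatorname{Im} J^{-1}(z)\}$; differentiating the Fourier expansion of $\tilde\mu=\mu_1\circ J\,\overline{J'}/J'$ term by term then gives that $\mu_1$ together with its first and second derivatives are $O\big((\operatorname{Im} J^{-1}(z))^{N}e^{-\pi\operatorname{Im} J^{-1}(z)}\big)$ on that disc, which drives the interior contribution to $0$ and dominates the powers $\varepsilon(z)^{-2},\varepsilon(z)^{-3}$ coming from the exterior kernels. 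A cleaner alternative would be to prove the bound first for $z$ in a cusp neighbourhood of $v$ inside a fundamental domain $F_1$ — where \eqref{eq7_10_1} applies with honest decay — and then transport it to a general $z$ near $v$ by the equivariance formulas for $\dot f$, $\dot f_z$, $\dot f_{zz}$ under $\gamma=\kappa^n$ (obtained by differentiating $f^{\vep\mu}\circ\gamma=\gamma^{\vep\mu}\circ f^{\vep\mu}$ in $\vep$ and $z$, cf.\ Lemma~\ref{theorem1} and \eqref{eq11_14_2}), checking that the coefficients $\frac{(\kappa^n)''}{(\kappa^n)'}$ and $\frac{\dot{(\kappa^n)}'}{(\kappa^n)'}$ occurring there remain bounded as $n\to\infty$.
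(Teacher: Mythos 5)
Your argument is correct and follows essentially the same route as the paper: the paper also starts from the Ahlfors representation $\dot f=A+B$, writes the split integrals for $A$, $A_z$, $A_{zz}$ over $\{|\zeta-z|\le\epsilon\}$ and $\{|\zeta-z|\ge\varepsilon\}$, and invokes the cusp decay \eqref{eq7_10_1} to conclude boundedness. The paper leaves the treatment of the singular kernels on the inner disc implicit; your cancellation-plus-shrinking-cutoff discussion supplies exactly the detail it omits, so no change of method is involved.
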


Using the notation in Theorem \ref{firstvariation}, we have
\begin{theorem}\label{t:justification-Stokes}
$$\left\langle d\xi, F_1-F_2\right\rangle=\left\langle \xi, \pa'(F_1-F_2)\right\rangle.$$
\end{theorem}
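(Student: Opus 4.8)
The plan is to verify that Stokes' theorem applies to the pairing $\langle d\xi, F_1-F_2\rangle$ despite the noncompactness of the fundamental domains $F_1$, $F_2$ near the parabolic fixed points on the quasi-circle $\mathcal{C}$. Recall from the proof of Theorem \ref{firstvariation} that $\xi = -2\dot{f}_{z\bar z}d\bar z - \varphi\, d\dot{f}_z$. The only potential source of trouble is the behaviour of $\xi$ near the parabolic fixed points $v_i\in\mathcal{C}$, where $\varphi(z)$ blows up logarithmically (Lemma \ref{lemma4}); away from those points $F_1$, $F_2$ are compact and ordinary Stokes applies. So the strategy is: (i) excise small neighbourhoods of the parabolic fixed points, apply the classical Stokes theorem on the truncated region, and (ii) show that the boundary integrals over the excised pieces tend to zero as the neighbourhoods shrink, so that in the limit we recover $\langle\xi,\partial'(F_1-F_2)\rangle$ with no extra contributions.

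First I would fix a parabolic fixed point $v = v_i$ and, using the biholomorphism $J:\mathbb{U}\to\Omega_1$ with $J(\infty)=v$ conjugating $\kappa_i$ to $z\mapsto z+1$ as in the appendix, introduce the truncation $F_1^\vep = F_1\setminus\bigcup_i J_i(\{\,\mathrm{Im}\,w > 1/\vep\,\})$ (and similarly $F_2^\vep$), exactly as in the proof of Theorem \ref{thm2}. On the compact region $F_1^\vep - F_2^\vep$ the classical Stokes theorem gives
\begin{align*}
\langle d\xi, F_1^\vep - F_2^\vep\rangle = \langle \xi, \partial'(F_1^\vep - F_2^\vep)\rangle,
\end{align*}
and $\partial'(F_1^\vep - F_2^\vep)$ decomposes into the "genuine" boundary edges (the edges $a_k,b_k,\dots,c_i,c_i'$ whose pieces assemble into $\partial'(F_1-F_2)$ after the identifications) plus the newly created horocyclic arcs $\sigma_{i,\vep}$ around each cusp. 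Thus it remains to control two things as $\vep\to 0^+$: that $\langle d\xi, F_1^\vep - F_2^\vep\rangle \to \langle d\xi, F_1-F_2\rangle$ (integrability of the $2$-form $d\xi$, which is $L_\mu\omega + \vartheta\mu\,dz\wedge d\bar z$ up to sign — both terms are integrable by the estimates behind Theorem \ref{firstvariation} and Theorem \ref{thm2}), and that the arc integrals $\int_{\sigma_{i,\vep}}\xi \to 0$.

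The main obstacle is the last point: bounding $\int_{\sigma_{i,\vep}}\xi$. On the arc $\sigma_{i,\vep}$, which in the $w=J^{-1}(z)$ coordinate is $\{\mathrm{Im}\,w = 1/\vep,\ 0\le\mathrm{Re}\,w\le 1\}$, I would estimate each term of $\xi$ separately. For the term $-2\dot{f}_{z\bar z}\,d\bar z$: by Lemma \ref{lemma6}, $\dot{f}_{z\bar z}(z)\to 0$ as $z\to v$, in fact the estimate \eqref{eq7_10_1} shows $|\dot f_{z\bar z}| = |\varphi_z\mu|$ decays like $(\mathrm{Im}\,w)^2 e^{-2\pi\mathrm{Im}\,w}$ times the logarithmic growth of $\varphi_z$, which still goes to $0$; meanwhile the length of $\sigma_{i,\vep}$ in the $z$-coordinate is controlled by $|J'|$, which by Corollary \ref{cor2} grows only polynomially in $(1-|{\cdot}|^2)^{-1}$, i.e. polynomially in $\mathrm{Im}\,w$. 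The exponential decay beats the polynomial growth, so this contribution vanishes. For the term $-\varphi\, d\dot{f}_z$: here $\varphi = O(\log\mathrm{Im}\,w)$ by Lemma \ref{lemma4}, $\dot f_z$ is bounded by Lemma \ref{lemma5}, and I would need a decay estimate on $d\dot f_z$ along the arc; writing $d\dot f_z = \dot f_{zz}\,dz + \dot f_{z\bar z}\,d\bar z$ and using that $\dot f_{zz}$ is bounded (Lemma \ref{lemma5}) while the $z$-length of $\sigma_{i,\vep}$ must be shown to shrink — this is where care is needed, since boundedness of $\dot f_{zz}$ alone is not enough against polynomial growth of $|J'|$. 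The resolution is that one is free to work in the $w$-coordinate throughout: pulling $\xi$ back by $J$, the arc becomes a unit horocyclic segment of bounded Euclidean length, $\varphi\circ J + \log|J'|^2 = \log 4 - 2\log(\text{something})$ is the explicit hyperbolic potential, and the cusp form estimate $\tilde\mu = y^2\sum a_k e^{-2\pi i k\bar w}$ forces exponential decay of the pulled-back integrand; hence $\int_{\sigma_{i,\vep}}\xi\to 0$. Assembling (i) and (ii) and letting $\vep\to 0$ yields the claimed identity.
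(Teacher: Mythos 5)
Your proposal is correct and follows essentially the same route as the paper: the paper's own proof is just the two-line observation that $\xi=-2\dot f_{z\bar z}d\bar z-\varphi\dot f_{z\bar z}d\bar z-\varphi\dot f_{zz}dz$ together with the estimates of Lemma \ref{lemma4}, Lemma \ref{lemma6} and Lemma \ref{lemma5}, which is exactly the excise-near-the-cusps, apply-Stokes, kill-the-horocyclic-boundary-terms argument you spell out. Your version is in fact more detailed than the paper's (in particular in flagging that boundedness of $\dot f_{zz}$ must be played against the shrinking $z$-length of the horocyclic arcs, handled via the cusp-form decay \eqref{eq7_10_1} in the $w$-coordinate), but it is the same proof.
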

\begin{proof}
\begin{align*}
\xi=&-2 \dot{f}_{z\bar{z}}d\bar{z}-\varphi \dot{f}_{z\bar{z}}d\bar{z}-\varphi\dot{f}_{zz}dz.
\end{align*}Lemma \ref{lemma4}, Lemma \ref{lemma6} and Lemma \ref{lemma5} prove the assertion.
\end{proof}

\end{document}